\documentclass[aip,jmp,preprint]{revtex4-1}

\usepackage{amsmath,amsfonts,amssymb,graphicx}
\usepackage{verbatim,xcolor}
\usepackage{slashed}

\newtheorem{thm}{Theorem}
\newtheorem{prop}[thm]{Proposition}

\newcommand{\Img}{\text{Im}}
\newcommand{\Rel}{\text{Re}}

\newcommand{\im}{{\rm i}}

\newcommand{\vr}{{\boldsymbol r}}
\newcommand{\vrp}{\vr_{\parallel}}
\newcommand{\vR}{{\boldsymbol R}}
\newcommand{\vs}{{\boldsymbol s}}
\newcommand{\vbq}{{\boldsymbol q}}
\newcommand{\vbk}{{\boldsymbol k}}
\newcommand{\vp}{{\boldsymbol p}}
\newcommand{\vbv}{{\bf v}}

\newcommand{\tq}{\tilde q}
\newcommand{\tomega}{\tilde\omega}

\newcommand{\dF}{\digamma}
\newcommand{\sG}{\slashed{\mathcal{G}}}

\newcommand{\Arg}{\mathrm{Arg}}

\newcommand{\dv}{{\rm d}}

\newcommand{\lo}{l_{0}}
\newcommand{\no}{n_0}

\newcommand{\Cs}{\mathsf{C}}
\newcommand{\mA}{\mathfrak A}
\newcommand{\mH}{\mathfrak H}

\DeclareMathOperator*{\Res}{\mathrm{Res}}

\begin{document}

\title{Quantum mechanical model for charge excitation: Surface binding and dispersion}

\author{Dionisios Margetis}
\email{diom@umd.edu}

\affiliation{Department of Mathematics, and Institute for Physical Science
and Technology, University of Maryland, College Park, MD 20742, USA}



\begin{abstract}
 By an idealized quantum mechanical model, we formally describe the dispersion of nonretarded electromagnetic waves that express charge density oscillations near a fixed plane in three spatial dimensions (3D) at zero temperature. Our goal is to capture the interplay of microscopic scales that include a confinement length in the emergence of the surface plasmon, a collective low-energy charge excitation in the vicinity of the plane. We start with a time-dependent Hartree-type equation in 3D. This model accounts for particle binding to the plane and the repulsive Coulomb interaction associated with the induced charge density relative to the ground state. By linearizing the equation of motion, we formulate a homogeneous integral equation for the scattering amplitude of the particle wave function in the ($z$-) coordinate vertical to the plane. For a binding potential proportional to a negative delta function and symmetric-in-$z$ wave function, we apply the Laplace transform with respect to positive $z$ and convert the integral equation into a functional equation that involves several values of the transformed solution. The scattering amplitude and dispersion relation are derived exactly in terms of rapidly convergent series via the Mittag-Leffler theorem.  In the semiclassical regime, our result furnishes an asymptotic expansion for the energy excitation spectrum. The leading-order term is found in agreement with the prediction of a classical hydrodynamic model based on a projected-Euler-Poisson system.   \color{black}
 \end{abstract}

\maketitle 

\section{Introduction}
\label{sec:Intro}

Electron systems in atomically thin materials, surfaces and interfaces are technologically appealing.\cite{Andoetal1982,Castronetoetal2009,Torres2014,Maier2007} Related studies have focused on, e.g., semiconductor heterojunctions, semiconductor-insulator interfaces as well as graphene and van der Waals heterostructures.\cite{Andoetal1982,Castronetoetal2009,Geimetal2013,Novoselovetal2012,LiBasov2008,Daietal2015} The motion of electrons may allow for the generation of $p$-polarized electromagnetic wave modes, the surface plasmons (SPs), which rapidly decay away from the material.\cite{Andoetal1982,GrimesAdams1976,Maier2007,Lowetal2017,Yaoetal2018,Bludov2013} Macroscopically, these waves express two-dimensional (2D) low-energy collective charge excitations, and inspire novel imaging and sensing devices.\cite{Jablan2013,HuAvourisetal2019} The SP dispersion law depends on the material, yet might exhibit a universal form.\cite{Andoetal1982,Bludov2013,Jablan2013} For graphene, in the mid-infrared spectrum the SP length scale can be much smaller than the free-space wavelength.\cite{Lowetal2017,Jablan2013,Catarina-Peres-rev2019}

From a semiclassical point of view, an isotropic 2D material embedded in three spatial dimensions (3D) may host a time-harmonic transverse-magnetic-polarized  SP if the tangential electric field sufficiently lags behind the induced surface current.\cite{Lowetal2017,Hanson2008,Hanson-erratumarticle,Bludov2013,DM-Luskin06,Maier2017} For a nearly lossless material, the nonretarded-SP dispersion relation has the form\cite{Andoetal1982,Bludov2013,Jablan2013}
\begin{equation}\label{eq:disp-SPP}
	\frac{\omega^2}{q}\simeq  {\rm const.}\,,
	\end{equation}
where $\omega$ is the frequency, $q=\sqrt{q_x^2+q_y^2}$ is the wave number,  and $(q_x, q_y)=\vbq$ is the in-plane wave vector. The right-hand side of~\eqref{eq:disp-SPP} is an ($\omega$- and $q$-independent) constant that depends on the equilibrium density of the free charge carriers.\cite{Andoetal1982,Ritchie1957,Ferrell1958,Fetter1973,Jonson1976,Jablan2013,LowAvouris2014,Hwang-DasSarma07,Wunsch2006} Typically,~\eqref{eq:disp-SPP} is derived from Maxwell's equations by use of a boundary condition in which the surface conductivity enters as a parameter inverse proportional to $\omega$,  by analogy with the 
Drude model.\cite{Bludov2013,Jablan2013,Drude1,Drude2,Watson2023}  The ambient medium is assumed to be isotropic and homogeneous. Relation~\eqref{eq:disp-SPP} requires that $\omega\ll q c$, where $c$ is the speed of light in the surrounding medium, and $\hbar |\omega|$ be much larger than the free-particle kinetic energy at momentum $\hbar \vbq$. The Landau damping is neglected. 

Mathematically, \eqref{eq:disp-SPP} is construed as an implication of a Neumann-to-Dirichlet-type map via classical hydrodynamics.\cite{Ritchie1957,Fetter1973} To illustrate this idea, let us suppose that the electrons form an inviscid fluid in the 
$xy$-plane, modeled by a projected-Euler-Poisson system (Sec.~\ref{sec:classical_disp}).  The linearization of this model around a steady state yields an integro-partial differential equation (IPDE) for the induced surface charge density, $\rho_1$. This IPDE incorporates the mapping of the tangential electric field to the respective normal component on the boundary via the solution of the Poisson equation. Formula~\eqref{eq:disp-SPP} ensues from setting $\rho_1= e^{\im q_x x+\im q_y y-\im \omega t}+\mathrm{c.c.}$ in this IPDE, which balances out the second-order time derivative and the 2D half Laplacian. 

In this paper, we formally derive the exact dispersion relation of a SP-type wave excitation near a plane by linearized quantum dynamics and use of an ad hoc binding potential. In addition, we show how~\eqref{eq:disp-SPP} and higher-order terms arise from this result in the semiclassical regime.  To this end, we consider nonrelativistic, spinless charged particles bound to a fixed plane at zero temperature. Let the equilibrium surface number density be $\eta_0$ where $\pi (\ell_d/2)^2= \eta_0^{-1}$. Suppose that the particles interact repulsively via the Coulomb potential; and the excitation wave due to their charge fluctuation has momentum $\hbar\vbq$ and energy $\hbar\omega(\vbq)$. We define a typical electrostatic energy as $\epsilon_C=\tfrac{e^2}{4\pi\varepsilon_0 \ell_d}$, where $e$ is the particle charge ($e>0$) and $\varepsilon_0$ is the free-space dielectric permittivity.  Let $\ell_b$ be the binding length; $\ell_{\mathrm{dB}}=\sqrt{\tfrac{\hbar}{2m_* \omega}}$ be the de Broglie wavelength where $m_*$ is the particle effective mass; $\ell_p=q^{-1}$ be the length scale of the excited wave mode; and $\ell_C$ be the length defined from $\tfrac{\epsilon_C}{\ell_d}=\tfrac{|\epsilon_b(\ell_C)|}{\ell_C}$ with $\epsilon_b(\ell)=-\tfrac{\hbar^2}{2m_*}\ell^{-2}$, which balances out electrostatic and binding forces. We expect that~\eqref{eq:disp-SPP} requires the conditions
\begin{equation}\label{eq:scales}
	\ell_b \ll \ell_{\mathrm{dB}} \ll \ell_{p}\quad \mbox{and}\quad \ell_b\ll \ell_C~.
\end{equation}

Naturally, $\ell_b$ is viewed as the smallest length scale. The condition $\ell_b\ll \ell_C$ means that the Coulomb interaction is relatively weak, with $\tfrac{\ell_b}{\ell_C}$ resembling a Wigner-Seitz-type radius; see Sec.~\ref{sec:results}.\cite{Mahan-book} However, the  parameter $\tfrac{e^2\eta_0}{\varepsilon_0}\propto \ell_C^{-3}$ has an $\mathcal O(1)$ effect on the dispersion law. The condition $\ell_{\mathrm{dB}}\ll \ell_p$ alludes to a long-wavelength regime. Formula~\eqref{eq:disp-SPP} reads as 
$\tfrac{\ell_{\mathrm{dB}}^4}{\ell_C^3\ell_p}=\mathrm{const.}$; thus,~\eqref{eq:scales} can be refined to  $\ell_b\ll \ell_C\ll \ell_{\mathrm{dB}}\ll \ell_p$. For fixed $(\omega, q)$, binding potential 
 and parameter $\tfrac{e^2\eta_0}{\varepsilon_0}$, this length scale ordering may also be suggested by the limit $\hbar\downarrow 0$, since $\ell_b=\mathcal O(\hbar^2)$, $\ell_C=\mathcal O(\hbar^{2/3})$, $\ell_{\mathrm{dB}}=\mathcal O(\hbar^{1/2})$ and $\ell_p=\mathcal O(1)$. By using these length scales, we will see that, for~\eqref{eq:disp-SPP} to hold, it suffices 
 that~\eqref{eq:scales} is supplemented with $\ell_b\ll \tfrac{\ell_{\mathrm{dB}}^2}{\ell_p}$.

Our main goal is to describe the interplay of $\ell_b$, $\ell_{\mathrm{dB}}$, $\ell_p$ and $\ell_C$ in the emergence of a SP-type dispersion relation when $\ell_b < \ell_{\mathrm{dB}} < \ell_{p}$ and $\tfrac{\ell_b}{\ell_C}$ is not too large.  The model is a Hartree-type equation for one-particle motion in 3D, which is viewed as a mean-field limit of a quantum many-body system (Secs.~\ref{subsec:model},~\ref{subsec:mf}). By linearizing this equation around the ground state, we apply scattering theory; and derive the dispersion law exactly if the binding potential is a negative delta function and the scattering amplitude is symmetric in the vertical coordinate. Moreover, we obtain~\eqref{eq:disp-SPP} by a controlled approximation of our exact result, and compute a few higher-order terms. Our approach has limitations, leaving out the crystal microstructure, Fermi-Dirac statistics, Landau damping, and ohmic losses (Sec.~\ref{subsec:limitations}).\cite{Landau1946,Orr1907,Nguyen24} Numerical simulations lie beyond our present scope.

\subsection{Model and methodology}
\label{subsec:model}
The core ingredient of our theory is the equation  
\begin{equation}\label{eq:Hartree}
	\im \hbar \, \partial_t\psi(t,\vr)=\left\{\mathcal H^0 + \upsilon\star \left(|\psi|^2-|\psi_0|^2 \right)\right\}\psi(t,\vr)~;\quad (t, \vr)\in \mathbb{R}\times \mathbb{R}^3~,\ \vr=(\vrp, z)~,
\end{equation}
with $\vrp=(x,y)$. Here, $\mathcal H^0$ is the unperturbed Hamiltonian for binding to a plane, viz.,
\begin{equation}\label{eq:H0-def}
	\mathcal H^0=-\frac{\hbar^2}{2m_*}\Delta_{\vr} + V~,
\end{equation}
and $\upsilon\star (|\psi|^2-|\psi_0|^2)$ expresses, in a mean-field classical sense, the effect of Coulomb pairwise repulsive interactions due to the induced charge density relative to $|\psi_0|^2$. We thus have
\begin{equation}\label{eq:v-Coulomb}
\upsilon(\vr)=\frac{Q_e^2}{4\pi \varepsilon_0}\,\frac{1}{|\vr|}~,	
\end{equation}
where $Q_e$ is proportional to the particle charge, $e$ (Sec.~\ref{subsec:mf}). 
Moreover, $\psi(t, \vr)=\langle \vr|\psi(t)\rangle$ is the particle wave function and $\psi_0$ is the ground state of $\mathcal H^0$.
The star ($\star$) operation between two functions stands for their convolution in the {\em whole} $\mathbb{R}^3$, and $\Delta_{\vr}$ is the Laplacian on $\mathbb{R}^3$. Equation~\eqref{eq:Hartree} can be modified to include an externally imposed charge density or electric field. A plausible connection of~\eqref{eq:Hartree} to quantum many-body dynamics is sketched in Sec.~\ref{subsec:mf}.

  The binding potential $V$ in $\mathcal H^0$ depends only on $z$, is compactly supported, and satisfies 
\begin{equation}\label{eq:V-propert}
	\int_{-\infty}^{\infty}\dv z\,V(z)=-V_0 a<0~, 
\end{equation}
where $V_0$ has units of energy and $a$ is a microscopic length ($V_0>0$). 
These assumptions on $V$ suffice for the existence of $\psi_0$ as a bound state in $z$.\cite{Simon76,Klaus77}
To fix the binding plane at $z=0$, let the support of $V(z)$ be a small neighborhood of $z=0$. Ideally, $V(z)$ can also be assumed to be non-positive, have its minimum at $z=0$, be rapidly decreasing with $|z|$, and even. To simplify the analysis, we will set $V(z)=-V_0 a\, \delta(z)$ where $\delta(z)$ is the Dirac delta function. This $V$ implies
that $\psi_0(t,\vr)$ decays exponentially with $|z|$; and the ground state energy, the lowest eigenvalue of $\mathcal H^0$, is $E_b=\hbar\omega_b=-\tfrac{\hbar^2}{2m_*}\beta^2$ where $\beta:=\ell_b^{-1}=(\tfrac{\hbar^2}{m_* V_0 a})^{-1}$.

By imposing periodicity of the wave function $\psi$ in $(x,y)$, we linearize~\eqref{eq:Hartree} around $\psi_0$; and seek solutions in terms of plane waves with energy $\hbar\omega$ and wave vector $\vbq=(q_x, q_y)$. The convolution integral of the linearized Coulomb interaction is divergent in the conventional sense and is interpreted in the sense of distributions. The area, $A$, of the periodic cell, $\mathfrak C$, in the $xy$-plane will be let to approach infinity, for fixed surface number density, $\eta_0$; thus, the allowed values of $\vbq$ will tend to form a continuum in $\mathbb{R}^2$. 

In the linearized model, the $z$-dependent (scattering) amplitude of $\psi(t,\vr)-\psi_0(t,\vr)$ obeys a homogeneous integral equation due to particle confinement. The existence of nontrivial solutions implies a relation between $\omega$ and $\vbq$. This formulation is conceptually akin to, but technically distinct from, finding the zeros of the Lindhard dielectric function for a plasma in the translation invariant setting.\cite{Lindhard1954,Hedin1970,Ford1984,PinesNozieres-book} We will focus on this integral equation for $\psi-\psi_0$. 

For $V(z)=-V_0 a\delta(z)$ and an even-in-$z$ amplitude of $\psi(t,\vr)-\psi_0(t,\vr)$, we apply the Laplace transform with respect to $z$ ($z>0$) and convert the integral equation into a functional equation. The latter involves five values of the transformed solution. By an analytic-continuation procedure, we derive a dispersion relation  of the form $\Lambda(\omega, q)=0$ where $\Lambda$ involves several convergent series that depend on $\sqrt{1+\tfrac{\ell_b^2}{\ell_p^2}\pm \tfrac{\ell_b^2}{\ell_{\mathrm{dB}}^{2}}}$ and $\tfrac{\ell_b}{\ell_C}$. In this vein, we show how~\eqref{eq:disp-SPP} and higher-order terms emerge. Our results are outlined in Sec.~\ref{sec:results}. 

Our analysis formally yields the dispersion law of a low-lying collective excitation by the study of a linearized operator in effective one-particle quantum dynamics.\cite{Mahan-book,MartinRothen-book} We note in passing that this approach bears similarities to the heuristic derivation of the phonon spectrum via a sinusoidal variation, the sound wave, of the mass density of an atomic gas.\cite{Gross61,Pitaevskii61,Wu1961,Lieb-book}  In Sec.~\ref{subsec:past}, our approach is compared to the random phase approximation (RPA). 

Our work indicates that the confined Hartree dynamics is connected to a projected-Euler-Poisson system  (Sec.~\ref{sec:classical_disp}).
It may be claimed that the former model converges, in some sense, to a description of the latter as $\hbar\downarrow 0$.\cite{GolsePaul2022,Chenetal2024} We leave this issue unresolved in this paper. 

Equations~\eqref{eq:Hartree}--\eqref{eq:v-Coulomb} form a Schr\"odinger-Poisson system. The one-particle induced charge density  $\varrho_1=e(|\psi|^2-|\psi_0|^2)$ is subject to the Schr\"odinger dynamics for $\psi$ with Hamiltonian $\mathcal H^0+Ne\Phi[\varrho_1]$. Here,  $Ne\Phi=\upsilon\star (\varrho_1/e)$ is the overall electrostatic potential, $e$ is the particle charge, and $N$ is the total number of particles. The charge density $\varrho_1$ is the forcing in the Poisson equation for $\Phi$, and spatially averages out to zero. Thus, we have $Q_e=e\sqrt{N}$  in~\eqref{eq:v-Coulomb}.

\subsection{Mean-field limit interpretation: A heuristic view}
\label{subsec:mf}

 Next, we discuss how our model can plausibly be linked to a many-particle system. In principle, the Hartree equation may be viewed as a mean-field description of a large number, $N$, of charged bosons. Electrons are thus not strictly treated by this model; see Sec.~\ref{subsec:limitations}.

First, note that~\eqref{eq:Hartree} with $V=0$ and $\psi_0= 0$, under suitable initial data for $\psi$, has rigorously been derived from the quantum dynamics of interacting charged bosons.\cite{ErdosYau2001,Bardos2002, KnowlesPickl2010,LewinSabin2015,Spohn1980} In this setting, the $N$-body Schr\"odinger state vector, $\Psi_N(t)$, is symmetric with respect to particle interchange; and obeys the Schr\"odinger equation with the $N$-body Hamiltonian
\begin{equation}
	\mathcal H_N=-\frac{\hbar^2}{2m_*}\sum_{j=1}^N \Delta_{\vr_j}+\frac{1}{N}\sum_{j,l=1\atop j< l}^N\upsilon(\vr_j-\vr_l)\qquad (\vr_j\in\mathbb{R}^3)~,
\end{equation}
where $\upsilon$ is given by~\eqref{eq:v-Coulomb} and $\vr_j$ is the position of particle $j$.
The mean-field limit is roughly interpreted as follows: if the many-body state vector $\Psi_N(t)$ is a tensor product of one-particle states initially, i.e., $\Psi_N(0)=\otimes^N\varphi$ for a normalized state $\varphi$, for large $N$ the system is described by $\Psi_N(t)=e^{-\im \tfrac{H_N}{\hbar} t}\Psi_N(0)\simeq \otimes^N \psi(t)$ for $t>0$ where $\psi(t,\vr)=\langle \vr |\psi(t)\rangle$ satisfies~\eqref{eq:Hartree} with $\psi(0)=\varphi$, $V=0$ and $\psi_0= 0$. The one-particle state $\psi(t)$ represents the macroscopic state. Thus, in~\eqref{eq:v-Coulomb} we can set $Q_e^2=N e^2$. This interpretation is physically appealing but obscures the sense in which the many-body dynamics converge to the mean-field limit.\cite{ErdosYau2001} 

We heuristically extend this mean-field notion to a setting with confinement near the plane $\{z=0\}$. The first step is to add the binding potential $\sum_{j=1}^NV(z_j)$ to $\mathcal H_N$, which modifies the Hartree equation by addition of $V$ to the unperturbed Hamiltonian $\mathcal H^0$; cf.~\eqref{eq:H0-def}. 

The ensuing Hartree dynamics imply~\eqref{eq:Hartree} with interaction potential $\upsilon\star |\psi|^2$. By suppressing the time ($t$-) dependence, we write the underlying Hartree energy functional as
\begin{equation*}
	E_H[\chi]=\int_{\mathcal{C}\times\mathbb{R}} \dv\vr\,\dv\vr'\left\{ \chi^*(\vr)\epsilon_V(\vr,\vr')\chi (\vr')+\tfrac{1}{2}|\chi(\vr)|^2 \upsilon(\vr-\vr')|\chi(\vr')|^2\right\}~;
\end{equation*}
$\epsilon_V(\vr,\vr')=\{-\tfrac{\hbar^2}{2m_*}\Delta_{\vr}+V(z)\}\delta(\vr-\vr')$ and $\chi$ replaces $\psi$. 
We have been unable to express the (global) minimizer of this $E_H[\chi]$ in simple closed form, if $V(z)$ is a negative delta function.

We formally incorporate a simplifying mechanism into the mean-field model. Let the Coulomb potential, $\upsilon$, be felt by the particle only through the induced charge density $e (|\psi(t,\vr)|^2-|\psi_0(t,\vr)|^2)$.  The Hartree energy functional is thus modified to become 
\begin{equation*}
	E_H^m[\chi]=\int \dv\vr\,\dv\vr'\left\{ \chi^*(\vr)\epsilon_V(\vr,\vr')\chi(\vr')+\tfrac{1}{2}\left(|\chi(\vr)|^2-|\psi_0(\vr)|^2\right)\upsilon(\vr-\vr')\left(|\chi(\vr')|^2-|\psi_0(\vr')|^2\right)\right\}.
\end{equation*}
If $\psi_0$ is the ground state of $\mathcal H^0$, $\chi(\vr)=\psi_0(\vr)$ is the minimizer of $E_H^m[\chi]$, since the Fourier transform of $\upsilon$ (in 3D) is positive and, thus, the integral of the interaction term in $E_H^m$ is nonnegative. This modified Hartree energy, under a constraint on the
$L^2$-norm of the solution $\chi$, implies the stationary version of~\eqref{eq:Hartree} for $\chi=\psi$ through a variational principle. 

\subsection{Limitations and open problems}
\label{subsec:limitations}
The use of Hartree-type equation~\eqref{eq:Hartree} with unperturbed Hamiltonian~\eqref{eq:H0-def} and interaction potential~\eqref{eq:v-Coulomb} raises a few concerns. In a nutshell, in the limit of zero binding ($V\to 0$), this model describes a free atomic gas. The Pauli exclusion principle, which is physically compelling for electrons, is not taken into account here. Periodic or quasi-periodic potentials for crystal microstructures are not included. Ohmic losses are left out.

In more detail, if $V(z)=-2\beta\delta(z)$, the limit $\beta\downarrow 0$ ($\ell_b\to +\infty$) with fixed $\ell_C$ and $\ell_p$ implies that the linearized Coulomb interaction potential tends to vanish because of the wave function normalization. The excitation energy  $\hbar\omega(q)$ becomes $\tfrac{\hbar^2 q^2}{2m_*}$ plus a term that tends to scale linearly with $\beta$, eventually furnishing the dispersion relation for free-particle dynamics. We deem the zero-binding limit as not physically meaningful in this model, thus considering our approach useful for moderate or small enough $\tfrac{q}{\beta}=\tfrac{\ell_b}{\ell_p}$.  Using length scales, we assume that $\sqrt{\ell_b\ell_p} < \ell_{\mathrm{dB}} < \ell_p$ while $\tfrac{\ell_b}{\ell_C}$ is not too large ($\ell_b=\beta^{-1}$); see Sec.~\ref{sec:results}.            
 
For electrons, the many-body state vector  is antisymmetric under particle interchange. By the Hartree-Fock theory,\cite{Mahan-book} this state is nearly a Slater determinant of $N$ orthogonal one-electron states. The particle dynamics may be described by $N$ coupled PDEs for the electronic wave functions that include the classical Coulomb interaction and exchange terms.\cite{Bardos2003,Elgart2004,Frohlich2011,LewinSabin2015,Benedikter2014} By neglect of such terms, $N$ coupled Hartree equations formally ensue.\cite{Elgart2004,LewinSabin2015} We anticipate that this description yields scaling law~\eqref{eq:disp-SPP} in the semiclasssical regime. Treatments of the plasmon excitation via the density operator~\cite{Nguyen24} or the Wigner function~\cite{Mendonca2023} with Fermi-Dirac statistics account for the Landau damping,\cite{Nguyen24} but lie beyond our present scope.

Our model does not express screened Coulomb interactions, since the effect of ions is left out. This situation is different from the premise of the jellium model.\cite{Mahan-book,MartinRothen-book,Mendonca2023} We ignore the atomistic structure of 2D materials, e.g., the honeycomb lattice of graphene\cite{Kotovetal2012} in which the slow electron motion obeys Dirac dynamics.\cite{FeffermanWeinstein2012,Hwang-DasSarma07,Wunsch2006} In the semiclassical regime, the Dirac dynamics can yield a different scaling of $\tfrac{\omega^2}{q}$ with the charge density of electrons, in comparison to our approach. Thus, our theory cannot predict all key features of the graphene SP.\cite{Wunsch2006,Hwang-DasSarma07,Falkovsky2007a}  

A way to account for ohmic losses in an atomically thin material would be to use the density operator with a Poisson random process for collision events.\cite{Watson2023,Cances2017,Schulz-Baldes1998} This formulation may result in a finite relaxation time for the optical conductivity.~\cite{Schulz-Baldes1998} The study of this effect under spatial confinement, when collisions are spatially nonhomogeneous, deserves attention.

\subsection{Related past works}
\label{subsec:past}

Our goal is to derive a dispersion relation for charge density oscillations under a particular 1D binding potential. Let us now place our work in the appropriate context of past literature. 

A method of a similar purpose is the RPA.\cite{Mahan-book,MartinRothen-book,PinesNozieres-book,Hedin1970,EhrenreichCohen1959} Roughly, the idea is to approximately reduce the Heisenberg equation of motion for the number density operator on Fock space to an equation for the {\em average} number density, on some convenient (e.g., momentum) basis of a one-particle Hilbert space. This average is defined with respect to the ground state of a reference Hamiltonian.\cite{Mahan-book,MartinRothen-book} The scheme employs the approximate factorization of averages of operators that are quadratic in the number density.\cite{MartinRothen-book} The RPA typically invokes the polarizability and dielectric function of the electron gas, which are computed self consistently.\cite{MartinRothen-book,PinesNozieres-book,Hedin1970,Lindhard1954} In translation invariant settings, the plasmon energy excitation spectrum comes from zeros of the dielectric function in the Fourier space.\cite{Mahan-book,Hwang-DasSarma07,Wunsch2006,Stern1967} The RPA  is a linearized version of the time-dependent Hartree theory,\cite{Hedin1970,EhrenreichCohen1959} This notion is inherent to our model, but we leave out the Pauli exclusion principle and directly use  the wave function, $\psi$. 

The RPA has been extended to settings with spatial confinement such as metallic nano-films through external potentials or boundary conditions.\cite{Adler1962,Dahl1977,Andersen2012,Andersen2013,VegaAbajo2017,Echarri2019,Thorn2012} These extensions usually invoke representations of the polarizability on basis sets that diagonalize a suitably chosen (unperturbed) one-particle Hamiltonian. A similar approach is adopted for the study of edge states.\cite{Christensen2014} Here, we avoid the explicit use of the polarizability and dielectric function per se; instead, we solve an integral  equation for $\psi-\psi_0$ within linearized  Hartree-type dynamics. We end up deriving the even $z$-dependent scattering amplitude as a series expansion in non-orthogonal decaying exponentials of $|z|$. This expansion comes from applying the Laplace transform in $z$ to an integral equation, and may not have been speculated from the unperturbed Hamiltonian. The SP-type dispersion relation follows directly in this vein.

Another past approach is to obtain the optical conductivity by electronic-structure means and use it as a parameter in boundary conditions for Maxwell's equations.\cite{Falkovsky2007a,Falkovsky2007b,KaxirasMosallaei2014,Bludov2013,Catarina-Peres-rev2019} Here, we couple the Poisson equation, the quasi-electrostatic limit of Maxwell's equations, with effective one-particle quantum dynamics in 3D. Thus, we account for confined electron transport in the vertical ($z$-) coordinate without explicitly imposing any boundary condition at the binding ($xy$-) plane. A price paid for our exact solution is its limited applicability due to the Hartree dynamics and the idealized form of our binding potential, $V$.

\subsection{Notation and terminology}
\label{subsec:notation}
Boldface symbols denote vectors or matrices, e.g., $\boldsymbol e_s$ ($s= x, y, z$) is a unit Cartesian vector. $\mathbb{C}$, $\mathbb{R}$ ($\mathbb{R}_+)$ and $\mathbb{Z}$ are the sets of complex, real (positive real) numbers and integers, respectively, while $\mathbb{Z}_A=\frac{2\pi}{\sqrt{A}}\mathbb{Z}$. We write $f=\mathcal O(h)$ ($f=o(h)$) if $|f/h|$ is bounded (tends to zero) in a prescribed limit; and $f\sim h$ if $f-h=o(h)$.  The hat on top of a symbol indicates the Fourier transform with respect to time $t$ and coordinates $(x,y)$. The star ($*$) as a superscript indicates the complex conjugate or Hermitian adjoint. We often express linear integral operators by their kernels. The term ``macroscopic limit'' means taking $A\to \infty$ with fixed $\tfrac{N}{A}$. We distinguish the SP-type wave afforded by our model from the graphene SP.\cite{Hwang-DasSarma07,Wunsch2006}  We define the semiclassical regime by $\sqrt{\ell_b\ell_p}\ll \ell_{\mathrm{dB}}\ll \ell_p$ with $\ell_{b}\ll \ell_C$. (Strong binding means $\ell_b\ll \ell_{\mathrm{dB}}\lesssim \ell_p$ with $\ell_b\lesssim \ell_C$). We set $\hbar=1=2m_*$ unless we state otherwise.

\subsection{Paper outline}
\label{subsec:outline}

In Sec.~\ref{sec:results}, we summarize our results. Section~\ref{sec:classical_disp} presents the derivation of a nonretarded-SP dispersion law by a hydrodynamic model. In Sec.~\ref{sec:scat_form}, we formulate the quantum mechanical problem, and derive a nonlinear integral equation for the wave function. Section~\ref{sec:disp_qm} focuses on the linearized integral equation and definition of the SP-type excitation wave. In Sec.~\ref{sec:derivation-disp}, we derive the dispersion relation for even-in-$z$ amplitudes. In Sec.~\ref{sec:asymptotics}, we derive~\eqref{eq:disp-SPP} and higher-order terms in the semiclassical regime.  In Sec.~\ref{sec:conclusion}, we conclude the paper. 

\section{Main results}
\label{sec:results}
In this section, we outline our results for the quantum mechanical setting. A canonical model of classical hydrodynamics for the nonretarded SP is discussed in Sec.~\ref{sec:classical_disp}.

Consider the complex-valued periodic-in-$\vrp$ function $h_A(t,\vrp)$ with the cell $\mathfrak C=[0, \sqrt{A}]^2\subset \mathbb{R}^2$.  The spacetime Fourier transform of $h_A(t, \vrp)$ is 
\begin{align*}
	\hat{h}_A(w, \vbk_\parallel)=\int_{\mathbb{R}}\int_{\mathfrak C} h_A(t,\vrp)\,e^{\im w t-\im \vbk_\parallel\cdot \vrp}\,\dv \vrp\,\dv t~,\quad w\in\mathbb{R}~,\ \vbk_\parallel\in \mathbb{Z}_A^2~.
\end{align*}
Suppose that $\hat h_A(w,\vbk)$ is analytic in $w$ for $\Img\, w>0$. The Fourier inversion formally yields
\begin{align}\label{eq:FT-inv-discr}
	h_A(t, \vrp)=\frac{1}{A}\sum_{\vbk_\parallel\in \mathbb{Z}_A^2}\int_{\Gamma}e^{-\im w t+\im \vbk_\parallel\cdot \vrp}\,\hat{h}_A(w,\vbk_\parallel)\,\frac{\dv w}{2\pi}~,
\end{align}
where the path $\Gamma$ lies in the upper $w$-plane (if $\Img\,w>0$) and is parallel to the real axis. In the macroscopic limit, as $h_A$ ($\hat{h}_A$) approaches $h$ ($\hat{h}$) in an appropriate sense, \eqref{eq:FT-inv-discr} becomes
\begin{align}\label{eq:FT-inv}
	h(t, \vrp)=\int_{\mathbb{R}^2}\int_{\Gamma}e^{-\im w t+\im \vbk_\parallel\cdot \vrp}\,\hat{h}(w,\vbk_\parallel)\,\frac{\dv w}{2\pi}\,\frac{\dv \vbk_\parallel}{(2\pi)^2}~.
\end{align}

We focus on PDE~\eqref{eq:Hartree} with unperturbed Hamiltonian~\eqref{eq:H0-def} under $\mathfrak C$-periodicity in $(x ,y)$. The forward spacetime propagator, $G_A(t,\vrp,z; z')$, of this problem obeys 
\begin{equation}\label{eq:propagator-PDE}
\left(\im\partial_t-\mathcal H^0\right)G_A(t,\vrp, z; z')=-\delta(\vrp) \delta(z-z') \delta(t)~,\qquad \vrp\in \mathfrak C~,\ (z,z')\in\mathbb{R}^2~,
\end{equation}
and $G_A(t,\vrp,z; z')\equiv 0$ if $t<0$ while $G_A(t,\vrp,z; z')$ is bounded in $(\vrp, z-z')$ for fixed $t>0$.   

We employ $V(z)=-V_0a\delta(z)$. The ground state wave function is $\psi_0(t, \vr)=\sqrt{\tfrac{\beta}{A}} e^{-\im \omega_b t} f_0(z)$ with $f_0(z)=e^{-\beta |z|}$, $\omega_b=-\beta^2$ and $\beta=\frac{1}{2}V_0a$ (Sec.~\ref{subsec:unpert-prob}). Let $\vbk=\vbk_\parallel$ for ease of notation.

The following propositions (Propositions~\ref{prop:FT-G}--\ref{prop:strong-b}) and comments summarize our main results.  

\begin{prop}\label{prop:FT-G}
	If $V(z)=-V_oa\delta(z)$, the Fourier transform of $G_A(t,\vrp,z; z')$  is
\begin{subequations}
\begin{equation}\label{eq:FT-G}
\widehat{G}_A(w,\vbk,z;z')=\frac{1}{2\alpha} \left\{
 e^{-\alpha |z-z'|}+\beta (\alpha-\beta)^{-1} e^{-\alpha(|z|+|z'|)}\right\}~,
\end{equation}
for all $(z, z')\in\mathbb{R}^2$ and $(w, \vbk)\in \mathfrak R_A:= \{ (w, \vbk)\in \mathbb{C}\times \mathbb{Z}_A^2\,\big|\,\alpha \neq \beta~,\,\Rel\,\alpha> 0\}$ where
\begin{equation}
\alpha=\alpha(w, \vbk)=\sqrt{k^2-w}~; \quad \beta=\ell_b^{-1}=\frac{1}{2}V_0 a~.	
\end{equation}
\end{subequations}
\end{prop}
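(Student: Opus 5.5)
Taking the Fourier transform of~\eqref{eq:propagator-PDE} in $t$ and $\vrp$ reduces the problem to an ODE in $z$. We use units $\hbar=1=2m_*$ and the convention of~\eqref{eq:FT-inv-discr}: $\partial_t\mapsto -\im w$ and $\nabla_{\parallel}\mapsto \im\vbk$. We also use $V_0a=2\beta$, which follows from $\beta=\tfrac12 V_0a$ in these units. Then $\widehat G_A(w,\vbk,z;z')$ satisfies
\begin{equation*}
\left(w-k^2+\partial_z^2+2\beta\,\delta(z)\right)\widehat G_A=-\delta(z-z')~,
\end{equation*}
that is, $(-\partial_z^2+\alpha^2-2\beta\delta(z))\widehat G_A=\delta(z-z')$ with $\alpha^2=k^2-w$. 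The condition $\widehat G_A\equiv0$ for $t<0$ (causality) makes $\widehat G_A$ analytic in the upper $w$-half-plane. Boundedness in $z-z'$ then selects the branch $\Rel\,\alpha>0$ and requires decay as $|z|\to\infty$. So the task is to construct the $L^2$ Green function of the 1D operator $-\partial_z^2-2\beta\delta(z)+\alpha^2$ and check that it matches~\eqref{eq:FT-G}.

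The plan is to split the Green function as $\widehat G_A=G_{\mathrm{free}}+R$. The free part is $G_{\mathrm{free}}=\tfrac{1}{2\alpha}e^{-\alpha|z-z'|}$, the decaying fundamental solution of $-\partial_z^2+\alpha^2$. The correction $R$ must solve the homogeneous equation away from $z=0$, decay at infinity, and be symmetric in $(z,z')$. This forces the ansatz $R=C\,e^{-\alpha(|z|+|z'|)}$. The coefficient $C$ comes from the delta-potential matching condition at $z=0$: continuity of $\widehat G_A$ together with the derivative jump
\begin{equation*}
\partial_z\widehat G_A(0^+)-\partial_z\widehat G_A(0^-)=-2\beta\,\widehat G_A(0,z')~.
\end{equation*}
For $z'\neq0$, $G_{\mathrm{free}}$ is smooth at $z=0$ and contributes no jump, while $R$ contributes a jump of $-2\alpha C e^{-\alpha|z'|}$. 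The matching condition therefore reads $-2\alpha C=-2\beta\big(\tfrac{1}{2\alpha}+C\big)$, which gives $C=\tfrac{\beta}{2\alpha(\alpha-\beta)}$. This is exactly the second term in~\eqref{eq:FT-G}. The case $z'=0$ follows by continuity, or by redoing the jump computation with both terms.

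An equivalent route is the resolvent (Krein-type) formula: treat $-2\beta\delta$ as a rank-one perturbation of the free resolvent and sum the geometric series, $G=G_{\mathrm{free}}+G_{\mathrm{free}}(\cdot,0)\,\tfrac{2\beta}{1-2\beta G_{\mathrm{free}}(0,0)}\,G_{\mathrm{free}}(0,\cdot)$. This reproduces the same $C$ and makes clear why $\alpha=\beta$ is excluded: there $w-k^2=-\beta^2=\omega_b$ is the bound-state pole. The $\vbk$-dependence is harmless because the transverse problem is separable, and periodicity only restricts $\vbk$ to $\mathbb{Z}_A^2$.

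The main obstacle is the justification rather than the algebra. One has to show that the causal, bounded propagator really corresponds to the decaying branch $\Rel\,\alpha>0$ on all of $\mathfrak R_A$, and not only for $\Img\,w>0$. The argument I would use: for $\Img\,w>0$, $\alpha^2=k^2-w$ lies off $[0,\infty)$, so the principal root has $\Rel\,\alpha>0$. The formula obtained there is analytic in $w$ and extends to the whole region $\Rel\,\alpha>0$, $\alpha\neq\beta$, by analytic continuation. Uniqueness of the transform there follows because a decaying solution of the homogeneous problem exists only when $\alpha=\beta$.
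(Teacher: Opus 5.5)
Your proof is correct, but you organize the computation differently from the paper.

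The paper proceeds in four steps:
\begin{itemize}
\item It fixes $z'>0$ and posits a three-piece form for $\widehat G_A$: $C^+\cosh(\alpha(z-z'))+C_1\sinh(\alpha(z-z'))$ on $(0,z']$, $C^+e^{-\alpha(z-z')}$ for $z\ge z'$, and $C^-e^{\alpha z}$ for $z\le 0$.
\item It uses the unit derivative jump at $z=z'$ to eliminate $C_1$.
\item It imposes continuity and the jump $-2\beta\widehat G_A(0)$ at $z=0$. This gives a $2\times 2$ linear system for $(C^+,C^-)$, uniquely solvable precisely when $\alpha\neq\beta$.
\item Only at the end does it recombine the pieces and extend to all $(z,z')$ via the symmetries $\widehat G_A(z,z')=\widehat G_A(z',z)=\widehat G_A(-z,-z')$.
\end{itemize}

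You instead subtract the free kernel, which absorbs the source at $z=z'$. Symmetry, decay and continuity at $z=0$ then leave a single unknown constant. One scalar matching condition fixes it, valid for every $z'\neq 0$ at once. Your rank-one resolvent formula gives the same constant and shows $\alpha=\beta$ directly as the bound-state pole $w=k^2-\beta^2$.

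Your route is shorter and avoids both the case split and the final symmetry extension. It also matches the splitting $G=G_f+G^c$ that the paper itself uses in Appendix~\ref{app:sec:propagator}. The paper's route is more mechanical but does not presuppose the form of the correction term.

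Your closing remarks make explicit what the paper only asserts when it reads off $\mathfrak R_A$ from boundedness and solvability: uniqueness holds because a decaying homogeneous solution exists only at $\alpha=\beta$, and the formula extends by analytic continuation from $\Img\,w>0$.

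One small fix there. The reason $\Rel\,\alpha>0$ for $\Img\,w>0$ is that $k^2-w$ is nonreal, and hence lies off the principal branch cut $(-\infty,0]$. Merely lying off $[0,\infty)$, as you wrote, would not suffice: for instance $\alpha^2=-1$ gives $\Rel\,\alpha=0$.
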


Proposition~\ref{prop:FT-G} is proved in Sec.~\ref{subsec:causal-prop}. For fixed $k=|\vbk|$, the top Riemann sheet for $\alpha(w,\vbk)$ in the $w$-plane is defined by  $\Rel\,\alpha> 0$, and the branch cut is the half line $[k^2, +\infty)$ in the real axis. In the macroscopic limit, a closed-form expression for the respective propagator, $G$, is given in Sec.~\ref{subsec:causal-prop}. $\widehat G_A(w,\cdot)$ is ill defined for $w$ in the point spectrum of $\mathcal H^0$. These points $w$ are: (i) the simple pole $w=w_p$ of $\widehat{G}_A$, by $\alpha(w, \vbk)=\beta$; and (ii) the branch point $w=k^2$ of $\widehat{G}_A$ and the half line $(k^2, +\infty)$ along the real axis of the $w$-plane. 
Let $\alpha=\sqrt{|k^2-w|}e^{\im \theta_w/2}$ with $\theta_w=\Arg(k^2-w)$ and $|\theta_w|<\pi$; thus, $\alpha(w, \vbk)^*=\alpha(w^*, \vbk)$. By use of scattering theory, we formulate a nonlinear integral equation for the wave function $\psi$ (Sec.~\ref{subsec:nonlin-int_eq}). 

We linearize this integral equation around $\psi_0$; see Sec.~\ref{subsec:plasmon-def}. Let $u_s(t,\vr)$ be a solution of the linearized equation, which corresponds to $\psi(t,\vr)-\psi_0(t,\vr)$. In the absence of any forcing, the (homogeneous) linearized integral equation reads (Sec.~\ref{subsec:plasmon-def})
\begin{align}\label{eq:scat_FE}
u_s(t,\vr)+\int_{\mathfrak{C}\times\mathbb{R}}\int_{\mathbb{R}}G_A(t-t',\vrp-\vrp',z;z') \{\upsilon\star  (\psi_0 u_s^*+\psi_0^* u_s)\psi_0(t',\vr')\}\,\dv t'\,\dv\vr'=0~.	
\end{align}
For suitable values $(\omega, \vbq)\in \mathbb{C}\times\mathbb{Z}_A^2$, we can seek  nontrivial solutions to~\eqref{eq:scat_FE} of the form 
\begin{equation}\label{eq:plane-ansatz}
u_s(t,\vr)=\frac{1}{\sqrt{2A}}\left\{f_+(z)\,e^{\im \vbq\cdot \vrp-\im (\omega_b+\omega) t}+f_-(z)^*\,e^{-\im \vbq\cdot \vrp-\im (\omega_b-\omega) t}\right\}~,\quad f_{\pm}\in L^2(\mathbb{R})~.	
\end{equation}
The physics-motivated SP-type excitation wave is defined for $\omega\in\mathbb{R}$ via~\eqref{eq:plane-ansatz} and the combination $(e^{\im\omega_b t}u_s(t,\vr)+\mathrm{c.c.})/\sqrt{2}$. The scattering amplitude, $F(z)$, is introduced by
\begin{equation}\label{eq:plane-ansatz-SP}
u^{SP}(t,\vr)=\frac{1}{\sqrt{2A}}\left\{F(z)e^{\im \vbq\cdot \vrp-\im \omega t}+\mathrm{c.c.}\right\}~;\quad F(z):=\{f_+(z)+f_-(z)\}/\sqrt{2}~.	
\end{equation}
See Definition~1 (Sec.~\ref{subsec:plasmon-def}).    

\begin{prop}\label{prop:lin-int_eq}
In the macroscopic limit, \eqref{eq:scat_FE}--\eqref{eq:plane-ansatz-SP} entail the linear integral equation	
\begin{subequations}\label{eqs:F-int_eq}
\begin{equation}\label{eq:int_eq-F}
	F(z)=-\frac{\beta}{2q} \frac{e^2 \eta_0}{\varepsilon_0}\int_{-\infty}^{\infty}\dv z'\, f_0(z')\,\widehat{\slashed{G}}(\omega, \vbq, z; z')\,\mathcal D_q(f_0 F)(z')\,,\  z\in\mathbb{R}~,
\end{equation}
where $F\in L^2(\mathbb{R})$, $\eta_0=N/A$ is the surface number density of charged particles, $f_0(z)=e^{-\beta|z|}$, $\beta=\tfrac{1}{2}V_0a$, $\vbq\in\mathbb{R}^2\setminus\{0\}$, and $\omega$ must be determined for $\omega\in\mathbb{C}$ and $|\Arg(q^2+|\omega_b|\pm \omega)|<\pi$. The linear operator $\mathcal D_q$ ($\mathcal D_q: L^2(\mathbb{R}) \rightarrow L^2(\mathbb{R})$) is defined by
\begin{equation}\label{eq:D-op}
\mathcal D_q(u)(z):=\int_{-\infty}^{\infty} \dv z'\, e^{-q|z-z'|}\,u(z')\qquad (q=|\vbq|>0)~.	
\end{equation}
The function $\widehat{\slashed{G}}(\omega, \vbq, z; z')$ is 
\begin{equation}\label{eq:kernel-def}
	\widehat{\slashed{G}}(\omega, \vbq, z; z'):=\widehat{G}\big(\omega_b+\omega, \vbq, z; z'\big)+\widehat{G}\big(\omega_b-\omega, \vbq, z; z'\big)~,\quad (z,z')\in\mathbb{R}^2~,
\end{equation}
\end{subequations}
where $\widehat{G}$ is given by~\eqref{eq:FT-G} with domain $\mathfrak R= \{ (w, \vbk)\in \mathbb{C}\times \mathbb{R}^2\,\big|\,\alpha \neq \beta~,\,\Rel\,\alpha> 0\}$.
\end{prop}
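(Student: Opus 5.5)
Substitute the ansatz \eqref{eq:plane-ansatz} into \eqref{eq:scat_FE} and project onto the two plane-wave components $e^{\pm\im\vbq\cdot\vrp}$, using the space-time Fourier transform of Sec.~\ref{sec:results}. Along the way, compute the Coulomb convolution in the partial Fourier representation and collect the numerical constants to obtain the prefactor $-\tfrac{\beta}{2q}\tfrac{e^2\eta_0}{\varepsilon_0}$.

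First, compute the linearized density $\psi_0 u_s^*+\psi_0^*u_s$. With $\psi_0=\sqrt{\beta/A}\,e^{-\im\omega_b t}f_0(z)$ and $f_0$ real, the phases $e^{\mp\im\omega_b t}$ cancel against those of $u_s$. One finds
\[
\psi_0 u_s^*+\psi_0^* u_s=\frac{\sqrt{\beta}}{A\sqrt2}\,f_0(z)\Big\{[f_+(z)+f_-(z)]e^{\im\vbq\cdot\vrp-\im\omega t}+\mathrm{c.c.}\Big\},
\]
so only the combination $F=(f_++f_-)/\sqrt2$ enters. Next, convolve with $\upsilon$. For a source $g(z)e^{\im\vbq\cdot\vrp}$, the 3D Coulomb potential acts through its 2D partial Fourier transform $\tfrac{Q_e^2}{4\pi\varepsilon_0}\tfrac{2\pi}{q}e^{-q|z-z'|}$. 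This gives $\upsilon\star(\cdot)=\tfrac{Q_e^2}{2\varepsilon_0 q}\,\mathcal D_q(\cdot)$, which is where $\mathcal D_q$ and the factor $1/q$ arise. For $q\neq0$ the zero mode is absent, and this removes the distributional subtlety mentioned in Sec.~\ref{subsec:model}. Multiplying by $\psi_0$ restores the phase $e^{-\im\omega_b t}$ and a further factor $f_0(z')\sqrt{\beta/A}$.

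Then apply the propagator. Since $G_A$ is a convolution in $(t,\vrp)$, each exponential $e^{\im\vbq\cdot\vrp-\im(\omega_b\pm\omega)t}$ is an eigenfunction of the convolution, with eigenvalue $\widehat G_A(\omega_b\pm\omega,\pm\vbq,z;z')$ from Proposition~\ref{prop:FT-G}. By isotropy this depends only on $|\vbq|$. The causal prescription requires $\Img\,w$ slightly positive, or analytic continuation, which is the origin of the condition $\Rel\,\alpha>0$, i.e.\ $|\Arg(q^2+|\omega_b|\pm\omega)|<\pi$. Matching the coefficients of $e^{\im\vbq\cdot\vrp-\im(\omega_b+\omega)t}$ gives an equation for $f_+$ with kernel $\widehat G(\omega_b+\omega)$. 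The coefficient of $e^{-\im\vbq\cdot\vrp-\im(\omega_b-\omega)t}$, after complex conjugation and use of $\alpha(w,\vbk)^*=\alpha(w^*,\vbk)$ (the kernel being real-analytic in $w$), gives an equation for $f_-$ with kernel $\widehat G(\omega_b-\omega)$. Both equations have the same right-hand source $\propto f_0\,\mathcal D_q(f_0F)$. Adding them and dividing by $\sqrt2$ produces $\widehat{\slashed G}$ acting on $f_0\mathcal D_q(f_0F)$.

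Finally, collect constants. The factors are $\tfrac{\sqrt\beta}{A\sqrt2}\cdot\sqrt{\beta/A}\cdot\tfrac{Q_e^2}{2\varepsilon_0 q}$, the $1/\sqrt{2A}$ normalization of $u_s$, and the factor $A$ from the $\vrp$-integration over $\mathfrak C$. With $Q_e^2=Ne^2$ they reduce to $\tfrac{\beta}{2q}\tfrac{e^2N}{A\varepsilon_0}=\tfrac{\beta}{2q}\tfrac{e^2\eta_0}{\varepsilon_0}$, which is independent of $A$. The macroscopic limit then only allows $\vbq$ to range over $\mathbb R^2\setminus\{0\}$ and replaces $\widehat G_A$ by $\widehat G$. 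The main obstacle I expect is this bookkeeping of normalizations, together with a careful justification of the conjugation step for $f_-$. That step needs complex $\omega$ and $\omega^*$ to be handled consistently, and I would restrict to real $\omega$ first and extend analytically. The $L^2$ requirement on $F$ is inherited from $f_\pm\in L^2$, and $\mathcal D_q$ is bounded on $L^2$ by Young's inequality.
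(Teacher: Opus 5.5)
Your proposal is correct and is essentially the paper's computation: the partial Fourier transform $e^{-q|z-z'|}/(2q)$ of the Coulomb kernel, plane waves as eigenfunctions of the $G_A$-convolution, and $\alpha(w,\vbk)^*=\alpha(w^*,\vbk)$ to turn the conjugated kernel into $\widehat{G}(\omega_b-\omega,\vbq)$; the paper simply does your ``add the $f_+$ and $f_-$ equations'' step first, by passing to the real equation \eqref{eq:lin-int-eq-transf-hom} for $\slashed{u}_s$ and inserting the single complex ansatz \eqref{eq:plane-ansatz-SP-cmpx}, which handles complex $\omega$ directly rather than by continuation from real $\omega$. One bookkeeping fix: there is no net factor $A$ from the $\vrp$-integration (the $A$ it produces is cancelled by the $1/A$ in \eqref{eq:G-inv-periodic}), and keeping it would give $\beta e^2N/(2q\varepsilon_0)$; the prefactor $\tfrac{\beta}{2q}\tfrac{e^2\eta_0}{\varepsilon_0}$ arises because the $1/A$ from $|\psi_0|^2\propto\beta/A$ survives and combines with $Q_e^2=Ne^2$.
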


Proposition~\ref{prop:lin-int_eq} is proved in Sec.~\ref{subsec:int_eq-z}. Equation~\eqref{eq:int_eq-F} eventually reads $\mathcal L(\omega, \vbq)(f_0 F)=0$, where $\mathcal L$ is a linearized integral operator with a kernel described  in Sec.~\ref{subsec:int_eq-z}.  
In~\eqref{eq:kernel-def}, $\widehat{\slashed{G}}(\omega, \vbq, \cdot)$ comes from the Fourier transform in $(t, \vrp)$  of $e^{\im \omega_b t}G(t, \vrp, \cdot)+(e^{\im \omega_b t}G(t, \vrp, \cdot))^*$ with $\widehat{G}\big(\omega^*, \cdot)^*=\widehat{G}\big(\omega, \cdot)$; and depends on  $\vbq$ only through $|\vbq|$ by in-plane rotational invariance. 

For an even scattering amplitude $F(z)$, we reduce~\eqref{eq:int_eq-F} to a nonconvolution Volterra-type equation for $f_0F$, which is converted into a five-point functional equation for the Laplace transform of $f_0F$ with respect to positive $z$ (Sec.~\ref{subsec:funct-eq}). We apply an analytic-continuation procedure in the Laplace domain to obtain a convergent series expansion for $F$ (Sec~\ref{subsec:disp-soln}). Odd functions $F(z)$ can be treated similarly but are not studied in this paper. 
\begin{prop}\label{prop:disp_reln}
Consider even amplitudes $F(z)$, and $\omega\in \mathbb{C}$ with $0<q<|\omega|/q< \beta$ and  $|\Arg(q^2+\beta^2\pm \omega)|<\pi$. Equation~\eqref{eq:int_eq-F} leads to the dispersion relation 
\begin{align}\label{eq:Lambda-disp-reln}	
\Lambda(\omega,q)=\mathrm{det}(\boldsymbol{\mathfrak{A}}(\omega, q)-\boldsymbol{I})=0~,
\end{align}
where $\boldsymbol{\mathfrak{A}}(\omega, q)=[\mathfrak A^\mu_\nu(\omega, q)]$ is a matrix valued function, the indices $\mu$ and $\nu$ take values in the set $\{+, -, ++, -+, +-, -- \}$, and $\boldsymbol{I}$ is the $6\times 6$ identity matrix. The matrix elements $\mA^\mu_\nu$ are given by the following uniformly convergent series (for $\varsigma,\, \sigma,\, \varsigma',\, \sigma'=\pm$): 
\begin{subequations}\label{eqs:prop3-matrix}    
\begin{align}
\mA^{\varsigma}_{\varsigma'}&=\frac{4\Cs_0(\beta)}{4\tq^2-\tomega^2}\sum_{n=0}^\infty \left\{\delta_{\varsigma',+}\sum_{\sigma'=\pm} \Cs_0^{\sigma'}\frac{\Lambda_n^{\sigma'}(\beta)}{\varsigma \tilde q+\tilde\alpha_{\sigma'}+2n+1}+ \frac{\slashed{\Lambda}_n(\beta)}{(\varsigma+1) \tilde q+2(n+1)}  \right\},\label{eq:A-varsigma-varsigmap}\\
\mA^{\varsigma\sigma}_{\varsigma'}&=\frac{4\Cs_0(\beta)}{4\tq^2-\tomega^2}
\sum_{n=0}^\infty\left\{\delta_{\varsigma', +} \sum_{\sigma'=\pm} \Cs_0^{\sigma'} \frac{\Lambda_n^{\sigma'}(\beta)}{\varsigma\tilde \alpha_\sigma+\tilde\alpha_{\sigma'}+2(n+1)}+ \frac{\slashed{\Lambda}_n(\beta)}{\varsigma\tilde \alpha_\sigma+\tilde q+2n+3}\right\}, \label{eq:A-varsigma-sigma-varsigmap}\\
\mA^{\varsigma}_{\varsigma'\sigma'}&=\frac{4\Cs_0(\beta)}{4\tq^2-\tomega^2} \sum_{n=0}^\infty \frac{\Cs_{\varsigma'}^{\sigma'}\Lambda_n^{\sigma'}(\beta)}{\varsigma \tilde q+\tilde\alpha_{\sigma'}+2n+1},\label{eq:A-varsigma-sigma-varsigmasigma}\\     
\mA^{\varsigma\sigma}_{\varsigma'\sigma' }&=\frac{4\Cs_0(\beta)}{4\tq^2-\tomega^2}\sum_{n=0}^\infty \frac{\Cs_{\varsigma'}^{\sigma'}\Lambda_n^{\sigma'}(\beta)}{\varsigma\tilde \alpha_\sigma+\tilde\alpha_{\sigma'}+2(n+1)}.\label{eq:A-varsigma-sigma-varsigmap-sigmap}
\end{align}
\end{subequations}
We define $\tilde {\vbq}:=\tfrac{\vbq}{\beta}$ ($\tilde q=\tfrac{q}{\beta}$), $\tilde\alpha_\sigma:=\tfrac{\alpha(\omega_b+\sigma\omega, \vbq)}{\beta}=\alpha(-1+\sigma\tilde\omega, \tilde{\vbq})$, $\tilde\omega:=\frac{\omega}{\beta^2}$, and $\delta_{\varsigma, +}=1$ if $\varsigma=+$ while $\delta_{\varsigma, +}=0$ otherwise (with $\Rel\,\tilde\alpha_\sigma>0$); and introduce the dimensionless parameters
\begin{subequations}\label{eqs:prop3-pmts}
\begin{align}
 \Cs_0(\beta)&=\frac{e^2\eta_0}{2\varepsilon_0 \beta^3}=\biggl(\frac{2\ell_b}{\ell_C}\biggr)^3~,\quad \Cs_0^\sigma=-\tilde q \,\frac{\tilde\alpha_\sigma+1}{\tilde q^2-\sigma\tilde\omega}~,\\
 \Cs^\sigma_+&=-\frac{(\tilde\alpha_\sigma+1)^2}{\tilde q^2-\sigma\tilde\omega}\frac{(\tilde\alpha_\sigma-1)^2-\tilde q^2}{4\tilde\alpha_\sigma}~,\ \Cs^\sigma_-=- \frac{(\tilde\alpha_\sigma+1)^2-\tilde q^2}{4\tilde\alpha_\sigma}~.
\end{align}
\end{subequations}
The sequences $\{\Lambda_n^\sigma(\beta)\}_{n\in \mathbb{N}}$ and $\{\slashed{\Lambda}_n(\beta)\}_{n\in\mathbb{N}}$ are defined so that $\Lambda_0^\sigma=1=\slashed{\Lambda}_0$ (if $n=0$) and
\begin{subequations}\label{eqs:residues-n-scaled}
\begin{align}
	\Lambda_n^\sigma(\beta)
	&= \frac{(-\Cs_0(\beta))^n}{n!}\prod_{j=1}^{n}\left\{\frac{1}{j+\tilde\alpha_\sigma}\left[1+\frac{\sigma \tilde \omega}{4j(j+\tilde\alpha_\sigma)-2\sigma\tilde\omega} \right]\right.\notag\\
	& \left.\hphantom{\frac{1}{n!}\biggl(-\frac{e^2 \eta_0}{2\varepsilon_0} \biggr)^n}\times \frac{1}{1+(2j-1)^2+2(2j-1)\tilde\alpha_\sigma-\sigma\tilde \omega}\right\}~,\quad n\ge 1~,\\
	\slashed{\Lambda}_n(\beta)&=\frac{(-\Cs_0(\beta))^n}{n!} \prod_{j=1}^n
	\frac{1}{j+\tilde q}\,\frac{j(j+1)+\left(j+\tfrac{1}{2}\right)\tilde q}{4[j(j+1)+(j+\tfrac{1}{2})\tilde q]^2-\tilde \omega^2/4}~,\quad n\ge 1~.
\end{align}
\end{subequations}
\end{prop}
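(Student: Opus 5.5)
Starting from the integral equation~\eqref{eq:int_eq-F} of Proposition~\ref{prop:lin-int_eq}, I set $u(z):=f_0(z)F(z)$ for even $F$ and insert the explicit kernel~\eqref{eq:FT-G} with $\alpha_\sigma=\alpha(\omega_b+\sigma\omega,\vbq)$, $\sigma=\pm$. The equation then involves only exponentials in $|z|$, $|z'|$, $|z-z'|$. First I compute $\Phi:=\mathcal D_q u$: the operator $\mathcal D_q$ inverts $(q^2-\partial_z^2)/(2q)$, so $\Phi$ solves an ODE with source $u$. Likewise, each term of $\widehat{\slashed G}$ is the Green function of $\alpha_\sigma^2-\partial_z^2$ plus a rank-one correction localized at $z=0$. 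Applying $(\alpha_+^2-\partial_z^2)(\alpha_-^2-\partial_z^2)$ to $F$, and using $f_0$, $\Phi$ on $z>0$, I reduce the problem to a linear ODE with coefficients involving $e^{-\beta z}$ and $e^{-2\beta z}$. Equivalently, I get a Volterra-type equation on $(0,\infty)$ whose unknown constants are boundary data at $z=0$: values of $\Phi$, of $\int_0^\infty e^{-q z'}u$, and of $\int_0^\infty e^{-\alpha_\sigma z'}f_0\Phi$. These are exactly the constants indexed by $\pm,\ \pm\pm$ in the statement.

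Next I take the Laplace transform $U(s)=\int_0^\infty e^{-sz}u(z)\,dz$. Multiplication by $e^{-\beta z}$ becomes a shift $s\mapsto s+\beta$, and $f_0$ appears twice (once in $u$, once as the factor $f_0(z')$), so the result is a five-point functional equation. In scaled variables $\tilde s=s/\beta$ it reads
\[
P(\tilde s)U(\tilde s)=\Cs_0\,R(\tilde s)\,U(\tilde s+2)+\text{(rational terms in the unknown constants)},
\]
with polynomial factors $P$ built from $(\tilde s^2-\tilde q^2)$ and $(\tilde s^2-\tilde\alpha_\sigma^2)$. The equation is linear in $U$ and in the six unknown constants. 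I solve it by iteration toward $\tilde s\to+\infty$, where $U\to0$. This writes $U$ as a sum over $n$ of products of the ratios $R/P$ at $\tilde s+2j$. The recursion has a factor $\Cs_0/j$ per step, so the series converges uniformly, like an exponential series.

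The poles of $U$ then sit at $\tilde s=-\tilde q-2n-1$ (or $-2(n+1)$) and $-\tilde\alpha_\sigma-2n-1$. By the Mittag-Leffler theorem, $U$ is the sum of its principal parts, with residues given by the products $\Lambda_n^\sigma$ and $\slashed\Lambda_n$ of~\eqref{eqs:residues-n-scaled}. Evaluating the product telescopically at these poles gives the factors $j+\tilde\alpha_\sigma$, the factor $1+(2j-1)^2+2(2j-1)\tilde\alpha_\sigma-\sigma\tilde\omega$, and so on. The $\Cs^{\sigma'}_{\varsigma'}$ and $\Cs_0^{\sigma'}$ coefficients come from the partial fractions of the kernel in $\alpha_\sigma$ and $q$.

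To close the system, I require self-consistency. Each unknown constant equals $U$, or the transform of $\Phi$, evaluated at $\tilde s=\varsigma\tilde q$ or $\tilde s=\varsigma\tilde\alpha_\sigma$. Substituting the Mittag-Leffler series gives the denominators $\varsigma\tilde q+\tilde\alpha_{\sigma'}+2n+1$, etc., in~\eqref{eqs:prop3-matrix}. The result is a homogeneous $6\times6$ system $(\boldsymbol{\mathfrak A}-\boldsymbol I)\mathbf c=0$. A nontrivial $F\in L^2$ exists iff its determinant vanishes, which is~\eqref{eq:Lambda-disp-reln}.

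The conditions $0<q<|\omega|/q<\beta$ and the $\Arg$ constraints serve two purposes. They keep $\Rel\,\tilde\alpha_\sigma>0$, and they keep the evaluation points away from the poles of the series, which ensures convergence and analyticity. The prefactor $(4\tq^2-\tomega^2)^{-1}$ should come from the product of the two resolvents at $\tilde s=\pm\tilde q$.

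The main obstacle is deriving the five-point functional equation correctly with all boundary constants, and proving that the iterated solution has exactly the stated pole structure. This step justifies the analytic continuation of $U$ to the left half plane. I would verify the first residue ($n=1$) by direct computation before trusting the general product formula.
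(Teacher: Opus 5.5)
Your overall architecture is the paper's: reduce to a Volterra-type equation on $\mathbb{R}_+$, Laplace transform to a functional equation linking $\breve\dF(s)$ and $\breve\dF(s+2\beta)$, find simple poles with product residues, expand by Mittag-Leffler, and close by self-consistency. The gap is in how the stated $6\times6$ matrix arises.

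\textbf{The six unknowns.} The functional equation contains only three constants: $\breve\dF(q)$ and $\breve\dF(\beta+\alpha_\pm)$. Your ``boundary data'' reduce to exactly these. Indeed $\Phi(0)=2\breve\dF(q)$, and $\int_0^\infty e^{-\alpha_\sigma z}f_0\Phi\,\dv z$ is a linear combination of $\breve\dF(q)$ and $\breve\dF(\beta+\alpha_\sigma)$. So they are not ``exactly the constants indexed by $\pm,\pm\pm$.''

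The other three unknowns, $\breve\dF(-q)$ and $\breve\dF(\beta-\alpha_\pm)$, enter only through the zeroth residues. The coefficient of $\breve\dF(s+2\beta)$ has genuine simple poles at $s=-\beta-\alpha_\sigma$ and $s=-2\beta-q$. Hence $R_0^\sigma$ picks up $\breve\dF(\beta-\alpha_\sigma)$, and $\slashed{R}_0\propto\breve\dF(q)+\breve\dF(-q)$. The prefactor $(4q^2\beta^2-\omega^2)^{-1}$ comes from $\sum_\sigma[\alpha_\sigma^2-(\beta+q)^2]^{-1}$, not from ``resolvents at $\pm\tq$.''

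This mechanism is what produces the $\delta_{\varsigma',+}$ pattern, the occurrence of $\slashed{\Lambda}_n$ in both columns $\varsigma'=\pm$, and the coefficients $\Cs^{\sigma'}_-$. Without it you cannot reach the stated matrix. Closing your forward-iteration series directly would give a $3\times3$ system with different entries.

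\textbf{Evaluation points and poles.} The evaluation points are $s=\varsigma q$ and $s=\beta+\varsigma\alpha_\sigma$, i.e.\ $\tilde s=1+\varsigma\tilde\alpha_\sigma$, not $\varsigma\tilde\alpha_\sigma$. For example, $\varsigma\tilde\alpha_\sigma+\tq+2n+3=(1+\varsigma\tilde\alpha_\sigma)+\tq+2(n+1)$. The second pole family is $\slashed{s}_n=-q-2(n+1)\beta$, which is neither of the values you list.

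\textbf{The entire part.} The Mittag-Leffler theorem gives principal parts plus an entire function $g$, not principal parts alone. The paper uses Lemma~1 and an appendix to prove $g\equiv0$. The argument combines decay away from the poles, a difference-scheme bound $g=\mathcal O(1/s)$ as $s\to-\infty$, and Liouville.

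Your forward iteration can actually do this more cleanly, but you must say so. Write $\breve\dF(s)=P(s)+Q(s)\breve\dF(s+2\beta)$ as in the paper and iterate $N$ times. The tail $\prod_{j<N}Q(s+2j\beta)\,\breve\dF(s+2N\beta)$ vanishes, since $\breve\dF\to0$ as $\Rel\,s\to+\infty$. Also $Q(s)=\mathcal O(s^{-4})$, which is much stronger than your $\Cs_0/j$. Hence the solution is fixed by the three constants, is meromorphic on $\mathbb{C}$, and has no entire part.

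You still have to show that the poles of individual terms at $s=q-2(n+1)\beta$ and $s=\alpha_\sigma-(2n+1)\beta$ cancel in the sum. This is the paper's removable-singularity argument.

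\textbf{The converse.} Only the ``only if'' direction is claimed. It needs that $F\neq0$ forces a nonzero six-vector, which follows once $g\equiv0$. Your ``iff'' would require more.
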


Proposition~\ref{prop:disp_reln} is proved in Sec.~\ref{sssec:proof-Prop3}. In our proof, we invoke an expansion of the Laplace transform of $f_0F$ via the Mittag-Leffler theorem; see Lemma~1 (Sec.~\ref{sssec:exp-ML}). Notice that 
 $\Lambda(\omega, q)=\Lambda(-\omega, q)$, because the replacement of $\omega$ by $-\omega$ amounts to the interchange of the index ($\mu$ or $\nu$) values $++$ and $+-$ with $-+$ and $--$, respectively, in the matrix $[\mathfrak A^\mu_\nu(\omega, q)]$. The ensuing even amplitude $F(z)$ is described in Sec.~\ref{sssec:explct-F}. 
 
The imposed condition $q<\tfrac{|\omega|}{q}<\beta$ simplifies the steps towards the proof of Proposition~\ref{prop:disp_reln} and is compatible with the semiclassical regime. This condition is equivalent to $\sqrt{\ell_b\ell_p}<\ell_{\mathrm{dB}}<\ell_p$ which implies  $\ell_b< \ell_{\mathrm{dB}}< \ell_p$. More generally, we expect that, for fixed $q$ with $0< q<\beta$, the determinant $\Lambda(\omega, q)$ of~\eqref{eq:Lambda-disp-reln} is analytic in complex $\omega$ with 
 $\Rel\,\alpha(\omega_b\pm \omega, \vbq)>0$ and $q<\tfrac{|\omega|}{q}<  2(\beta+\beta_0)$ for some $\beta_0>0$. Despite the factor $(4\tq^2-\tomega^2)^{-1}$ that appears in~\eqref{eqs:prop3-matrix}, the points $\omega=\pm 2q\beta$ are regular for $\Lambda(\omega, q)$ if $0< q<\beta$, as discussed in Sec.~\ref{subsec:soln-rmks}. In contrast, $\omega=\pm q^2$ are simple poles of $\Lambda(\omega, q)$, which are inherited from $\widehat{G}(\omega_b\pm \omega, \vbq, z; z')$. The  free-particle energy $q^2$ should be a lower bound of the macroscopic-like energy $|\omega(q)|$ near the semiclassical regime; and the restriction $q/\beta <1$ ($q\ell_b< 1$) is deemed consistent with the neglect of the Landau damping. The phase velocity $\tfrac{|\omega|}{q}$ should plausibly be bounded above by a constant close to $2\beta=V_0 a$, the minimal velocity for particle escape from confinement.   

For complex $\omega$, the power series entering the matrix elements $\mathfrak A^\mu_\nu$ are found to converge uniformly with $\tilde\omega$, $\tilde q$ and $\Cs_0$, if $\tilde q^2<|\tilde \omega|<\tilde q$ and $\Cs_0\le \mathcal O(1)$, and become rapidly convergent if $\Cs_0\ll 1$. It is tempting, although not physically compelling, to interpret $\Cs_0$ as an effective Wigner-Seitz-type radius,  by some loose analogy with the jellium model:\cite{Mahan-book,MartinRothen-book} Set $\Cs_0=\tfrac{e^2}{4\pi\varepsilon_0 r_0}/\tfrac{\hbar^2}{m_* r_0^2}=\tfrac{r_0}{a_0}$ where $r_0=\eta_0\,4\pi \ell_b^3=2\tfrac{(2\ell_b)^3}{\ell_d^2}$ and $a_0=\tfrac{\hbar^2 4\pi\varepsilon_0}{e^2 m_*}$ is the Bohr radius. 

Equation~\eqref{eq:Lambda-disp-reln} is simplified considerably in the semiclassical regime; cf.~\eqref{eq:scales}.  
\begin{prop}\label{prop:strong-b}
If $0<\tilde q^2\ll |\tilde\omega|\ll \tilde q$ and $\Cs_0\ll 1$, dispersion relation~\eqref{eq:Lambda-disp-reln} is reduced to
\begin{align}\label{eq:leading-order}
	\frac{\omega^2}{q}=\frac{e^2 \eta_0}{\varepsilon_0} \left\{1+\mathcal O\biggl(\frac{\tilde q^4}{\tilde \omega^2}\biggr)+\mathcal O(\tilde q)\right\}~.
\end{align}
\end{prop}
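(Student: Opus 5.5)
The plan is to expand the $6\times 6$ determinant $\Lambda(\omega,q)=\det(\boldsymbol{\mathfrak A}-\boldsymbol I)$ of Proposition~\ref{prop:disp_reln} in the small parameters $\tilde q$, $\tilde q^2/|\tilde\omega|$, $|\tilde\omega|/\tilde q$ and $\Cs_0$. First I would rewrite the target relation in scaled variables. Since $\Cs_0=e^2\eta_0/(2\varepsilon_0\beta^3)$, relation~\eqref{eq:leading-order} reads
\begin{equation*}
\frac{\tilde\omega^2}{\tilde q}=2\Cs_0\{1+\mathcal O(\tilde q^4/\tilde\omega^2)+\mathcal O(\tilde q)\}.
\end{equation*}
Thus the consistent scaling is $\Cs_0\sim\tilde\omega^2/\tilde q\ll \tilde q$. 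Under this scaling the prefactor $4\Cs_0/(4\tilde q^2-\tilde\omega^2)\simeq \Cs_0/\tilde q^2$ (because $|\tilde\omega|\ll\tilde q$) need not be small, and this is what allows a nontrivial root of $\Lambda$.

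Next I would truncate the series~\eqref{eqs:residues-n-scaled}. Each $\Lambda^\sigma_n$ and $\slashed\Lambda_n$ with $n\ge1$ carries a factor $\Cs_0^n$ with $\mathcal O(1)$ coefficients, since $\tilde\alpha_\sigma=\sqrt{1+\tilde q^2-\sigma\tilde\omega}=1+\mathcal O(\tilde\omega)$. Hence only the $n=0$ terms survive at leading order, and the $n\ge1$ terms contribute relative errors of order $\Cs_0\ll\tilde q$, which are absorbed in $\mathcal O(\tilde q)$. I would then expand the coefficients. The dangerous ones are $\Cs_0^\sigma\simeq 2\sigma\tilde q/\tilde\omega$ and $\Cs^\sigma_+$, whose denominators $\tilde q^2-\sigma\tilde\omega$ are small. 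I would Taylor-expand $\tilde\alpha_\sigma$ to second order in $\tilde\omega$ and first order in $\tilde q^2$.

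The key cancellation is in the sums $\sum_{\sigma'}\Cs_0^{\sigma'}/(\varsigma\tilde q+\tilde\alpha_{\sigma'}+1)$. The leading parts $\pm2\tilde q/\tilde\omega$ cancel. What survives comes from the $\sigma$-odd part of the denominators, which is proportional to $\tilde\omega$, and from the $\tilde q^2$ correction in $\tilde q^2-\sigma\tilde\omega$, which yields relative errors of order $\tilde q^2/\tilde\omega$. After multiplication by $\Cs_0/\tilde q^2$, I expect this cancellation to produce a distinguished entry (or combination of entries) of size
\begin{equation*}
\frac{2\Cs_0\tilde q}{\tilde\omega^2}\bigl(1+\mathcal O(\tilde q^2/\tilde\omega)+\mathcal O(\tilde q)\bigr),
\end{equation*}
while the remaining entries are small, of order $\Cs_0/\tilde q$ or $\tilde q$. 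Squaring the relative error $\tilde q^2/\tilde\omega$ is consistent with the stated $\mathcal O(\tilde q^4/\tilde\omega^2)$ remainder once the $\omega\mapsto-\omega$ symmetry $\Lambda(\omega,q)=\Lambda(-\omega,q)$ is used to kill odd terms.

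Then $\boldsymbol{\mathfrak A}-\boldsymbol I$ has the block form $-\boldsymbol I+\text{(rank-one dominant part)}+\text{(small)}$. Expanding the determinant along the dominant row and column gives
\begin{equation*}
\Lambda=\pm\{1-2\Cs_0\tilde q/\tilde\omega^2+\text{small}\}.
\end{equation*}
Setting this to zero yields the scaled relation above, and hence~\eqref{eq:leading-order} after undoing the scalings. The main obstacle is the bookkeeping in this cancellation: one must identify exactly which entries (among the $\varsigma=\pm$ and $\varsigma\sigma$ indices) carry the resonant $\Cs_0\tilde q/\tilde\omega^2$ size. One must also check that the large coefficients $\Cs^\sigma_+\sim 1/\tilde\omega$ in the $(\cdot)_{+\sigma'}$ columns do not spoil the rank-one structure. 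My first step here would be to compute all $n=0$ entries symbolically to second order and perform row and column operations (adding and subtracting the $\sigma'=\pm$ columns) that make the cancellation explicit before taking the determinant.
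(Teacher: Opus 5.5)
Your outline is the paper's route: truncate at $n=0$ using $\Cs_0\ll1$, expand the $n=0$ entries, exhibit the cancellation in $\sum_{\sigma'}\Cs_0^{\sigma'}(\cdots)$, then reduce the $6\times6$ determinant. However, the sizing on which your determinant step rests is wrong. You note that the prefactor $2\mH_0=4\Cs_0/(4\tq^2-\tomega^2)$ need not be small, but every entry without a $\sigma'$-cancellation inherits it. The $n=0$ terms $\slashed{\Lambda}_0$ and the coefficients $\Cs^{\sigma'}_-\approx-1$ produce a second family of entries of size $\mH_0\approx\Cs_0/(2\tq^2)$:
\begin{itemize}
\item $\mA^\varsigma_-\approx2\mH_0/[(\varsigma+1)\tq+2]$;
\item $\mA^\varsigma_{-\sigma'}\approx-\mH_0$;
\item the non-$\wp$ part of every $\mA^\mu_+$;
\item the analogues of these in the $\varsigma\sigma$ rows.
\end{itemize}
With your own scaling $\Cs_0\sim\tomega^2/\tq$, this gives $\mH_0\sim\tomega^2/\tq^3$. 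That is always $\gg\tq$, and it is $\gg1$ once $|\tomega|\gg\tq^{3/2}$. So these entries are neither $\mathcal O(\Cs_0/\tq)$ nor $\mathcal O(\tq)$. The matrix is not $-\boldsymbol I$ plus one dominant column plus a small remainder; for example, $\mA^-_--1\approx\mH_0-1$. An expansion of the determinant about $-\boldsymbol I$ is therefore unjustified. Conversely, $\Cs^\sigma_+\approx\tq^2/(\tq^2-\sigma\tomega)$ is small, not $\sim1/\tomega$, because $(\tilde\alpha_\sigma-1)^2-\tq^2\approx-\tq^2$. The $(\cdot)_{+\sigma'}$ columns are harmless, $\approx2\mH^{-\sigma'}$ up to row factors.

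What is missing is a second cancellation, independent of the one you found. To leading order,
\[
\boldsymbol{\mathfrak A}\approx u\,y^{T},\qquad u=\bigl(1,1,\tfrac12,1,\tfrac12,1\bigr)^{T},\qquad y=\bigl(\mH_0(\wp+1),\,\mH_0,\,2\mH^-,\,-\mH_0,\,2\mH^+,\,-\mH_0\bigr)^{T},
\]
in the index order $+,-,++,-+,+-,--$. Hence $\Lambda\approx1-y^Tu=1-\mH_0\wp-(\mH^++\mH^-)$, and the bare $\mH_0$'s drop out only collectively. The paper exposes this by row and column operations: subtract row $++$ from row $+-$ and row $-+$ from row $--$, then add column $++$ to column $+-$ and column $-+$ to column $--$. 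This reduces the problem to a $4\times4$ determinant.

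The $\mathcal O(\tq)$ departures of these large entries from rank one cannot be discarded. They contribute $\mathcal O(\mH_0\tq)=\mathcal O(\Cs_0/\tq)$ and lead to $\mH_0(\wp-\tq)-\tfrac12\mH_0\wp\tq+(\mH^++\mH^-)\sim1$. The term $-\mH_0\tq$ comes from the factor $(1+\tq)^{-1}$ in the $\slashed{\Lambda}_0$ part of $\mA^+_+$. It exactly removes the factor $(1-\tomega^2/4\tq^2)^{-1}$ hidden in $\mH_0\wp$, since $\mH_0(\wp-\tq)=\frac{2\Cs_0\tq}{\tomega^2-\tq^4}[1+\mathcal O(\tq^2)]$. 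The other two terms supply the $-3\tq/4$ correction. If you lump these entries into ``small'', you are left with a relative error $\mathcal O(\tomega^2/\tq^2)$. That error is not covered by $\mathcal O(\tq)+\mathcal O(\tq^4/\tomega^2)$ when $\tq^{3/2}\ll|\tomega|\ll\tq$.

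Finally, the survivor of your $\sigma'$-sum is exactly $\wp/(\varsigma\tq+2)+\mathcal O(\tq^2)$. It comes from the $\tq^2$ in $\tq^2-\sigma'\tomega$. The $\sigma'$-odd parts of the denominators cancel against the $\delta_{\sigma'}$ part of $\Cs_0^{\sigma'}=-2\tq/(\tq^2-\sigma'\tomega)-\tq/(\tilde\alpha_{\sigma'}+1)$. So the $\tq^4/\tomega^2$ remainder is direct, and there is no $\tq^2/\tomega$ term that a parity argument would need to remove.
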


The proof of Proposition~\ref{prop:strong-b} invokes the truncation of all series expansions for $\boldsymbol{\mathfrak{A}}$; see Sec.~\ref{sec:asymptotics}. In the semiclassical limit, \eqref{eq:leading-order} reduces to~\eqref{eq:disp-SPP} which in turn reads  $(2\ell_{\mathrm{dB}})^4=\ell_C^3\ell_p$; cf.~\eqref{eq:classical-SP-disp2} (Sec.~\ref{sec:classical_disp}). Note that, by our approach, the constant on the right-hand side of~\eqref{eq:disp-SPP} is linear with the surface charge density $\eta_0$, in contrast to the case of the graphene SP.\cite{Jablan2013,Wunsch2006,Hwang-DasSarma07}

\section{Classical SP by Projected-Euler-Poisson system}
\label{sec:classical_disp} 
This section revisits a simplified hydrodynamic model that yields a classical nonretarded-SP dispersion law.\cite{Ritchie1957,Fetter1973} We show how~\eqref{eq:disp-SPP} is predicted if electrons are assumed to form a 2D inviscid fluid in the macroscopic limit. By linearization of the governing equations, we derive a dispersive IPDE for the induced surface number density. This model does not account for a variety of hydrodynamic effects in 2D materials, but captures a key mapping that implies the scaling law of~\eqref{eq:disp-SPP}. 
For reviews of electronic hydrodynamics, see Refs.~\onlinecite{Gurzhi-1968,Ritchie1957,Fetter1973,Giuliani2005quantum-book,Sun2018universal,LucasFong2018}.

Consider the geometry of Fig.~\ref{fig:geometry}. The fluid has surface number density $\eta(t, \vrp)$ and velocity field $\vbv(t, \vrp)$ in the $xy$-plane; $\boldsymbol e_z\cdot \vbv=0$. We assume that the vector valued field $(\eta(t,\vrp), \vbv(t,\vrp))$ is smooth in its domain, $\mathbb{R}\times \mathbb{R}^2$. The mass conservation statement reads
\begin{subequations}\label{eqs:Euler-Poisson}
\begin{equation}\label{eq:local_charge}
\partial_t \eta(t,\vrp) + \nabla_\parallel\cdot \big(\eta(t,\vrp)\, \vbv(t ,\vrp)\big)=0~,	\qquad \vrp\in \mathbb{R}^2~,
\end{equation}
where $\nabla_\parallel=(\partial_x, \partial_y)$. The deviation of the surface charge density, $e\eta(t, \vrp)$, from a given constant density $e \eta_0$ is the source in the Poisson equation for the scalar potential $\Phi$, viz.,
\begin{equation}\label{eq:Poisson}
	\varepsilon_0\{(-\Delta_{\vr})\Phi(t,\vr)\}=e\{\eta(t, \vrp)-\eta_0\}\delta(z)~,\qquad \vr=(\vrp, z)\in \mathbb{R}^3~.
\end{equation}
The tangential electric field, $-e \nabla_\parallel \Phi$, provides the forcing in the Euler momentum equation,
\begin{equation}\label{eq:Euler}
	\partial_t \vbv+(\vbv\cdot \nabla_\parallel)\vbv=-\frac{e}{m_*}(\nabla_\parallel\Phi)\big|_{z=0}~.
\end{equation}
\end{subequations}
We refer to~\eqref{eqs:Euler-Poisson} as the projected-Euler-Poisson system.\cite{Tadmor-priv} 

Let us linearize~\eqref{eqs:Euler-Poisson} around the steady state $(\eta, \vbv)=(\eta_0, 0)$. The Euler equations entail
\begin{subequations}\label{eqs:Euler-Poisson-lin}
\begin{equation}\label{eq:local_charge-lin}
\partial_t \eta_1(t, \vrp) + \eta_0\nabla_\parallel\cdot\vbv_1(t, \vrp)=0~,	
\end{equation}
\begin{equation}\label{eq:Euler-lin}
	\partial_t \vbv_1(t, \vrp)=-\frac{e}{m_*}(\nabla_\parallel\Phi(t,\vrp))\big|_{z=0}~,\quad \vrp\in \mathbb{R}^2~,
\end{equation}
where $(\eta_1, \vbv_1)$ corresponds to the solution $(\eta-\eta_0, \vbv)$ of the nonlinear system. We assume that $(\eta_1, \vbv_1)\to 0$ sufficiently fast as $|\vrp|\to \infty$ (for fixed $t$). Solving~\eqref{eq:Poisson}, we assert that 
\begin{equation}\label{eq:Poisson-soln-fracL}
\big(\nabla_\parallel \Phi(t,\vrp)\big)\big|_{z=0}= \frac{e}{2\varepsilon_0}\nabla_\parallel \{(-\Delta_\parallel)^{-1/2}\eta_1(t, \vrp)\}~,
\end{equation}
\end{subequations}
where the fractional Laplacian $(-\Delta_\parallel)^{-1/2}$ on $\mathbb{R}^2$ is defined by~\cite{Samko-book}
\begin{equation*}
	(-\Delta_\parallel)^{-1/2}h(\vrp)=\frac{1}{2\pi}\int_{\mathbb{R}^2}\dv \vrp'\ \frac{1}{|\vrp-\vrp'|}\,h(\vrp')~.
\end{equation*}

\begin{figure}
  \centering
\includegraphics[scale=.45,trim=0.5in 1.5in 0in 1.4in]{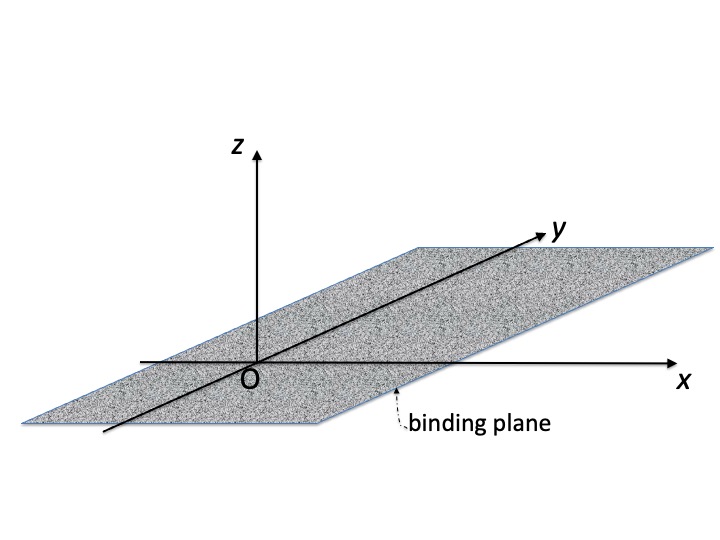}
  \caption{%
    Geometry of the problem. Charged particles move near the $xy$-plane in a homogeneous and isotropic unbounded medium. In the classical case (Sec.~\ref{sec:classical_disp}), the particle system is treated as an inviscid fluid in the $xy$-plane. In the quantum mechanical setting (Secs.~\ref{sec:scat_form}--\ref{sec:derivation-disp}), the particle motion is wavelike under the external potential $V(z)=-V_0a \,\delta(z)$ by mean-field Hartree-type dynamics.}
  \label{fig:geometry}
\end{figure}

We now derive a linear dispersive IPDE for the time evolution of $\eta_1$. By differentiating both sides of~\eqref{eq:local_charge-lin} with respect to time, interchanging the order of $\partial_t$ and $\nabla_\parallel$, we obtain 
\begin{equation}\label{eq:lin-disp_PDE}
	\partial_t^2\eta_1(t,\vrp)+\frac{e^2 \eta_0}{2m_*\varepsilon_0}\{(-\Delta_\parallel)^{1/2}\eta_1(t, \vrp)\}=0~,\qquad (t,\vrp)\in \mathbb{R}\times\mathbb{R}^2~,
\end{equation}
where $(-\Delta_\parallel)^{1/2}=(-\Delta_\parallel)(-\Delta_\parallel)^{-1/2}$. By linearity, the same IPDE holds for the induced surface charge density, $\varrho_1=e\eta_1$. Equation~\eqref{eq:lin-disp_PDE} incorporates a relation between the normal and tangential components of the electric field in the 2D fluid via the solution of the Poisson equation. This relation expresses a Neumann-to-Dirichlet-type map in the $xy$-plane.

By formally taking the Fourier transform of~\eqref{eq:lin-disp_PDE} with respect to $(t, \vrp)$, we find
\begin{subequations}\label{eqs:classical-SP-disp}
\begin{equation}\label{eq:classical-SP-disp1}
\left(-\omega^2 +\frac{e^2 \eta_0}{2m_*\varepsilon_0} |\vbq|\right)\hat\eta_1(\omega, \vbq)=0\qquad ((\omega, \vbq)\in \mathbb{R}^3)~. 
\end{equation}
This equation implies~\eqref{eq:disp-SPP} if the Fourier transform of $\eta_1$ is nonzero, $\hat\eta_1(\omega, \vbq)\neq 0$, viz.,\cite{Ferrell1958,Fetter1973,Andoetal1982}  
\begin{equation}\label{eq:classical-SP-disp2}
	\frac{\omega^2}{q}=\frac{e^2 \eta_0}{2m_*\varepsilon_0}~.
\end{equation}
\end{subequations}
Notably, the constant on the right-hand side of this equation is linear with the surface number density $\eta_0$. This scaling with $\eta_0$ is modified for the graphene SP (see Sec.~\ref{sec:conclusion}).\cite{Jablan2013,Hwang-DasSarma07,Wunsch2006}

The geometry of this classical setting could be altered so that the inviscid fluid is confined within a layer of nonzero width in the vicinity of the $xy$-plane. Thus, the spatial domain of the Euler equations would be a suitable subset of $\mathbb{R}^3$. The confinement could be implemented by use of a forcing in the momentum equation, or, alternatively, by imposition of boundary conditions for the mass flux and density. This extension is not further discussed here. 


\section{Quantum mechanical problem: Formulation}
\label{sec:scat_form} 

In this section, we focus on the one-particle mean-field quantum dynamics. We convert evolution law~\eqref{eq:Hartree} into a nonlinear integral equation, when the binding potential $V(z)$ is a negative delta function. For this purpose, we describe the ground state, $\psi_0$, of the unperturbed Hamiltonian, $\mathcal H^0$, and compute the Fourier transform of the forward spacetime propagator, $G_A(t, \vrp, z; z')$.  

\subsection{Ground state of unperturbed Hamiltonian $\mathcal H^0$}
\label{subsec:unpert-prob}

Consider the geometry of Fig.~\ref{fig:geometry} and the unperturbed Hamiltonian $\mathcal H^0$ given by~\eqref{eq:H0-def}. The Schr\"odinger equation reads
\begin{equation*}
	\mathcal H^0\psi(t,\vr)=\im \,\partial_t \psi(t,\vr)~,\qquad (t, \vr)=(t, \vrp, z)\in \mathbb{R}\times \mathfrak C\times \mathbb{R}~.
\end{equation*}
For stationary particle motion, we seek solutions of the form
\begin{equation*}
\psi(t,\vr)=	A^{-1/2} e^{-\im  w t}e^{\im \vbk\cdot \vrp}f(z)~,\qquad \vbk \in \mathbb{Z}_A^2~,
\end{equation*}
where $w$ is the particle energy and $\Vert f\Vert_{L^2(\mathbb{R})}=1$.

For the ground state $\psi=\psi_0$ of $\mathcal H^0$, we set $\vbk=0$. Thus, the  pair $(f, w)=:(f_b, \omega_b)$ solves
\begin{equation*}
	\{-\partial_z^2+V(z)-\omega_b\}f_b(z)=0\quad (\Vert f_b\Vert_{L^2(\mathbb{R})}=1)~,
\end{equation*}
where $\omega_b$ is the lowest eigenvalue of $\mathcal H^0_z:=-\partial_z^2+V$ with $\omega_b<0$. For a compactly supported $V(z)$ subject to~\eqref{eq:V-propert}, there is at least one bound state of $\mathcal H^0_z$.\cite{Simon76,Klaus77} To allow for an analytically tractable model, we use a negative binding potential $V(z)$ that is localized at one point, 
\begin{equation*}
	V(z)=-V_0a \delta(z)\qquad (V_0 a>0)~.
\end{equation*}
Hence, only one ($z$-dependent) bound state of $\mathcal H^0_z$ exists. By the continuity of $f_b(z)$ at $z=0$ and the jump condition $[\partial_z f_b(z)]_{z=0}:=(\partial_z f_b)|_{z=0^+}-(\partial_z f_b)|_{z=0^-}=-V_0a f_b(0)$, we compute 
\begin{equation*}
	f_b(z)= \sqrt{\beta} e^{-\beta |z|}\quad (z\in\mathbb{R})~;\quad  \omega_b=-\beta^2~,\quad \beta=\frac{V_0 a}{2}>0~.
\end{equation*}
Thus, we have
\begin{equation}\label{eq:bs}
	\psi_0(t,\vr)=\sqrt{\frac{\beta}{A}}\,e^{-\im \omega_b t} 
	e^{-\beta |z|}~,\quad (t, \vr)=(t, \vrp, z)\in\mathbb{R}\times\mathbb{\mathfrak C}\times \mathbb{R}\quad (\Vert \psi_0\Vert_{L^2(\mathfrak{C}\times \mathbb{R})}=1)~.
\end{equation}

 It is of some interest to consider other binding potentials, $V(z)$, that are localized at one point, in the spirit of Ref.~\onlinecite{Wu2002}. We note in passing that negative square-well potentials may allow for more than one bound states, and lie beyond the scope of our paper.

\subsection{Forward propagator $G_A$ and its Fourier transform}
\label{subsec:causal-prop}

Consider PDE~\eqref{eq:propagator-PDE}. The propagator, $G_A(t, \vrp, z; z')$, is $\mathfrak C$-periodic  in $\vrp$, vanishes if $t<0$ and is bounded in $(\vrp, z-z')$ for fixed $t>0$. By~\eqref{eq:FT-inv-discr}, we apply the Fourier representation 
\begin{equation}\label{eq:G-inv-periodic}
G_A(t,\vrp, z; z')=\frac{1}{A}\sum_{\vbk\in \mathbb{Z}_A^2}\int_\Gamma\frac{\dv w}{2\pi}\ e^{-\im w t+\im \vbk\cdot \vrp}\,\widehat G_A(w, \vbk, z; z')\qquad (t\in\mathbb{R})~. 
\end{equation}
The integration path is the line $\Gamma:=\{w \in\mathbb{C}\,\big|\, -\infty < \Rel\,w < +\infty,\, \Img\,w=\gamma_1>0\}$, where $\gamma_1$ is a small positive number which may eventually approach zero;\cite{PaleyWiener} see Fig.~\ref{fig:integration}. The function $\widehat{G}_A$  is holomorphic in $w$ if $\Img\,w>0$ for fixed $(\vbk, z, z')$, as suggested by the vanishing of $G_A$ for all $t<0$; and should be bounded as $|z-z'|\to \infty$ for $t>0$. By~\eqref{eq:propagator-PDE}, $\widehat G_A$ must satisfy
\begin{equation}\label{eq:FT-G-ODE}
	\{\partial_z^2-\alpha^2-V(z)\}\widehat G_A(w, \vbk, z; z')=-\delta(z-z'),\quad (z, z')\in \mathbb{R}^2~.
\end{equation}

A few comments on properties of $\widehat{G}_A$ are in order. Recall that $\alpha=\sqrt{k^2-w}$, which has a branch point at $w=k^2$. The top Riemann sheet for $\alpha$ as a function of $w$ is defined by $\Rel\,\alpha>0$ with the cut $\{w\in\mathbb{C} : \Rel\,w\ge k^2,\, \Img \,w=0\}$ for fixed $k$ (Fig.~\ref{fig:integration}). By inspection of~\eqref{eq:FT-G-ODE}, we see that $\widehat{G}_A$ satisfies $\widehat{G}_A(\cdot, z; z')=\widehat{G}_A(\cdot, z'; z)$ for all $(z, z')\in \mathbb{R}^2$; in addition, $\widehat{G}_A(\cdot, z; z')=\widehat{G}_A(\cdot, -z; -z')$. Thus, it suffices  to determine $\widehat{G}_A$ only for $0\le  |z|\le z'$. 

\begin{figure}
  \centering
\includegraphics[scale=.55,trim=1in 2.1in 0in 0in]{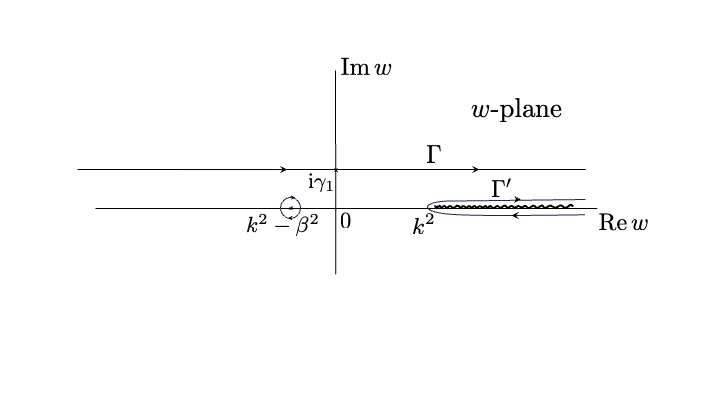}
  \caption{%
    Schematic of branch-cut configuration and contour integration in the $w$-plane in regard  to~\eqref{eq:G-inv-periodic}. The integration path $\Gamma$ is a line parallel to and at distance $\gamma_1$ from the real axis in the upper half $w$-plane. The wavy curve emanating from $w=k^2$ indicates the branch cut for  $\alpha(w)=\sqrt{k^2-w}$; and $w=w_p=k^2-\beta^2$ is the simple pole of $\widehat{G}_A$. The path $\Gamma$ is deformed to path $\Gamma'$ which is wrapped around the cut, picking up the residue at $w_p$ through a small circle around $w_p$. This change of path is used for the computation of $G_A$ in the macroscopic limit (Appendix~\ref{app:sec:propagator}).}
  \label{fig:integration}
\end{figure}

\subsubsection{Proof of Proposition~\ref{prop:FT-G}}
\label{sssec:proof-Prop1}

In view of differential equation~\eqref{eq:FT-G-ODE},  we keep $z'$ fixed, with $z'> 0$, and write
\begin{align*}
	\widehat G_A(w,\vbk, z;z')=C^+(z') \cosh(\alpha (z-z'))+C_1(z')\sinh(\alpha(z-z'))~,\quad 0< z\le z'~,
\end{align*}
where the integration constants $C^+$ and $C_1$ must be determined. Evidently, $\widehat G_A$ as a function of $z$ ($z> 0$) must be continuous at $z=z'$, and have a discontinuous first derivative across this point with a jump equal to $[\partial_z \widehat G_A]_{z=z'}=-1$. Let us set $\widehat G_A=C^+(z') e^{-\alpha (z-z')}$ for $0< z' \le  z$, with $\Rel\,\alpha>0$. By using $C^+(z')$, we find 
\begin{align*}
	C_1(z')=\alpha^{-1}-C^+(z')~,
\end{align*}
which in turn implies the formula
\begin{subequations}\label{eqs:FT-G-forms}
\begin{align}\label{eq:FT-G-form1}
	\widehat G_A(w,\vbk, z;z')=C^+(z') e^{-\alpha (z-z')}+\alpha^{-1}\sinh(\alpha(z-z'))~,\quad 0< z\le z'~.
\end{align}
On the other hand, we write
\begin{align}\label{eq:FT-G-form2}
	\widehat G_A(w, \vbk, z; z')=C^-(z') e^{\alpha z}~,\quad z\le  0<  z'~.
\end{align}
\end{subequations}

The task at hand is to find the integration constants $C^{\pm}(z')$. In view of the binding potential, $V(z)=-2\beta \delta(z)$, the function $\widehat G_A$ must be continuous with respect to $z$ at $z=0$ and subject to the jump discontinuity $[\partial_z\widehat G_A]_{z=0}=-2\beta \widehat G_A|_{z=0}$. These two conditions entail 
\begin{align*}
	C^+ e^{\alpha z'}-C^-&=\alpha^{-1}\sinh(\alpha z')~,\\
	\alpha C^+ e^{\alpha z'}-(2\beta-\alpha)C^-&= \cosh(\alpha z')~.
\end{align*}
The solution of this linear system for $(C^+, C^-)$ is
\begin{align*}
	(C^+(z'), C^-(z'))=\left(\frac{1}{2}\left(\alpha-\beta\right)^{-1}\left\{1-2\alpha^{-1}\beta e^{-\alpha z'}\sinh(\alpha z')\right\}, \frac{1}{2}(\alpha-\beta)^{-1}e^{-\alpha z'}\right)~,
\end{align*}
provided $\alpha\neq \beta$. By substitution of the above formulas for $C^\pm(z')$ into~\eqref{eqs:FT-G-forms}, we derive 
\begin{align*}
	\widehat G_A(w,\vbk, z; z')=\frac{1}{2} (\alpha-\beta)^{-1} e^{\alpha(z-z')}~, \quad z\le 0\le z'~,
\end{align*}
\begin{align*}
	\widehat G_A(w,\vbk, z; z')=\frac{1}{2\alpha}
	\left\{e^{\alpha(z-z')}+\beta (\alpha-\beta)^{-1} e^{-\alpha(z+z')} \right\}~, \quad 0\le z\le z'~.
\end{align*}
The two last formulas are combined into the expression
\begin{align*}
	\widehat G_A(w,\vbk, z; z')=\frac{1}{2\alpha}
	\left\{e^{-\alpha (z'-z)}+\frac{\beta}{\alpha-\beta} e^{-\alpha(z'+|z|)} \right\}\quad \mathrm{for\ all}\ z\le z'~,\ z'\ge 0~.
\end{align*}

By the symmetry of $\widehat G_A$ as a function of $(z,z')$, we can extend this formula to all $(z, z')\in \mathbb{R}^2$ and obtain~\eqref{eq:FT-G}. In view of the conditions $\alpha\neq \beta$ and $\Rel\,\alpha>0$, the second one of which ensures that $\widehat G_A$ is bounded as $|z-z'|\to \infty$, the Fourier domain of $\widehat G_A$ is  
\begin{align*}
	\mathfrak R_A=\{(w,\vbk)\in \mathbb{C}\times \mathbb{Z}_A^2\,
	\big|\,\alpha\neq \beta~,\ \Rel\,\alpha> 0\}~.\qquad \qquad \hskip1.5in\Box 
\end{align*}

In the macroscopic limit, the corresponding propagator, $G(t, \vrp, z; z')$, is found explicitly by the inversion of  formula~\eqref{eq:FT-G} according to~\eqref{eq:FT-inv}. The calculation is described in Appendix~\ref{app:sec:propagator}, with recourse to the modified path $\Gamma'$ of Fig.~\ref{fig:integration}. We mention in passing that the result is
\begin{subequations}\label{eqs:G-TD}
\begin{align}\label{eq:G-TD-total}
G(t,\vrp,z;z')&=G_{f}(t,\vrp,z-z')\left\{1 + e^{\im\pi/4}\beta \sqrt{\pi t}\, e^{-\im \tfrac{|z-z'|^2}{4t}}e^{\im \beta^2 t-\beta(|z|+|z'|)}\right. \notag\\
& \qquad \left. \times \mathrm{erfc}\biggl( \frac{e^{-\im\pi/4}}{2\sqrt{t}}(|z|+|z'|)-e^{\im\pi/4}\beta \sqrt{t}\biggr)\right\}~,\quad t>0~,
\end{align}
and $G(t,\vrp,z;z')\equiv 0$ if $t<0$, where $G_{f}(t, \vr; z')$ denotes the 3D free-particle propagator 
\begin{align}\label{eq:G-TD-fs}
	G_{f}(t,\vr)=e^{-\im\pi/4}(4\pi t)^{-3/2} e^{\im \tfrac{r^2}{4t}}~,\quad t> 0~.
\end{align}
\end{subequations}
In the above, $\mathrm{erfc}(\cdot)$ is the complementary error function, which is defined for all $\xi\in\mathbb{C}$ by
\begin{align*}
	\mathrm{erfc}(\xi)=\frac{2}{\sqrt{\pi}}\int_{\xi}^{+\infty} e^{-s^2}\,\dv s~.
\end{align*}

\subsection{Nonlinear integral equation}
\label{subsec:nonlin-int_eq}
Next, we apply basic scattering theory to formally derive a nonlinear integral equation in the place of the governing Hartree-type equation~\eqref{eq:Hartree}. This integral equation will express the forward-in-time propagation of initial data for $\psi-\psi_0$ by reference to the corresponding time-dependent state that evolves under the unperturbed Hamiltonian, $\mathcal H^0$.\cite{Nguyen24} 

Let $\psi(t, \vr)$  satisfy~\eqref{eq:Hartree} under some initial data that is $\mathfrak{C}$-periodic in $(x,y)$, setting $\psi(t_0, \vr)=\pi_0(\vr)$ for fixed $t_0\in\mathbb{R}$; $\pi_0,\,\psi(t, \cdot) \in L^2(\mathfrak{C}\times \mathbb{R})$. The scattering wave function is
\begin{align*}
	\psi_s(t, \vr):= \psi(t, \vr)-\psi_0(t, \vr)~,
\end{align*}
where $\psi_0$ is given by~\eqref{eq:bs}. This $\psi_s(t,\vr)$ satisfies the initial-value problem 
\begin{subequations}\label{eqs:scatt-wavfcn}
\begin{align}
&(\im\partial_t-\mathcal H^0)\psi_s(t, \vr)=\{\upsilon\star(|\psi_0+\psi_s|^2-|\psi_0|^2)(t,\vr)\}(\psi_0+\psi_s)(t,\vr)~,\ \, t>t_0~; \label{eq:scatt-wavfcn-PDE}\\
 &\psi_s=\pi_0-\psi_0\ \mbox{at}\ t=t_0~.\label{eq:scatt-wavfcn-IC}
\end{align}
\end{subequations}

Now consider the non-causal backward propagator $G^{-}_A(t, \cdot):=G_A(-t, \cdot)^*$, which obeys 
\begin{align}\label{eq:Gstar-t-rev}
(\im\partial_t-\mathcal H^0)G^-_A(t, \vrp, z; z')=-\delta(\vrp)\delta(z-z')\,\delta(t)~;\quad G^-_A(t, \cdot)\equiv 0,\quad  t>0~.
\end{align}
Let us carry out the following contractions: Equation~\eqref{eq:scatt-wavfcn-PDE} with $G^-_A(t-t', \vrp-\vrp', z; z')^*$, and the complex conjugate of~\eqref{eq:Gstar-t-rev} with $\psi_s(t,\vr)$. We subtract the resulting equations, and invoke the Hermiticity of $\mathcal H^0$ as well as the  $\mathfrak{C}$-periodicity of related functions in $\vrp$ and the symmetry property $G_A(\cdot, \vrp, \cdot)=G_A(\cdot, -\vrp, \cdot)$. Hence, we obtain 
\begin{align}\label{eq:forw-time-integral}
	&\psi_s(t,\vr)=-\im \int_{\mathfrak{C}\times\mathbb{R}}G_A(t-t_0, \vrp-\vrp', z; z')\psi_s(t_0, \vr')\,\dv\vr'\notag\\
	&-\int_{t_0}^\infty \int_{\mathfrak{C}\times\mathbb{R}}G_A(t-t', \vrp-\vrp', z; z')\{\upsilon\star(|\psi_0+\psi_s|^2-|\psi_0|^2)\}(\psi_0+\psi_s)(t',\vr')\,\dv\vr'\,\dv t'~,
\end{align}
where $(t, \vr)=(t, \vrp, z)\in (t_0, \infty)\times\mathfrak{C}\times \mathbb{R}$. The integral of the convolution ($\star$) operation inside the braces ($\{\,\}$) in the second line of~\eqref{eq:forw-time-integral} is evaluated at $(t', \vr')$. 

We compare the time evolution of $\psi_s$ expressed by~\eqref{eq:forw-time-integral} to the corresponding evolution brought about by the unperturbed Hamiltonian, $\mathcal H^0$, under the same initial data (at time $t_0$). Hence, we consider the unperturbed evolution problem
\begin{align*}
\im\partial_t \psi^0(t, \vr)=\mathcal H^0 \psi^0(t, \vr)~,\ t>t_0~;\quad \psi^0=\pi_0\ \mbox{at}\ t=t_0~.
\end{align*}
Without further ado, by defining $\psi_s^0(t,\vr):=\psi^0(t,\vr)-\psi_0(t,\vr)$ we have
\begin{align}\label{eq:psi-s-zero}
	\psi_s^0(t,\vr)= -\im \int_{\mathfrak{C}\times\mathbb{R}} G_A(t-t_0, \vrp-\vrp', z; z') \psi_s^0(t_0, \vr')\,\dv\vr'~,\quad (t, \vr)\in (t_0, \infty)\times\mathfrak{C}\times \mathbb{R}~.
\end{align}
Since $\psi_s=\psi_s^0$ at time $t=t_0$, the above equation for $\psi_s^0(t,\vr)$ is combined with~\eqref{eq:forw-time-integral} to furnish
\begin{align}\label{eq:nonl-int-eq-t0}
&\psi_s(t,\vr)+\int_{-\infty}^\infty \int_{\mathfrak{C}\times\mathbb{R}}G_A(t-t',\vrp-\vrp', z;z')\{\upsilon\star(|\psi_0+\psi_s|^2-|\psi_0|^2)\}(\psi_0+\psi_s)(t',\vr')\,\dv\vr'\,\dv t'\notag\\
&\qquad =\psi_s^0(t,\vr)~,	\quad (t, \vr)\in (t_0, \infty)\times \mathfrak{C}\times\mathbb{R}~,
\end{align}
which is the desired integral equation for $\psi_s(t,\vr)$. The right-hand side, $\psi^0_s(t,\vr)$, is viewed as a forcing in this equation. In~\eqref{eq:nonl-int-eq-t0}, we have applied $\psi(t, \vr)\equiv 0$ for $t< t_0$. 

In the remainder of this paper, we will study the linearized version of~\eqref{eq:nonl-int-eq-t0} in the macroscopic limit, replacing $|\psi_0+\psi_s|^2$ by $|\psi_0|^2+2\Rel(\psi_0\psi_s^*)$ (Sec.~\ref{subsec:plasmon-def}). We expect that the application of the Fourier transform  in $(t, \vrp)$ to the linearized integral equation  yields an equation of the form
\begin{equation}\label{eq:op-scattering}
	\hat{u}_{fs}(w, \vbq, z)+\mathcal G_1(w,\vbq)(\hat{u}_{fs}(w, \vbq,\cdot)+\hat{u}_{fs}(2\omega_b-w^*, -\vbq,\cdot)^*)(z)=g_f(w, \vbq, z)~.
\end{equation}
Here, $\hat{u}_{fs}(w, \vbq, z)=f_0(z) \hat{u}_s(w, \vbq, z)$, $f_0(z)=e^{-\beta|z|}$, $u_s(t,\vrp, z)$ takes the  place of the scattering solution $\psi_s(t,\vr)$ of the nonlinear problem, $g_f$ is the forcing that corresponds to $f_0\psi_s^0$, and $(w, \vbq)$ is the variable dual to $(t, \vrp)$   (Sec.~\ref{sec:disp_qm}); cf.~Ref.~\onlinecite{Nguyen24}. Moreover, $\mathcal G_1$ ($\mathcal G_1(w,\vbk): L^2(\mathbb{R})\rightarrow L^2(\mathbb{R})$) is a linear integral  operator that comes from the transformed forward propagator, $\widehat{G}$, acts on $z$-dependent scattering amplitudes and accounts for particle confinement; see Sec.~\ref{subsec:int_eq-z}.  
The replacements $w=\omega_b+ \omega$ and $w=\omega_b-\omega^*$ in~\eqref{eq:op-scattering} suggest the transformation $\hat{u}_{fs}\mapsto \hat{F}_s$ with $\hat{F}_s(\omega, \vbq, z)=\hat{u}_{fs}(\omega_b+\omega, \vbq; z)+\hat{u}_{fs}(\omega_b-\omega^*, -\vbq; z)^*$. The resolvent equation calls for defining the operator    
\begin{equation}\label{eq:Lindhard-op}
	 \mathcal L(\omega, \vbq)= 1+\mathcal G_1(\omega_b+\omega, \vbq)+\mathcal G_1(\omega_b-\omega, \vbq)~,
\end{equation}
by virtue of the identity $\mathcal G_1(\omega_b-\omega^*, -\vbq)^*=\mathcal G_1(\omega_b-\omega, -\vbq)=\mathcal G_1(\omega_b-\omega, \vbq)$ for the Hermitian adjoint, $\mathcal G_1(\cdot)^*$, of operator $\mathcal G_1(\cdot)$. In this setting, if  $g_f\equiv 0$ in~\eqref{eq:op-scattering}, vectors in the null space of $\mathcal L(\omega, \vbq)$ in principle represent  SP-type wave  excitations.  In Sec.~\ref{sec:derivation-disp}, we compute such nontrivial vectors by solving directly the linearized version of~\eqref{eq:nonl-int-eq-t0}. Their existence  implies a relation of the form $\Lambda(\omega, q)=0$, the SP-type dispersion relation (Sec.~\ref{sec:derivation-disp}).

\section{Linearized integral equation}
\label{sec:disp_qm}
In this section, we linearize integral equation~\eqref{eq:nonl-int-eq-t0} and define the SP-type wave (Sec.~\ref{subsec:plasmon-def}). We also formulate an integral equation for the $z$-dependent scattering amplitude (Sec.~\ref{subsec:int_eq-z}). 

\subsection{Definition of SP-type wave}
\label{subsec:plasmon-def}
Consider~\eqref{eq:nonl-int-eq-t0} as $t_0\to -\infty$, together with the identity $|\psi_0+\psi_s|^2=|\psi_0|^2+2\Rel(\psi_s\psi_0^*)+|\psi_s|^2$.
By the neglect of $|\psi_s|^2$ and replacement of $\psi_s$ by $u_s$, for $(t, \vr)\in \mathbb{R}\times\mathfrak{C}\times\mathbb{R}$ we obtain  
\begin{align}\label{eq:lin-int-eq}
u_s(t,\vr)+\int\limits_{\mathfrak{C}\times\mathbb{R}} \int\limits_{\mathbb{R}}G_A(t-t',\vr;\vr')\{\upsilon\star(u_s^*\psi_0+u_s\psi_0^*)\}\psi_0(t',\vr')\,\dv t'\,\dv \vr'=0~,	
\end{align}
where $\psi_s^0(t,\vr)$ is taken to vanish in view of~\eqref{eq:psi-s-zero}.
Since $e^{\im \omega_b t}\psi_0(t, \vr)$ is real, we write~\eqref{eq:lin-int-eq} as 
\begin{align}\label{eq:lin-int-eq-transf-hom}
\slashed{u}_s(t,\vr)+\frac{\beta}{A}\int\limits_{\mathfrak{C}\times\mathbb{R}} \int\limits_{\mathbb{R}}\slashed{G}_A(t-t',\vr;\vr')\,\{\upsilon\star(\slashed{u}_s f_0)(t',\vr')\}f_0(z')\,\dv t'\,\dv \vr'=0~, 
\end{align}
where $(t, \vr)=(t,\vrp,z)\in \mathbb{R}\times\mathfrak{C}\times\mathbb{R}$. Here, we define 
\begin{align}
&\slashed{U}(t,\cdot):=e^{\im\omega_b t}U(t,\cdot)+\mbox{c.c.}\quad  (U=u_s, G_A)~.\label{eq:slashed-def}
\end{align}

An option is to introduce the SP-type wave as a nontrivial solution $u_s$ of~\eqref{eq:lin-int-eq}; see also~\eqref{eq:scat_FE}. Physically, this choice suggests that this wave accounts for emission and absorption of radiation with respect to the ground state binding energy, $\omega_b$. Thus, we can write $u_s$ as a linear superposition of waves with energies $\omega_b\pm \omega$. Alternatively, we can define the SP-type excitation wave as a nontrivial solution $\slashed{u}_s$ of~\eqref{eq:lin-int-eq-transf-hom}. 
Note that the linearized operator $\mathcal L(\omega, \vbq)$ of~\eqref{eq:Lindhard-op} comes directly from the left-hand side of~\eqref{eq:lin-int-eq-transf-hom}. This $\mathcal L(\omega, \vbq)$ is described  in Sec.~\ref{subsec:int_eq-z}.

 We choose to define the SP-type excitation via plane-wave solutions $\slashed{u}_s$. By virtue of~\eqref{eq:slashed-def}, the excitation energy $\omega$ is introduced relative to the unperturbed binding energy $\omega_b$. 
 
 \medskip 
 
\noindent{\bf Definition 1} {\em The SP-type excited wave is any nontrivial solution to~\eqref{eq:lin-int-eq-transf-hom} of the form~\eqref{eq:plane-ansatz-SP} for suitable $\omega\in\mathbb{R}$ and $F\in L^2(\mathbb{R})$, and every $\vbq \in \mathbb{Z}_A^2$.} 

Ansatz~\eqref{eq:plane-ansatz-SP} is a particular real-valued function $\slashed{u}_s(t ,\vr)$, according to the transformation $u_s\mapsto \slashed{u}_s$ of~\eqref{eq:slashed-def}. Since $\slashed{G}_A$ and $f_0$ are real valued in~\eqref{eq:lin-int-eq-transf-hom}, we can equivalently seek nontrivial plane-wave solutions of the form
\begin{align}\label{eq:plane-ansatz-SP-cmpx}
	\slashed{u}_s(t, \vr)=u^{SP}(t,\vr)= A^{-1/2}F(z) e^{-\im\omega t+\im \vbq \cdot \vrp}~,
\end{align}
where $(\omega_b\pm \omega, \vbq)\in \mathfrak R_A$ in view of Proposition~\ref{prop:FT-G} ($F\in L^2(\mathbb{R})$). Ansatz~\eqref{eq:plane-ansatz-SP-cmpx} in principle implies a relation between $\omega$ and $\vbq$. Recall that $\Rel\,\alpha(\omega_b\pm \omega, \vbq)>0$ and $\vbq\in \mathbb{Z}_A^2$.
  In the macroscopic limit, for nontrivial even scattering amplitude $F(z)$ we will explicitly derive the SP-type dispersion relation (Sec.~\ref{sec:derivation-disp}). Physically, for low enough $q$ ($q< \beta$), we expect (but have not proved) that the dispersion relation has only real solutions $\omega(q)$. 

\subsection{Integral equation for scattering amplitude in macroscopic limit}
\label{subsec:int_eq-z}
Next, starting from~\eqref{eq:lin-int-eq-transf-hom} with~\eqref{eq:plane-ansatz-SP-cmpx}, we derive~\eqref{eqs:F-int_eq} as $A\to +\infty$ with $\tfrac{N}{A}= \eta_0=\mathcal O(1)$, thus proving Proposition~\ref{prop:lin-int_eq}. Furthermore, we show that~\eqref{eq:int_eq-F} is reduced to a Volterra-type integral equation for the function $f_0(z) F(z)$. This description reveals the linearized operator $\mathcal L(\omega, \vbq)$ which we loosely introduced in~\eqref{eq:Lindhard-op}. We also briefly discuss the possibilities of even and odd solutions for the scattering amplitude $F(z)$.

\subsubsection{Proof of Proposition~\ref{prop:lin-int_eq}}
\label{sssec:proof-Prop2}

We replace $\slashed{u}_s$ by ansatz~\eqref{eq:plane-ansatz-SP-cmpx} in integral equation~\eqref{eq:lin-int-eq-transf-hom}. After a formal manipulation of the requisite convolution integral, we obtain
\begin{align*}
	\upsilon\star (f_0\slashed{u}_s)(t,\vr)=\frac{1}{\sqrt{A}}\frac{e^2N}{\varepsilon_0}e^{-\im\omega t+i\vbq\cdot\vrp}\int_{\mathbb{R}}\dv z'\,\check{G}_e(\vbq, z-z')\,\{f_0(z')F(z')\}
\end{align*}
where
\begin{align*}
	\check{G}_e(\vbq, z)=\frac{1}{4\pi}\int_{\mathbb{R}^2}\dv\vR\ \frac{e^{-\im\vbq\cdot \vR}}{\sqrt{z^2+R^2}}~,\quad \vbq\in \mathbb{Z}_A^2\setminus \{0\}~.
\end{align*}
The last integral expresses the Fourier transform (at $\vbq\neq 0$) with respect to $\vrp$ of the fundamental solution, $G_e(\vrp, z)$, to the Laplace equation in $\mathbb{R}^3$. This $G_e(\vr)$ satisfies $-\Delta_{\vr}G_e(\vr)= \delta(\vrp) \delta(z)$, with $\vr=(\vrp, z)\in\mathbb{R}^3$. The Fourier transform in $\vrp$ of this PDE furnishes the differential equation $(-\partial_z^2+q^2) \check{G}_e(\vbq, z)=\delta(z)$ ($z\in\mathbb{R}$). The transformed function, $\check{G}_e(\vbq, z)$, must be continuous in $z$ and have a derivative with the jump discontinuity 
$[\partial_z \check{G}_e]_{z=0}=-1$ at $z=0$, while $\check{G}_e(\vbq,z)\to 0$ as $|z|\to \infty$. Hence, we compute 
\begin{align*}
\check{G}_e(\vbq, z)=\frac{e^{-q|z|}}{2q}~,\quad q=|\vbq|>0~.
\end{align*}

By the change of variable $(t', \vrp')\mapsto (\tau,\vs)=(t-t',\vrp-\vrp')$, \eqref{eq:lin-int-eq-transf-hom} is reduced to
\begin{align*}
	F(z)+\frac{e^2(N/A)}{\varepsilon_0}\frac{\beta}{2q}\int_{\mathbb{R}}\int_{\mathfrak{C}}\int_{\mathbb{R}}\slashed{G}_A(\tau,\vs, z;z')e^{\im\omega\tau-\im\vbq\cdot \vs}f_0(z') \{\mathcal D_q(f_0 F)(z')\}\,\dv\tau\,\dv \vs\,\dv z'=0~,
\end{align*}
where the linear operator $\mathcal D_q: L^2(\mathbb{R}) \rightarrow L^2(\mathbb{R})$ is defined in~\eqref{eq:D-op}. By carrying out the integration with respect to $(\tau, \vs)$, in view of~\eqref{eq:slashed-def} for $U=G_A$, we obtain
\begin{align*}
	F(z)+\frac{\beta}{2q} \frac{e^2 (N/A)}{\varepsilon_0}\int_{\mathbb{R}}\widehat{\slashed{G}}_A(\omega, \vbq, z; z')\,f_0(z')\,\{\mathcal D_q(f_0 F)(z')\}\,\dv z'=0~,\ z\in\mathbb{R}~,
\end{align*}
with $\omega\in\mathbb{C}$, $\Rel\,\alpha(\omega_b\pm \omega, \vbq)>0$ and $\vbq\in\mathbb{Z}_A^2\setminus \{0\}$, where  
\begin{equation*}
	\widehat{\slashed{G}}_A(\omega, \vbq, z; z')=\widehat{G}_A\big(\omega_b+\omega, \vbq, z; z'\big)+\widehat{G}_A\big(\omega_b-\omega^*, -\vbq, z; z'\big)^*~.
\end{equation*}
Now apply the identity $\widehat{G}_A(\omega_b-\omega^*, -\vbq, z; z')^*=\widehat{G}_A(\omega_b-\omega, -\vbq, z; z')=\widehat{G}_A(\omega_b-\omega, \vbq, z; z')$, by virtue of the property $\alpha(w, \vbk)^*=\alpha(w^*, \vbk)$ in the top Riemann sheet of $\alpha(w, \vbk)$.
In the macroscopic limit, we thus obtain~\eqref{eqs:F-int_eq}. In this limit, the function $\widehat{G}_A$ is replaced by $\widehat{G}$; and the Fourier domain $\mathfrak{R}_A$ of $\widehat G_A(w, \vbk, z; z')$ becomes $\mathfrak{R}=\{(w,\vbk)\in \mathbb{C}\times \mathbb{R}^2\,
	\big|\,\alpha(w, \vbk)\neq \beta~,\ \Rel\,\alpha(w, \vbk)> 0\}$; cf.~Proposition~\ref{prop:FT-G}. The conditions $\Rel\,\alpha(\omega_b\pm\omega, \vbq)>0$, where the binding energy is $\omega_b=-\beta^2<0$, entail the restriction $|\Arg(q^2+|\omega_b|\pm \omega)|< \pi$. \hfill  $\Box$

\subsubsection{Integral equation for $f_0 F$}
\label{sssec:Fredholm-IE}
Equation~\eqref{eq:int_eq-F} is recast into the  form
\begin{align}\label{eq:Fredholm-eq}
	\dF(z)+\int_{-\infty}^\infty \sG(\omega, \vbq, z; z') \dF(z')\,\dv z'=0\quad  \forall\ z\in\mathbb{R}~,\quad \dF(z):=f_0(z) F(z)~,
\end{align}
where the kernel $\sG(\omega, \vbq, z; z')$ is described  below. We will see that the structure of this kernel as a function of $(z, z')$ suggests an integral equation of the Volterra type. We will seek even solutions to~\eqref{eq:Fredholm-eq} via the (one-sided) Laplace transform in $z$ ($z>0$). This approach leads to a somewhat uncommon functional equation for the Laplace-transformed solution.

By inspecting the integral of~\eqref{eq:int_eq-F}, we write 
\begin{subequations}\label{eq:int-eqs-G1}
\begin{align}
	\sG(\omega,\vbq, z; z')&=\frac{\beta}{2q}\frac{e^2 \eta_0}{\varepsilon_0}f_0(z) \left\{{\mathcal G}(\omega_b+\omega,\vbq, z; z')+{\mathcal G}(\omega_b-\omega,-\vbq, z; z')\right\}~,\label{eq:G-G1}
\end{align}
where
\begin{align}
	{\mathcal G}(w,\vbq, z; z')&:=\int_{-\infty}^{\infty}\widehat{G}(w,\vbq, z;z'')f_0(z'') e^{-q|z''-z'|}\ \dv z''={\mathcal G}(w,-\vbq, z; z')\label{eq:G1-def}
\end{align}
and $\widehat{G}(w,\vbq, z;z')$ is defined by~\eqref{eq:FT-G} in the macroscopic limit, with Fourier domain $\mathfrak{R}=\{(w,\vbq)\in \mathbb{C}\times \mathbb{R}^2\,
	\big|\,\alpha(w, \vbq)\neq \beta~,\ \Rel\,\alpha(w, \vbq)> 0\}$. By~\eqref{eq:G-G1}, $\sG(\omega,\vbq, z; z')$ involves the following values of $\alpha(w, \vbk)=\sqrt{k^2-w}$:
\begin{align}\label{eq:alpha-pm-def}
\alpha_{\pm}:=\alpha(\omega_b\pm \omega, \vbq)=\sqrt{\beta^2+q^2\mp \omega}~,\quad \Rel\,\alpha_\pm>0~.
\end{align}
\end{subequations}

By comparison of~\eqref{eq:Fredholm-eq} to~\eqref{eq:op-scattering}, we set 
\begin{subequations}\label{eqs:G1-L-ops}
\begin{align}\label{eq:G1-op-def}
\mathcal G_1(w,\vbq, z; z')&=\frac{\beta}{2q}\frac{e^2 \eta_0}{\varepsilon_0}f_0(z){\mathcal G}(w,\vbq, z; z')~.
\end{align}
Therefore, the linearized operator $\mathcal L(\omega, \vbq)$, loosely introduced in~\eqref{eq:Lindhard-op}, is now defined by  
\begin{align}
\mathcal L(\omega, \vbq, z; z')&=\delta(z, z')+\sG(\omega, \vbq, z; z')\notag\\
&=\delta(z, z')+	\frac{\beta}{2q}\frac{e^2 \eta_0}{\varepsilon_0}f_0(z) \left\{{\mathcal G}(\omega_b+\omega,\vbq, z; z')+{\mathcal G}(\omega_b-\omega,\vbq, z; z')\right\}~,
\end{align}
\end{subequations}
where $\delta(z, z')=\delta(z-z')$. Equation~\eqref{eq:Fredholm-eq} reads $\mathcal L(\omega, \vbq)\dF=0$.

Let us compute $\mathcal G(w,\vbq, z; z')$. By virtue of the symmetry property
\begin{align}\label{eq:symm-G1-ker}
	\mathcal G(w,\vbq, z; -z')=\mathcal G(w,\vbq, -z; z')~,\quad  (z,z')\in\mathbb{R}^2~,
\end{align}
it suffices to derive formulas for $\mathcal G(w,\vbq, z; z')$ if $(z,z')\in \mathbb{R}_+\times\mathbb{R}$.
By carrying out the integration in~\eqref{eq:G1-def} in view of~\eqref{eq:FT-G}, we obtain the following formulas.

\noindent {\em Region with $0\le  z'\le z$}. In this case, we find
\begin{subequations}\label{eqs:form-ker}
\begin{align}\label{eq:form-ker-I}
	\mathcal G(w,\vbq, z; z')&= \frac{2\beta q\,e^{-\alpha z-q z'}}{(\alpha-\beta)[(\alpha+\beta)^2-q^2](\alpha-\beta+q)}
	+\frac{e^{-q(z-z')-\beta z}}{(\alpha-\beta-q)(\alpha+\beta+q)}\notag\\
	&\quad -\frac{q\,e^{-\alpha(z-z')-\beta z'}}{\alpha [(\alpha-\beta)^2-q^2]}
	-\frac{\beta q\, e^{-\alpha(z+z')-\beta z'}}{\alpha(\alpha-\beta)[(\alpha+\beta)^2-q^2]}~.
\end{align}

\noindent{\em Region with $0\le z\le z'$}. Our calculation yields
\begin{align}\label{eq:form-ker-II}
	\mathcal G(w,\vbq, z; z')&= \frac{2\beta q\,e^{-\alpha z-q z'}}{(\alpha-\beta)[(\alpha+\beta)^2-q^2](\alpha-\beta+q)}
	+\frac{e^{-q(z'-z)-\beta z}}{(\alpha-\beta+q)(\alpha+\beta-q)}\notag\\
	&\quad -\frac{q\,e^{-\alpha(z'-z)-\beta z'}}{\alpha [(\alpha+\beta)^2-q^2]}
	-\frac{\beta q\, e^{-\alpha(z+z')-\beta z'}}{\alpha(\alpha-\beta)[(\alpha+\beta)^2-q^2]}~.
\end{align}

\noindent{\em Region with $z'\le 0\le z$}. We now compute
\begin{align}\label{eq:form-ker-III}
	\mathcal G(w,\vbq, z; z')&= -\frac{2\beta q\,e^{-\alpha z+q z'}}{(\alpha-\beta)[(\alpha+\beta)^2-q^2](\alpha-\beta-q)}+\frac{e^{-q(z-z')-\beta z}}{(\alpha-\beta-q)(\alpha+\beta+q)}\notag\\
	&\quad -\frac{q\,e^{-\alpha(z-z')+\beta z'}}{(\alpha-\beta) [(\alpha+\beta)^2-q^2]}~.
\end{align}
\end{subequations}
In~\eqref{eqs:form-ker}, we use $\alpha=\alpha(w, \vbq)=\sqrt{q^2-w}$.

Equation~\eqref{eq:symm-G1-ker} enables us to extend~\eqref{eqs:form-ker} to $z\le 0$, for all $z'\in\mathbb{R}$.  Thus, we readily verify that, for fixed $z\ge 0$ and $(w, \vbq)\in \mathfrak{R}=\{(w,\vbq)\in \mathbb{C}\times \mathbb{R}^2\,
	\big|\,\alpha(w, \vbq)\neq \beta, \Rel\,\alpha(w, \vbq)> 0\}$, this $\mathcal G(w,\vbq, z; z')$ is a continuously differentiable function of $z'$, in accord with the integral of~\eqref{eq:G1-def} with~\eqref{eq:FT-G}. 
Note that for real $\omega$ and $\alpha_\pm >0$, the kernels ${\mathcal G}(\omega_b\pm \omega,\vbq, z; z')$ are real symmetric; hence, the operator $\mathcal L(\omega, \vbq)$ is Hermitian. We expect that the energy spectrum $\omega(q)$ of the dispersion relation lies in this regime within our model.

\subsubsection{On even- and odd-in-$z$ scattering amplitudes}
\label{sssec:even-odd}
 
We will routinely argue that~\eqref{eq:Fredholm-eq}, written as $\mathcal L(\omega, \vbq)\dF=0$, can be recast into two decoupled equations for the even and odd parts of the function $\dF(z)$. Suppose that a nontrivial solution $\dF$ of~\eqref{eq:Fredholm-eq} with energy $\omega=\omega(q)$ exists ($\dF\in L^2(\mathbb{R})$). Let $\mathcal P$ be the parity operator on $L^2(\mathbb{R})$, which amounts to reflection through the origin, $z=0$; thus, $\mathcal P^2=1$. 
The property $\sG(\omega,\vbq, -z; z')=\sG(\omega,\vbq, z; -z')$ implies that $[\mathcal P, \mathcal L(\omega, \vbq)]=\mathcal P\mathcal L(\omega, \vbq)-\mathcal L(\omega, \vbq)\mathcal P=0$, i.e., the operators $\mathcal P$ and $\mathcal L$ commute. If $\dF^{\mathrm{e}}:=\tfrac{1}{2}(1+\mathcal P)\dF$ and $\dF^{\mathrm{o}}:=\tfrac{1}{2}(1-\mathcal P)\dF$, the even ($\dF^{\mathrm{e}}$) and odd ($\dF^{\mathrm{o}}$) part of $\dF$, the contraction of $\mathcal P$ with $\mathcal L(\omega, \vbq)(\dF^{\mathrm{e}}+\dF^{\mathrm{o}})=0$ yields 
\begin{equation*}
\mathcal L(\omega, \vbq)(\dF^{\mathrm{e}}-\mathcal \dF^{\mathrm{o}})=0~,
\end{equation*}
since $\mathcal P\dF^{\mathrm{e}}=\dF^{\mathrm{e}}$ and $\mathcal P\dF^{\mathrm{o}}=-\dF^{\mathrm{o}}$. Hence, we find 
\begin{align}\label{eq:F-even-odd}
	\mathcal L(\omega, \vbq)\dF^{\mathrm{e,o}}=0~.
\end{align}
If the null space of $\mathcal L(\omega(q), \vbq)$ has dimension $1$ for given energy $\omega(q)$,  the nontrivial amplitude $\dF(z)$ of~\eqref{eq:Fredholm-eq} is an even  or  odd function, obeying one of the two equations in~\eqref{eq:F-even-odd}.

Let us rewrite~\eqref{eq:Fredholm-eq} by distinguishing the cases of even and odd amplitudes $\dF_{\mathrm{e,o}}(z)$ with domain $\mathbb{R}_+$. In accord with~\eqref{eq:F-even-odd}, we obtain	
\begin{align}\label{eq:int-eq-pos_domain}
	\dF^{\mathrm{e,o}}(z)+\int_{0}^\infty \left\{\sG(\omega, \vbq, z; z')\pm \sG(\omega, \vbq, -z; z')\right\} \dF^{\mathrm{e,o}}(z')\,\dv z'=0\quad  \mbox{all}\ z\ge 0~,
\end{align}
 where $\dF^{\mathrm{e}, \mathrm{o}}\in L^2(\mathbb{R}_+)$. For definiteness, we will focus on even amplitudes, $\dF^{\mathrm{e}}$, intuitively expecting that these should be characterized by a lower energy, $\omega(q)$. Our methodology for even $\dF(z)$, based on the application of the Laplace transform in positive $z$, should be applicable to odd-in-$z$ amplitudes as well. Only the former case is studied in this paper. 
 
\section{Dispersion relation for even scattering amplitude}
\label{sec:derivation-disp}
In this section, we solve~\eqref{eq:int-eq-pos_domain} for even-in-$z$ amplitudes, $\dF=\dF^{\mathrm{e}}$. In particular, 
we apply the Laplace transform with respect to positive $z$,\cite{Widder-book} and derive a functional equation for the transformed solution, $\breve\dF(s)$. This equation relates five values of $\breve\dF(s)$. By an analytic-continuation procedure, we show that $\breve\dF(s)$ is a meromorphic function having a sequence of simple poles in the left half $s$-plane ($\Img\,s<0$). By using these singularities, we extract an expansion for $\breve\dF(s)$ in terms of partial fractions via the Mittag-Leffler theorem.\cite{Ahlfors-book} 

We assume that $\omega\in\mathbb{C}$ with  $0< q<|\omega|/q< \beta$. Recall that $\Rel\,\alpha_\sigma>0$ where 
\begin{equation*}
\alpha_\sigma=\sqrt{\beta^2+q^2-\sigma\omega}\qquad (\sigma=\pm)~.
\end{equation*}
In particular, if $0<q<\omega/q< \beta$ (for $\omega>0$) then $\alpha_->\alpha_+ >0$.

By~\eqref{eq:Fredholm-eq}, we have $\dF(z)=e^{-\beta z} F(z)$ where $F$ enters ansatz~\eqref{eq:plane-ansatz-SP}; see Definition~1. We posit that $F\in L^1(\mathbb{R}_+)\cap L^2(\mathbb{R}_+)$, anticipating that the amplitude $F(z)$ of~\eqref{eq:plane-ansatz-SP} is bounded for all $z\ge 0$ and decays exponentially with $z$ as $z\to +\infty$. 

The Laplace transform of $\dF(z)$ is  
\begin{subequations}\label{eqs:LT}
\begin{align}\label{eq:LT-dir}
	\breve\dF(s)=\int_0^\infty e^{-sz}\dF(z)\,\dv z~,\quad \Rel\,s> \Rel\,s_*~,
\end{align}
for some $s_*\in\mathbb{C}$ with $\Rel\,s_*\le -\beta$. This integral defines an analytic function if $\Rel\,s>\Rel\,s_*$.\cite{Widder-book} The point $s_*$ is the singularity of $\breve\dF(s)$ that is closest to and to the left of the line $\{\Rel\,s=-\beta\}$ in the complex $s$-plane;\cite{Widder-book} see~\eqref{eq:s-star} in Sec.~\ref{subsec:funct-eq}.
The inversion of~\eqref{eq:LT-dir} gives\cite{Widder-book}
\begin{align}\label{eq:LT-inv}
	\dF(z)=\frac{1}{2\pi \im}\int\limits_{\gamma_0-\im \infty}^{\gamma_0+\im \infty}\breve \dF(s) e^{sz}\,\dv s~,\quad \gamma_0> \Rel\,s_*\quad (\mbox{all}\ z\ge 0)~.
\end{align}
\end{subequations}
In fact, $\breve\dF(s)$ turns out to be meromorphic, as discussed in Sec.~\ref{subsec:funct-eq}. Thus, \eqref{eq:LT-inv} may lead to an explicit result for $F(z)$ via the residue theorem, if the integration path is appropriately shifted away from and to the left of the line $\{\Rel\,s=\Rel\,s_*\}$; see Sec.~\ref{sssec:explct-F}.  

\subsection{Functional equation and analyticity of $\breve\dF(s)$}
\label{subsec:funct-eq} 
By~\eqref{eq:int-eq-pos_domain} with an even amplitude $\dF(z)$, the governing integral equation is
\begin{align}\label{eq:int-eq-pos_domain-rep}
	\dF(z)=-\int_{0}^\infty \mathfrak K(z,z') \dF(z')\,\dv z'~,\  z\ge 0~;\quad \mathfrak K(z, z'):=\sG(\omega, \vbq, z; z')+ \sG(\omega, \vbq, z; -z')~.
\end{align}
Notice that the kernel $\mathfrak K(z,z')$ is real symmetric if $\omega$ is real with $0<q<|\omega|/q<\beta$. We can further manipulate the integral on the right-hand side of~\eqref{eq:int-eq-pos_domain-rep}.
By use of~\eqref{eqs:form-ker}, after some algebra we find 
\begin{align}\label{eq:integral-RHS}
	\int_{0}^\infty \mathfrak K(z,z') \dF(z')\,\dv z'&=\frac{\beta}{2q}\frac{e^2 \eta_0}{\varepsilon_0}
	\sum_{\sigma=\pm}\left\{\breve \dF(q)\left(-\frac{4q^2\beta e^{-(\alpha_\sigma+\beta) z}}{(\alpha_\sigma-\beta)[(\alpha_\sigma+\beta)^2-q^2][(\alpha_\sigma-\beta)^2-q^2]}\right. \right. \notag\\
	&\qquad \qquad \left. +\sum_{\varsigma=\pm 1}\frac{e^{-(2\beta+\varsigma q)z}}{(\alpha_\sigma-\beta-\varsigma q)(\alpha_\sigma+\beta+\varsigma q)}\right)\notag\\
	&\qquad - \frac{\breve \dF(\alpha_\sigma+\beta)}{(\alpha_\sigma+\beta)^2-q^2}\frac{q}{\alpha_\sigma}\left( e^{-(\alpha_\sigma+\beta) z}\frac{\alpha_\sigma+\beta}{\alpha_\sigma-\beta}+e^{(\alpha_\sigma-\beta) z} \right) \notag\\
	&+ e^{-2\beta z}\sum_{\varsigma=\pm 1}\varsigma\left[\frac{1}{(\alpha_\sigma-\beta-\varsigma q)(\alpha_\sigma+\beta+\varsigma q)}\int_0^z e^{-\varsigma q(z-z')} \dF(z')\,\dv z' \right.\notag\\
	&\left.\left.+\frac{q}{\alpha}\frac{1}{(\beta+\varsigma\alpha_\sigma)^2-q^2}\int_0^z e^{(\beta+\varsigma\alpha_\sigma)(z-z')}\dF(z')\,\dv z'\right]\right\}~,\quad z\ge 0~,
\end{align}
where the integrals containing $\dF(z)$ over the interval $[0, \infty)$ have been replaced by the values of $\breve\dF(s)$ at $s=q, \alpha_\pm+\beta$. 
Because the convolution integrals of the form $\int_0^z (\cdot)\,\dv z'$, shown in the last two lines of~\eqref{eq:integral-RHS}, are multiplied by the exponential $e^{-2\beta z}$, the overall integral equation~\eqref{eq:int-eq-pos_domain-rep} resembles a nonconvolution Volterra equation on $\mathbb{R}_+$.\cite{Gripenberg-book,Brunner-book}
The associated kernels are exponentials. Therefore, we anticipate that the Laplace transform with respect to $z$ of~\eqref{eq:int-eq-pos_domain-rep} would lead to an equation involving $\breve \dF(s)$ and $\breve \dF(s+2\beta)$ with the latter being multiplied by a sum of partial fractions of the form $C_{\#}\,(s-s_{\#})^{-1}$, where $C_{\#}$ is a coefficient that depends on $s_{\#}$ and $s_{\#}=-2\beta\pm q, -\beta+\alpha_\pm, -\beta-\alpha_\pm$, as shown below. The ensuing functional equation for $\breve\dF(s)$ would also involve the values $\breve \dF(q)$ and $\breve \dF(\beta+\alpha_\pm)$ which must be determined self consistently, as parts of the solution $\breve\dF(s)$. 

By application of the Laplace transform in $z$ to~\eqref{eq:int-eq-pos_domain-rep}, we obtain the functional equation
\begin{align}\label{eq:funct-eq}
\breve\dF(s)&=\frac{\beta}{2q}\frac{e^2 \eta_0}{\varepsilon_0}\sum_{\sigma=\pm}\left\{\breve \dF(q)\left(\frac{4q^2\beta}{(\alpha_\sigma-\beta)[(\alpha_\sigma+\beta)^2-q^2][(\alpha_\sigma-\beta)^2-q^2]}\,\frac{1}{s+\beta+\alpha_\sigma}\right. \right. \notag\\
	&\hphantom{\frac{\beta}{2q}\frac{e^2 \eta_0}{\varepsilon\varepsilon_0}}\qquad \qquad \quad \left. +\sum_{\varsigma=\pm 1}\frac{1}{(\beta+\varsigma q)^2-\alpha_\sigma^2}\,\frac{1}{s+2\beta+\varsigma q}\right)\notag\\
	&\hphantom{\frac{\beta}{2q}\frac{e^2 \eta_0}{\varepsilon\varepsilon_0}}\quad \qquad +\frac{\breve \dF(\alpha_\sigma+\beta)}{(\alpha_\sigma+\beta)^2-q^2}\frac{q}{\alpha_\sigma}\left(\frac{\alpha_\sigma+\beta}{\alpha_\sigma-\beta}\,\frac{1}{s+\beta+\alpha_\sigma}+\frac{1}{s+\beta-\alpha_\sigma} \right) \notag\\
	&\hphantom{\frac{\beta}{2q}\frac{e^2 \eta_0}{\varepsilon\varepsilon_0}}\qquad 
	+ \breve\dF(s+2\beta)\sum_{\varsigma=\pm 1}\varsigma\left[\frac{1}{(\beta+\varsigma q)^2-\alpha_\sigma^2} \,\frac{1}{s+2\beta+\varsigma q} \right.\notag\\
	&\hphantom{\frac{\beta}{2q}\frac{e^2 \eta_0}{\varepsilon\varepsilon_0}\sum_{\varsigma=\pm 1}}\left.\left.\qquad -\frac{q}{\alpha_\sigma}\frac{1}{(\beta+\varsigma\alpha_\sigma)^2-q^2}\,\frac{1}{s+\beta-\varsigma \alpha_\sigma}\right]\right\}~,\quad \Rel\,s> \Rel\,s_*~.	
\end{align}
The restriction $\Rel\,s> \Rel\,s_*$ is typically included in the formulation as a reminder for the convergence of Laplace integral~\eqref{eq:LT-dir} and the possible presence of a singularity at $s=s_*$. We will analytically continue $\breve\dF(s)$ to $\{\Rel\,s< \Rel\,s_*\}$ by exploiting the structure of~\eqref{eq:funct-eq}. 

A few implications of~\eqref{eq:funct-eq} should be pointed out; for more details, see Appendix~\ref{app:sec:prop-funct-eq}. The points $s=-\beta+\alpha_\pm$ and $s=-2\beta+q$ are removable singularities of $\breve\dF(s)$, although they appear as poles of coefficients in~\eqref{eq:funct-eq}; see Appendix~\ref{app:subsec:remov-sing}. Equation~\eqref{eq:funct-eq} entails that $\breve\dF(s)$ is analytic for $\Rel\,s> -\beta-\min\{\Rel\,\alpha_+, \Rel\,\alpha_-\}$   (Appendix~\ref{app:subsec:reg-analt}). By successively applying the shift $s\rightarrow s+2\beta$ $n$ times in~\eqref{eq:funct-eq} ($n=0,\,1,\,\ldots$), we can show that $\breve\dF(s)$ has simple poles at  
\begin{subequations}\label{eqs:singularities}
\begin{equation}\label{eq:poles}
s=s_n^\sigma=-\alpha_\sigma-(2n+1)\beta\quad \mbox{and}\quad s=\slashed{s}_n=-q-2(n+1)\beta~,\quad \mbox{all}\ n\in\mathbb{N}\quad (\sigma=\pm)~;
\end{equation}
see Appendix~\ref{app:subsec:poles-res}. $\breve\dF(s)$ is analytic everywhere in $\mathbb{C}$ with the exception of the points $\{s_n^\sigma \}_{n\in \mathbb{N}}$ and $\{\slashed{s}_n\}_{n\in\mathbb{N}}$. 
By~\eqref{eq:LT-dir}, $s_*$ is the pole of $\breve\dF(s)$ with the smallest value of $|\Rel\,s+\beta|$, e.g.,
\begin{align}\label{eq:s-star}
	s_*= -\beta-\alpha_+~,\qquad  \mbox{if}\quad 0<q^2<\omega< q\beta~. 
\end{align}
\end{subequations}
If $\omega\in\mathbb{C}$ and $0<q^2<|\omega|< q\beta$, we have $s_*=-\beta-\alpha_{\sigma_*}$ with $\sigma_*={\rm Argmin}_{\sigma\in\{\pm\}}(\Rel\,\alpha_\sigma)$. 

The residues of $\breve\dF(s)$ are expressed by 
\begin{align}\label{eq:res}
R_n^\sigma:=\Res_{\ s=s_n^\sigma}\breve\dF(s)=\Lambda_n^\sigma R_0^\sigma~,\quad 	\slashed{R}_n:=\Res_{\ s=\slashed{s}_n}\breve\dF(s)=\slashed{\Lambda}_n \slashed{R}_0~,
\end{align}
for all $n\in\mathbb{N}$ and $\sigma=\pm$; see Appendix~\ref{app:subsec:poles-res}. The zeroth-order residues are computed as 
\begin{subequations}\label{eqs:residues0}
\begin{align}
	R_0^\sigma&= \frac{\beta e^2 \eta_0}{2\varepsilon_0}\left\{\frac{4q\beta \breve\dF(q)}{(\alpha_\sigma-\beta)[(\alpha_\sigma+\beta)^2-q^2][(\alpha_\sigma-\beta)^2-q^2]}+\frac{1}{\alpha_\sigma}\left[\frac{\alpha_\sigma+\beta}{\alpha_\sigma-\beta} \frac{\breve\dF(\beta+\alpha_\sigma)}{(\alpha_\sigma+\beta)^2-q^2} \right.\right.\notag\\
	& \hphantom{\frac{\beta e^2 \eta_0}{2\varepsilon_0}} \left.\left. \qquad 
	+ \frac{\breve\dF(\beta-\alpha_\sigma)}{(\alpha_\sigma-\beta)^2-q^2} \right]\right\}~, \label{eq:resR0}\\
\slashed{R}_0
&= \frac{2e^2 \eta_0}{\varepsilon_0}\frac{\beta^2}{4q^2\beta^2-\omega^2}\{\breve\dF(q)+\breve\dF(-q) \}~.\label{eq:slashed-resR0}
\end{align}
\end{subequations}
Furthermore, the factors $\Lambda_n^\sigma$ and $\slashed{\Lambda}_n$ are given by  $\Lambda_0^\sigma=\slashed{\Lambda}_0=1$ (if $n=0$, $\sigma=\pm$), and
\begin{subequations}\label{eqs:residues-n}
\begin{align}
	\Lambda_n^\sigma&=\biggl(-\frac{\beta e^2 \eta_0}{\varepsilon_0} \biggr)^n\, \prod_{j=1}^n \left\{\frac{1}{[\alpha_\sigma+(2j-1)\beta]^2-q^2}\sum_{\sigma'=\pm}\frac{1}{(\alpha_\sigma+2j\beta)^2-\alpha_{\sigma'}^2}\right\}\notag\\
	&= \frac{1}{n!}\biggl(-\frac{e^2 \eta_0}{2\varepsilon_0} \biggr)^n\prod_{j=1}^{n}\left\{\frac{1}{j\beta+\alpha_\sigma}\left[1+\frac{\sigma \omega}{4j\beta(j\beta+\alpha_\sigma)-2\sigma\omega} \right]\right.\notag\\
	& \left.\hphantom{\frac{1}{n!}\biggl(-\frac{e^2 \eta_0}{2\varepsilon_0} \biggr)^n}\times \frac{1}{[1+(2j-1)^2]\beta^2+2(2j-1)\beta\alpha_\sigma-\sigma\omega}\right\}~,\quad  n\ge 1~,\label{eq:resRn-Lambda-n}
\end{align}
\begin{align}	
	\slashed{\Lambda}_n&=\frac{1}{n!}\biggl(-\frac{e^2\eta_0}{4\varepsilon_0} \biggr)^n\, \prod_{j=1}^n \left\{\frac{1}{j\beta+q}\sum_{\sigma'=\pm}\frac{1}{[(2j+1)\beta+q]^2-\alpha_{\sigma'}^2}\right\}\notag\\
	&=\frac{1}{n!}\biggl(-\frac{e^2\eta_0}{\varepsilon_0} \biggr)^n \prod_{j=1}^n
	\frac{\beta}{j\beta+q}\,\frac{2j(j+1)\beta+(2j+1)q}{4\beta^2[2j(j+1)\beta+(2j+1)q]^2-\omega^2}~,\quad n\ge 1~.
	\label{eq:slashed-resRn-Lambda-n}
\end{align}
\end{subequations}
Notably, $R_n^\sigma$ and $\slashed{R}_n$ tend to $0$ at least as fast as $(n!)^{-4}$ in the limit $n\to +\infty$ (Sec.~\ref{subsec:disp-soln}). 

Evidently, $\breve\dF(s)$ approaches $0$ as $s\to \infty$ in the complex $s$-plane away from the poles, $s_n^\pm$ and $\slashed{s}_n$, which asymptotically approach the negative real axis as $n\to +\infty$ for fixed $(\omega, q)\in \mathbb{C}\times \mathbb{R}_+$. This property of $\breve\dF(s)$ is shown in Appendix~\ref{app:subsec:lim-F}. We have not been able to find nontrivial solutions to~\eqref{eq:funct-eq} in simple closed form.\cite{Sahoo-Kannappan-book,Czerwik-book} We proceed to develop a nontrivial solution $\breve{\dF}(s)$ via a convergent series expansion in the complex $s$-plane.

\subsection{Solution of functional equation for $\breve\dF(s)$}
\label{subsec:disp-soln} 
Next, we state and prove a key lemma (Lemma~1); and use it to derive the SP-type dispersion relation, proving Proposition~\ref{prop:disp_reln}. In particular, we expand $\breve\dF(s)$ in a series of partial fractions. This series linearly depends on six values of $\breve\dF(s)$, for $0<q<|\omega|/q<\beta$. We will formulate a homogeneous linear system for these quantities, and extract a relation of the form $\Lambda(\omega, q)=0$ for nontrivial solutions. We will thus describe the amplitude $F(z)$.

\subsubsection{Lemma for $\breve\dF(s)$}
\label{sssec:exp-ML}
The following result will be invoked in the proof of Proposition~\ref{prop:disp_reln}.

\medskip 

\noindent{\bf Lemma 1} 
	{\em Any solution $\breve\dF(s)$ of~\eqref{eq:funct-eq} admits the expansion} 
\begin{align}\label{eq:ML-exp}
	\breve\dF(s)=\sum_{n=0}^\infty \frac{R_n^+}{s-s_n^+}+\sum_{n=0}^\infty \frac{R_n^-}{s-s_n^-}+\sum_{n=0}^\infty \frac{\slashed{R}_n}{s-\slashed{s}_n}~,
\end{align}
{\em which converges uniformly for all $s\in\mathbb{C}\setminus\{s_n^\pm~,\,\slashed{s}_n\}_{n\ge 0}$.}

\medskip

\noindent {\em Proof.} First, we show that each of the series on the right-hand side of~\eqref{eq:ML-exp} converges for all $s\in\mathbb{C}$ with $s\neq s_n^\pm~,\,\slashed{s}_n$ ($\forall\ n\in\mathbb{N}$), for fixed parameters $\beta$, $q$ and $\omega\in\mathbb{C}$ with $0<q<|\omega|/q<\beta$. To this end, in view of~\eqref{eq:res}--\eqref{eqs:residues-n} we invoke the inequalities (for $\sigma=\pm 1$)
\begin{subequations}\label{eqs:ineq-sums}
\begin{align}\label{eq:ineq-sum1}
	\left|\frac{1}{[\alpha_\sigma+(2j-1)\beta]^2-q^2}\sum_{\sigma'=\pm}\frac{1}{(\alpha_\sigma+2j\beta)^2-\alpha_{\sigma'}^2}\right|\le \frac{C_0}{j^4 \beta^4}~,
\end{align}
\begin{align}\label{eq:ineq-sum2}
	\left|\frac{1}{q+j\beta}\sum_{\sigma'=\pm}\frac{1}{[(2j+1)\beta+q]^2-\alpha_{\sigma'}^2}\right|\le \frac{C_0}{j^3 \beta^3}~,\quad  \forall\ j\ge 1~,
\end{align}
\end{subequations}
where $C_0>0$ is a (parameter- and $j$-independent) numerical constant. Inequality~\eqref{eq:ineq-sum2} is implied by the estimate
\begin{align*}
(q+j\beta)\left|[(2j+1)\beta+q]^2-\alpha_{\sigma'}^2\right|\ge j\beta\left|(4j^2+4j)\beta^2+2(2j+1)\beta q-|\omega|\right|\ge 4j^3\beta^3~.
\end{align*}
In regard to~\eqref{eq:ineq-sum1}, we distinguish the cases $\sigma'=\sigma$ and $\sigma'\neq \sigma$, recalling that $\Rel\,\alpha_\sigma>0$. In the case with $\sigma'=\sigma$, the estimates 
\begin{equation*}
\Rel[2(2j-1)\beta\alpha_\sigma-\sigma\omega]\ge 2(2j-1)\beta\Rel\,\alpha_\sigma-|\omega|\ge  -\beta q\quad (\sigma=\pm~,\ j\ge 1)
\end{equation*}
and $|(\alpha_\sigma+2j\beta)^2-\alpha_\sigma^2|=4|j^2\beta^2+j\beta\alpha_\sigma|\ge 4j^2\beta^2$ entail
\begin{align*}
	\left|[\alpha_\sigma+(2j-1)\beta]^2-q^2\right|\,\left|(\alpha_\sigma+2j\beta)^2-\alpha_\sigma^2\right|&\ge  [(4j^2-4j+2)\beta^2-\beta q](4j^2\beta^2)\notag\\
	&\ge 4j^2(2j-1)^2\beta^4\ge 4j^4\beta^4 \quad (\sigma=\pm~,\ j\ge 1)~.
\end{align*}
In the case with $\sigma'\neq \sigma$ for~\eqref{eq:ineq-sum1}, we apply the estimate
\begin{equation*}
	\Rel[4j\beta\alpha_\sigma+(\sigma'-\sigma)\omega]\ge -|\sigma'-\sigma|\,|\omega|\ge -2\beta q~.
\end{equation*}
Hence, we obtain (for $\sigma\sigma'=-1$, $\sigma=\pm 1$ and $j\ge 1$)
\begin{align*}
	\left|[\alpha_\sigma+(2j-1)\beta]^2-q^2\right|\,\left|(\alpha_{\sigma}+2j\beta)^2-\alpha_{\sigma'}^2\right|&\ge (2j-1)^2\beta^2\,\left|4j^2\beta^2+4j\beta\alpha_\sigma+(\sigma'-\sigma)\omega  \right| \notag\\
	&\ge j^2\beta^2(4j^2\beta^2-2\beta q)\ge j^2(4j^2-2)\beta^4\ge 2j^4\beta^4~.
\end{align*}
Therefore, we verify \eqref{eqs:ineq-sums} with $C_0=3/4$.

Thus, by~\eqref{eqs:residues-n} we establish the estimates
\begin{align*}
	|\Lambda_n^\sigma|\le  \frac{(C_{\beta})^n}{(n!)^4}~,\quad |\slashed{\Lambda}_n|\le \biggl(\frac{C_\beta}{4}\biggr)^n  \frac{1}{(n!)^4}~,\quad \mbox{all}\ n\ge 1~;\quad C_\beta:=\frac{C_0 e^2 \eta_0}{\beta^3\varepsilon_0}~.
\end{align*}
By~\eqref{eq:res}, we have $R_n^\sigma=\Lambda_n^\sigma R_0^\sigma$ and $\slashed{R}_n=\slashed{\Lambda}_n \slashed{R}_0$. Furthermore, by~\eqref{eq:poles} $|s_n^\sigma/n|$ and $|\slashed{s}_n/n|$ approach ($n$-independent) constants as $n\to +\infty$. Each series entering the right-hand side of~\eqref{eq:ML-exp} converges for all complex $s\neq s_n^\sigma$ and $\slashed{s}_n$ (all $n\in\mathbb{N}$), as evinced by comparison of the magnitude of each series to the convergent series $\sum_{n=0}^\infty C^n/[(n+1) (n!)^4]$ for some constant $C>0$. In fact, each of the series in~\eqref{eq:ML-exp} converges uniformly on any compact subset of $\mathbb{C}$ by omission of the terms that become infinite at points of that set.\cite{Ahlfors-book}

 Now recall that $\breve\dF(s)$ is a meromorphic function with simple poles at $s=s_n^\pm\,,\ \slashed{s}_n$ ($n\in\mathbb{N}$); see Sec.~\ref{subsec:funct-eq} and Appendix~\ref{app:subsec:poles-res}. By application of the Mittag-Leffler theorem,\cite{Ahlfors-book} we write
\begin{align}\label{eq:g-analt}
\breve\dF(s)=\sum_{n=0}^\infty \frac{R_n^+}{s-s_n^+}+\sum_{n=0}^\infty \frac{R_n^-}{s-s_n^-}+\sum_{n=0}^\infty \frac{\slashed{R}_n}{s-\slashed{s}_n}+g(s)~,	
\end{align}
where $g(s)$ is an entire function. We claim that $g(s)\equiv 0$.

Let us prove this claim. By functional equation~\eqref{eq:funct-eq}, we find that $g(s)$ must obey 
\begin{align}\label{eq:funct-eq-g}
g(s)&=\frac{\beta}{2q}\frac{e^2 \eta_0}{\varepsilon_0} 
	\sum_{\sigma=\pm}\sum_{\varsigma=\pm 1}\varsigma\left\{\frac{1}{(\beta+\varsigma q)^2-\alpha_\sigma^2} \,\frac{g(s+2\beta)-g(-\varsigma q)}{s+2\beta+\varsigma q} \right.\notag\\
	&\hphantom{\frac{\beta}{2q}\frac{e^2 \eta_0}{\varepsilon\varepsilon_0}\sum_{\varsigma=\pm 1}}\left.\qquad -\frac{q}{\alpha_\sigma}\frac{1}{(\beta+\varsigma\alpha_\sigma)^2-q^2}\,\frac{g(s+2\beta)-g(\beta+\varsigma\alpha_\sigma)}{s+\beta-\varsigma \alpha_\sigma}\right\}~,\quad \mbox{all}\ s\in\mathbb{C}~.	
\end{align}
We have argued that $\breve \dF(s)$ approaches $0$ uniformly in the $s$-plane as $s\to \infty$ away from the poles of $\breve \dF(s)$; see Appendix~\ref{app:subsec:lim-F}. Each of the series of partial fractions in~\eqref{eq:g-analt} has the same property, because the series convergence is uniform and each term approaches $0$ as $s\to\infty$ away from the poles. Therefore, from~\eqref{eq:g-analt}, $g(s)$ tends to $0$ uniformly as $s\to \infty$ away from the poles. To simplify the analysis, let us assume that all poles lie in the negative real axis, which happens if $\omega\in\mathbb{R}$ with $0<q<|\omega|/q<\beta$. To figure out what happens to $g(s)$ for real $s$ as $s\to -\infty$, we notice that~\eqref{eq:funct-eq-g} does not allow for an unbounded $|g(s)|$ in this limit. In fact, the only asymptotic behavior of $g(s)$ compatible with~\eqref{eq:funct-eq-g} as $s\to -\infty$ is $g(s)=\mathcal O(1/s)$; see Appendix~\ref{app:sec:asympt-g}. To verify this property heuristically, one may try $g(s)\sim c\, (-s)^\gamma$ as $s\to -\infty$ in~\eqref{eq:funct-eq-g} and apply dominant balance to obtain $\gamma=-1$ for some constant $c$. Hence, we conclude that $g(s)$ is bounded everywhere in the complex $s$-plane. By Liouville's theorem, $g(s)$ is a constant which should be identically zero because of the limit of $g(s)$ as $s\to \infty$. 
Thus, \eqref{eq:g-analt} reduces to~\eqref{eq:ML-exp}.  This argument is extended to complex $\omega$ with $0<q<|\omega|/q<\beta$, since the poles $s_n^\pm$ or $\slashed{s}_n$ that have nonzero imaginary parts asymptotically approach the negative real axis as $n\to +\infty$. \hfill $\Box$

\medskip 

The series expansions entering~\eqref{eq:ML-exp} in Lemma~1 affect the character of the dispersion relation, $\Lambda(\omega, q)=0$ (Proposition~\ref{prop:disp_reln}). In particular, $\Lambda(\omega, q)$ is described by series that converge uniformly with $\tilde q$, $\tilde \omega$ and $\Cs_0(\beta)$, where $0<\tilde q^2 < |\tilde \omega| < \tilde q$ and $\Cs_0$ is bounded.

\subsubsection{Proof of Proposition~\ref{prop:disp_reln}}
\label{sssec:proof-Prop3}
Let us first delineate the main idea of the proof. The residues $R_n^\pm$ and $\slashed{R}_n$ entering expansion~\eqref{eq:ML-exp} of Lemma~1 depend linearly  on the six values $\breve\dF(\varsigma q)$ and $\breve\dF(\beta+\varsigma\alpha_\sigma)$, for $\sigma, \varsigma=\pm$; cf.~\eqref{eq:res}--\eqref{eqs:residues-n}. By expressing each of these values of $\breve\dF(s)$ in terms of expansion~\eqref{eq:ML-exp}, we self-consistently formulate a $6\times 6$ linear homogeneous system of equations. The requirement of nontrivial solutions to this system leads to the desired dispersion relation.

By applying~\eqref{eq:ML-exp} at $s=\varsigma q$ and $s= \beta+\varsigma\alpha_\sigma$ from Lemma~1, we have
\begin{subequations}\label{eqs:sys-disp}
\begin{align}
	\breve\dF(\varsigma q)&= \sum_{\varsigma'=\pm}\mA_{\varsigma'}^{\varsigma} \breve \dF(\varsigma' q)+\sum_{\varsigma', \sigma'=\pm}\mA_{\varsigma'\sigma'}^{\varsigma}\breve\dF(\beta+\varsigma'\alpha_{\sigma'})~,\label{eq:sys-disp-I}\\
	\breve\dF(\beta+\varsigma\alpha_{\sigma})&= \sum_{\varsigma'=\pm}\mA_{\varsigma'}^{\varsigma\sigma} \breve \dF(\varsigma' q)+\sum_{\varsigma', \sigma'=\pm}\mA_{\varsigma'\sigma'}^{\varsigma\sigma}\breve\dF(\beta+\varsigma'\alpha_{\sigma'})~;\quad \varsigma, \sigma=\pm~.\label{eq:sys-disp-II}	
\end{align}
\end{subequations}
By~\eqref{eq:res} and~\eqref{eqs:residues0}, after some algebra we express the coefficients $\mA^{\varsigma}_{\varsigma'}$, $\mA^{\varsigma}_{\varsigma'\sigma'}$, $\mA^{\varsigma\sigma}_{\varsigma'}$ and $\mA^{\varsigma\sigma}_{\varsigma'\sigma'}$ by the following formulas (for $\varsigma, \sigma, \varsigma', \sigma'=\pm$): 
\begin{subequations}\label{eqs:matrix-forms}
\begin{align}
\mA^{\varsigma}_{+}&=\frac{2\beta^2 e^2\eta_0}{\varepsilon_0}\frac{1}{4\beta^2 q^2-\omega^2}\sum_{n=0}^\infty\left\{\sum_{\sigma'=\pm} \Cs_0^{\sigma'}\frac{\Lambda_n^{\sigma'}}{\varsigma q-s_n^{\sigma'}}+\frac{\slashed{\Lambda}_n}{\varsigma q-\slashed{s}_n}  \right\}~,\\ 
\mA^{\varsigma}_-&=\frac{2\beta^2 e^2\eta_0}{\varepsilon_0}\frac{1}{4\beta^2q^2-\omega^2} \sum_{n=0}^\infty \frac{\slashed{\Lambda}_n}{\varsigma q-\slashed{s}_n}~,\\
\mA^{\varsigma\sigma}_{+}&=\frac{2\beta^2 e^2 \eta_0}{\varepsilon_0}\frac{1}{4\beta^2q^2-\omega^2}
\sum_{n=0}^\infty\left\{ \sum_{\sigma'=\pm} \Cs_0^{\sigma'}\frac{\Lambda_n^{\sigma'}}{\beta+\varsigma\alpha_\sigma-s_n^{\sigma'}}+\frac{\slashed{\Lambda}_n}{\beta+\varsigma\alpha_\sigma-\slashed{s}_n}\right\}~,\\
\mA^{\varsigma\sigma}_{-}&=\frac{2\beta^2 e^2 \eta_0}{\varepsilon_0}\frac{1}{4\beta^2q^2-\omega^2} \sum_{n=0}^\infty \frac{\slashed{\Lambda}_n}{\beta+\varsigma\alpha_\sigma-\slashed{s}_n}~,
\end{align}
\begin{align}
\mA^{\varsigma}_{\varsigma'\sigma'}&=\frac{2\beta^2 e^2 \eta_0}{\varepsilon_0}\frac{1}{4\beta^2q^2-\omega^2}\Cs_{\varsigma'}^{\sigma'} \sum_{n=0}^\infty \frac{\Lambda_n^{\sigma'}}{\varsigma q-s_n^{\sigma'}}~,\\ 
\mA^{\varsigma\sigma}_{\varsigma'\sigma' }&=\frac{2\beta^2 e^2\eta_0}{\varepsilon_0}\frac{1}{4\beta^2q^2-\omega^2}\Cs_{\varsigma'}^{\sigma'}\sum_{n=0}^\infty \frac{\Lambda_n^{\sigma'}}{\beta+\varsigma\alpha_\sigma-s_n^{\sigma'}}~. 
\end{align}
\end{subequations}
The quantities $\Lambda_n^\sigma$ and $\slashed{\Lambda}_n$ are given by~\eqref{eqs:residues-n}; and the factors $\Cs_0^\sigma$ and $\Cs^\sigma_\varsigma$ are
\begin{align*}
	\Cs_0^\sigma&=-q\,\frac{\alpha_\sigma+\beta}{q^2-\sigma\omega}~,\\
	\Cs_+^\sigma&=\frac{1}{\alpha_\sigma}\frac{4q^2\beta^2-\omega^2}{4\beta}\frac{1}{q^2-\sigma\omega}\,\frac{(\beta+\alpha_\sigma)^2}{(\beta+\alpha_\sigma)^2-q^2}=-\frac{(\beta+\alpha_\sigma)^2}{q^2-\sigma\omega}\frac{(\beta-\alpha_\sigma)^2-q^2}{4\beta\alpha_\sigma}~,\\
	\Cs_-^\sigma&=-\frac{(\beta+\alpha_\sigma)^2-q^2}{4\beta\alpha_\sigma}~.
\end{align*}

The linear system described by~\eqref{eqs:sys-disp} must have nontrivial 
vector solutions,
\begin{displaymath}
 (\breve\dF(q),\, \breve\dF(-q),\, \breve\dF(\beta+\alpha_+),\, \breve\dF(\beta-\alpha_+),\, \breve\dF(\beta+\alpha_-),\, \breve\dF(\beta-\alpha_-))\neq 0~.
\end{displaymath} 
Thus, the determinant of the corresponding $6\times 6$ matrix $\boldsymbol{\mathfrak{A}}-\boldsymbol{I}$ must vanish. This condition is written as
\begin{align}\label{eq:det-zero}
\Lambda(\omega, q):=\mathrm{det}(\boldsymbol{\mathfrak{A}}-\boldsymbol{I})=
\begin{vmatrix}
	\mA^+_+-1\, & \mA^+_-\, & \mA^+_{++}\, & \mA^+_{-+}\, & \mA^+_{+-}\, & \mA^{+}_{--} \\
	\mA^-_+\, & \mA^-_{-}-1\, & \mA^-_{++}\, & \mA^-_{-+}\, & \mA^-_{+-}\, & \mA^{-}_{--} \\
	\mA^{++}_+\, & \mA^{++}_{-}\, & \mA^{++}_{++}-1\, & \mA^{++}_{-+}\, & \mA^{++}_{+-}\, & \mA^{++}_{--} \\
	\mA^{-+}_+\, & \mA^{-+}_{-}\, & \mA^{-+}_{++}\, & \mA^{-+}_{-+}-1\, & \mA^{-+}_{+-}\, & \mA^{-+}_{--} \\
	\mA^{+-}_+\, & \mA^{+-}_{-}\, & \mA^{+-}_{++}\, & \mA^{+-}_{-+}\, & \mA^{+-}_{+-}-1\, & \mA^{+-}_{--} \\
	\mA^{--}_+\, & \mA^{--}_{-}\, & \mA^{--}_{++}\, & \mA^{--}_{-+}\, & \mA^{--}_{+-}\, & \mA^{--}_{--}-1 
	\end{vmatrix}	
=0~.
\end{align}
Each term in the series for the matrix elements $\mA^\mu_\nu$ of the above determinant is well defined, because of the restriction $0<q^2<|\omega| < \beta q$. Let us now introduce the dimensionless parameters $\tilde{\vbq}=\tfrac{\vbq}{\beta}$, $\tilde\omega=\frac{\omega}{\beta^2}$ and $\tilde\alpha_\sigma=\tfrac{\alpha_\sigma}{\beta}=\alpha(-1+\sigma\tilde\omega,\tilde{\vbq})$ in $\mA^\mu_\nu$. Notice that the coefficients 
$\Cs_0^\sigma$ and $\Cs^\sigma_\varsigma$ are nondimensional, and are solely functions of $\tilde q$ and $\tilde\omega$. The ensuing matrix elements $\mA^\mu_\nu$ are thus given by~\eqref{eqs:prop3-matrix} with~\eqref{eqs:prop3-pmts} and~\eqref{eqs:residues-n-scaled}.  \hfill $\Box$

\medskip 

In principle, \eqref{eq:det-zero} implies that the solution $\breve\dF(s)$ of functional equation~\eqref{eq:funct-eq} contains at least one of the values $\breve\dF(\pm q)$, $\breve\dF(\beta\pm \alpha_{\pm})$ as a free parameter; see  Sec.~\ref{sssec:explct-F}. In the semiclassical limit, we show that $\Lambda(\omega, q)\sim -(\mA^+_+-1)$ where $\mA^+_+$ is approximately equal to $\tfrac{e^2\eta_0}{\beta^3\varepsilon_0}\,\tfrac{\tilde q}{\tilde \omega^2}$, while the origin of the higher-order terms is more complicated (Sec.~\ref{sec:asymptotics}).

\subsubsection{The solution $F(z)$}
\label{sssec:explct-F}
Next, we describe the even solution $F(z)=e^{\beta |z|}\dF(|z|)$ ($z\in\mathbb{R}$) if $q^2<|\omega|<q\beta$. By Laplace-inverting formula~\eqref{eq:ML-exp} for $z>0$ in view of~\eqref{eq:LT-inv}, we obtain the expansion
\begin{align}\label{eq:F-soln-ser}
F(z)=\sum_{n=0}^\infty\sum_{\sigma=\pm} R_n^\sigma e^{(s_n^\sigma+\beta)|z|}+\sum_{n=0}^\infty \slashed{R}_n e^{(\slashed{s}_n+\beta)|z|}\qquad (\mbox{all}\ z\in\mathbb{R})~,	
\end{align}
where $R_n^\sigma= \Lambda_n^\sigma R_0^\sigma$, $\slashed{R}_n=\slashed{\Lambda}_n \slashed{R}_0$, and $R_0^\sigma$, $\slashed{R}_0$, $\Lambda_n^\sigma$ and $\slashed{\Lambda}_n$ are defined by~\eqref{eqs:residues0} and~\eqref{eqs:residues-n}. In particular, $R_0^\sigma$ and $\slashed{R}_0$ linearly depend on the transformed amplitudes $\breve \dF(\pm q)$, $\breve \dF(\beta\pm \alpha_+)$ and $\breve \dF(\beta\pm\alpha_-)$.
By~\eqref{eq:poles} and~\eqref{eq:F-soln-ser}, $F(z)$ is characterized by the {\em exponential-decay} constants
\begin{align*}
	-(s_n^\sigma+\beta)&=\alpha_\sigma+2n\beta=\sqrt{\beta^2+q^2-\sigma\omega}+2n\beta~,\\
	 -(\slashed{s}_n+\beta)&=q+(2n+1)\beta\quad (\mbox{all}\ n\in\mathbb{N},\ \sigma\in \{+,-\})~. 
\end{align*}
Equation~\eqref{eq:F-soln-ser} is a uniformly convergent series. This can be proved via the previous estimates for $\Lambda_n^\sigma$ and $\slashed{\Lambda}_n$; see proof of Lemma~1 (Sec.~\ref{sssec:exp-ML}). We also verify that $F\in L^2(\mathbb{R})\cap L^1(\mathbb{R})$.

The variable $(\omega, \vbq)$ satisfies  dispersion relation~\eqref{eq:det-zero}; see proof of Proposition~\ref{prop:disp_reln} (Sec.~\ref{sssec:proof-Prop3}). 
Hence, the transformed-amplitude values $\breve \dF(\pm q)$, $\breve \dF(\beta\pm \alpha_+)$ and $\breve \dF(\beta\pm\alpha_-)$ are linearly dependent. The rank of the $6\times 6$ matrix $\boldsymbol{\mathfrak{A}}-\boldsymbol{I}$ is less than or equal to $5$. We can show that it is impossible for each of the first minors of $\boldsymbol{\mathfrak{A}}-\boldsymbol{I}$ to vanish under $\Lambda(\omega, q)=0$, if $0<q^2<|\omega|< q\beta$; hence, the rank of $\boldsymbol{\mathfrak{A}}-\boldsymbol{I}$ is equal to $5$. The nullity of this matrix is equal to $1$. To illustrate consequences of this statement, we consider the first minor $M_{6,6}:=\mathrm{det}(\bar{\boldsymbol{\mathfrak{A}}}_{6,6})$ of $\boldsymbol{\mathfrak{A}}-\boldsymbol{I}$, assuming that $M_{6,6}\neq 0$. Here, the $5 \times 5$ matrix
\begin{align*}
\bar{\boldsymbol{\mathfrak{A}}}_{6,6}:=
\begin{pmatrix}
	\mA^+_+-1\, & \mA^+_-\, & \mA^+_{++}\, & \mA^+_{-+}\, & \mA^+_{+-}\\
	\mA^-_+\, & \mA^-_{-}-1\, & \mA^-_{++}\, & \mA^-_{-+}\, & \mA^-_{+-}\\
	\mA^{++}_+\, & \mA^{++}_{-}\, & \mA^{++}_{++}-1\, & \mA^{++}_{-+}\, & \mA^{++}_{+-}\\
	\mA^{-+}_+\, & \mA^{-+}_{-}\, & \mA^{-+}_{++}\, & \mA^{-+}_{-+}-1\, & \mA^{-+}_{+-}\\
	\mA^{+-}_+\, & \mA^{+-}_{-}\, & \mA^{+-}_{++}\, & \mA^{+-}_{-+}\, & \mA^{+-}_{+-}-1
	\end{pmatrix}
\end{align*}
results from the elimination of the sixth column and sixth row of $\boldsymbol{\mathfrak{A}}-\boldsymbol{I}$.
Therefore, each of the five amplitudes $\breve \dF(\pm q)$, $\breve \dF(\beta\pm \alpha_+)$ and $\breve \dF(\beta+\alpha_-)$ can be expressed in terms of $\breve\dF(\beta-\alpha_-)$. For example, we have
\begin{align*}
	\breve \dF(q)=-\breve\dF(\beta-\alpha_-)\,
	\frac{\mathrm{det}(\bar{\boldsymbol{\mathfrak{A}}}^{(1)}_{6,6})}{M_{6,6}}~;\quad 
	\bar{\boldsymbol{\mathfrak{A}}}^{(1)}_{6,6}:=
	\begin{pmatrix}
	\mA^{+}_{--}\, & \mA^+_-\, & \mA^+_{++}\, & \mA^+_{-+}\, & \mA^+_{+-}\\
	\mA^{-}_{--}\, & \mA^-_{-}-1\, & \mA^-_{++}\, & \mA^-_{-+}\, & \mA^-_{+-}\\
	\mA^{++}_{--}\, & \mA^{++}_{-}\, & \mA^{++}_{++}-1\, & \mA^{++}_{-+}\, & \mA^{++}_{+-}\\
	\mA^{-+}_{--}\, & \mA^{-+}_{-}\, & \mA^{-+}_{++}\, & \mA^{-+}_{-+}-1\, & \mA^{-+}_{+-}\\
	\mA^{+-}_{--}\, & \mA^{+-}_{-}\, & \mA^{+-}_{++}\, & \mA^{+-}_{-+}\, & \mA^{+-}_{+-}-1
	\end{pmatrix}~.
\end{align*}
The $5\times 5$ matrix $\bar{\boldsymbol{\mathfrak{A}}}^{(1)}_{6,6}$ comes from the replacement of the first column of $\bar{\boldsymbol{\mathfrak{A}}}_{6,6}$ by 
$\boldsymbol V=(\mA^{+}_{--}, \mA^{-}_{--}, \mA^{++}_{--}, \mA^{-+}_{--}, \mA^{+-}_{--})^T$.
Similarly, each of the remaining amplitudes $\breve \dF(-q)$, $\breve \dF(\beta\pm \alpha_+)$ and $\breve \dF(\beta+\alpha_-)$ can be written in the form $-\breve\dF(\beta-\alpha_-)\mathrm{det}(\bar{\boldsymbol{\mathfrak{A}}}^{(j)}_{6,6})/M_{6,6}$ where the matrix $\bar{\boldsymbol{\mathfrak{A}}}^{(j)}_{6,6}$ comes from the replacement of the $j$-th column of $\bar{\boldsymbol{\mathfrak{A}}}_{6,6}$ by $\boldsymbol V$ ($j=2\,,\ldots,\,5$). The substitution of these formulas for $\breve \dF(\pm q)$, $\breve \dF(\beta\pm \alpha_+)$ and $\breve \dF(\beta+\alpha_-)$ into the residues $R_0^\sigma$ and $\slashed{R}_0$, which enter~\eqref{eq:F-soln-ser}, should furnish $F(z)$ as proportional to $\breve\dF(\beta-\alpha_-)$. We omit the details here.

It is of interest to point out that $F(z)$ is approximately  simplified if $\beta|z|\gg 1$ with $q|z|=\mathcal O(1)$. If $\omega\in\mathbb{R}$ with $0<q<|\omega|/q<\beta$, \eqref{eq:F-soln-ser} becomes 
\begin{align}\label{eq:F-soln-ser-simple}
	F(z)\sim R_0^+ e^{-\alpha_+ |z|}+R_0^- e^{-\alpha_- |z|}+\slashed{R}_0 e^{-(q+\beta)|z|}~,\quad \beta |z|\gg 1~.
\end{align}
In particular,  for $q^2<|\omega|\ll q\beta$ the related decay constants satisfy $0<(q+\beta)-\alpha_\pm= \mathcal O(q)$. The remaining terms of expansion~\eqref{eq:F-soln-ser} are asymptotically subdominant to~\eqref{eq:F-soln-ser-simple}. 
Evidently, the SP-type excitation wave tends to be localized near $z=0$ inside a layer of width $\mathcal O(\beta^{-1})$, as expected on physical grounds.

\subsection{On an analytic property of $\Lambda(\omega, q)$}
\label{subsec:soln-rmks}
The condition $0<q<\tfrac{|\omega|}{q}< \beta$ permeates our derivations. We expect that for fixed $q$ with $0<q<\beta$ the function $\Lambda(\omega, q)=\mathrm{det}(\boldsymbol{\mathfrak{A}}(\omega, q)-\boldsymbol{I})$ of~\eqref{eq:Lambda-disp-reln} is holomorphic in $\{\omega\in \mathbb{C}\,:\,\Rel\,\alpha_\pm>0,\, q^2<|\omega|< 2q(\beta+\beta_0)\}$ for some $\beta_0>0$. Let us discuss  why the points $\omega=\pm 2q\beta$, which appear as simple poles of $\mA^\mu_\nu$ in~\eqref{eqs:prop3-matrix}, are removable singularities of $\Lambda(\omega, q)$. We formally treat the series for $\mA^\mu_\nu$ as convergent in the punctured vicinities of $\omega=\pm 2q \beta$. 

Consider expansion~\eqref{eq:ML-exp} and the representations of~\eqref{eqs:sys-disp} for $0<q<\beta$. As 
$\omega\to \pm 2\beta q$, we have $\beta+\varsigma \alpha_\sigma\to (1+\varsigma)\beta \mp \varsigma \sigma q$ ($\varsigma, \sigma=\pm$). Note that the poles $s_n^{\mp}$ tend to coincide with the poles $\slashed{s}_n$ in the limit $\omega\to \pm 2\beta q$. In particular, as $\omega\to 2\beta q$ by~\eqref{eqs:prop3-matrix} we can show that
\begin{subequations}\label{eqs:asympt-rels-matrix}
\begin{align}\label{eq:asympt-rels-I}
	\mA^{\pm}_{+}+\mA^{\pm}_{-+}\sim \mA_+^{-\pm}+\mA^{-\pm}_{-+}=\mathcal O(1)~,\quad \mA^{\pm}_{-}+\mA^{\pm}_{--}\sim \mA^{-\pm}_{-} +\mA^{-\pm}_{--}=\mathcal O(1)~.
\end{align}
In contrast, each of the matrix elements $\mA^{\pm}_{+}$, $\mA^{\pm}_{-+}$, $\mA_+^{-\pm}$, $\mA^{-\pm}_{-+}$, $\mA^{\pm}_{-}$, $\mA^{\pm}_{--}$, 
$\mA^{-\pm}_{-}$ and $\mA^{-\pm}_{--}$ participating in~\eqref{eq:asympt-rels-I} is found to have a simple pole at $\omega=2\beta q$. In addition, we obtain
\begin{align}\label{eq:asympt-rels-II}
	\mA^{++}_{\pm}+\mA^{++}_{-\pm}=\mathcal O(1)~,\quad \mA^{+-}_{\pm}+\mA^{+-}_{-\pm}=\mathcal O(1)~;
\end{align}
\end{subequations}
while $\mA^{\varsigma}_{+\pm}=\mathcal O(1)$ and $\mA^{\varsigma\sigma}_{+\pm}=\mathcal O(1)$ with $\mA^{\pm}_{++}\sim \mA^{- \pm}_{++}$ 
and $\mA^{\pm}_{+-}\sim \mA^{- \pm}_{+-}$. Notably,
each of the matrix elements $\mA^{++}_{\pm}$, $\mA^{++}_{-\pm}$, $\mA^{+-}_{\pm}$ and $\mA^{+-}_{-\pm}$ has a simple pole at $\omega=2\beta q$.
Equation~\eqref{eqs:asympt-rels-matrix} is consistent with a dominant-balance argument applied to~\eqref{eqs:sys-disp}  for a function $\breve{\dF}(s)$ that is continuous near the points $\pm q$ and $2\beta\pm q$, in view of $\breve{\dF}(\beta+\varsigma \alpha_\sigma)\to \breve{\dF}\left((1+\varsigma)\beta - \varsigma \sigma q\right)$.

Hence, the determinant $\Lambda(\omega, q)$ is $\mathcal O(1)$ as $\omega\to 2\beta q$, as confirmed by a direct calculation in Appendix~\ref{app:sec:Lambda-lim-q}. By $\Lambda(\omega, q)=\Lambda(-\omega, q)$, we also  have $\Lambda(\omega, q)=\mathcal O(1)$ as $\omega\to -2\beta q$.

\section{Semiclassical regime}
\label{sec:asymptotics}
In this section, we show how~\eqref{eq:disp-SPP} emerges from~\eqref{eq:Lambda-disp-reln} by appropriately expanding each matrix element $\mA^\mu_\nu$ if $0<\tilde{q}^2\ll |\tilde \omega|\ll \tilde{q}$ and $\Cs_0=\tfrac{e^2\eta_0}{2\varepsilon_0\beta^3}\ll 1$ ($\tilde q=\frac{q}{\beta}$, $\tilde{\omega}=\frac{\omega}{\beta^2}$).  We also derive higher-order terms, thus proving Proposition~\ref{prop:strong-b} from Proposition~\ref{prop:disp_reln}.

To organize our calculations, we define the small parameter 
\begin{equation*}
\delta_\sigma := \tilde\alpha_\sigma -1= \sqrt{1+\tq^2-\sigma\tomega}-1=\frac{\tq^2-\sigma \tomega}{\tilde \alpha_\sigma+1}\sim \frac{\tq^2-\sigma\tomega}{2}\left(1-\frac{\tq^2-\sigma\tomega}{4} \right)~,	
\end{equation*}
where $\Rel\,\tilde\alpha_\sigma>0$ and $\tilde\alpha_\sigma=\alpha_\sigma/\beta$ ($\sigma=\pm$).
By virtue of the condition $\Cs_0\ll 1$, in~\eqref{eqs:prop3-matrix} we can replace the power series for each matrix element $\mA^\mu_\nu$  by the respective first terms (for $n=0$).  Recall that $\Lambda(\omega, q)$ is an even function of $\omega$. For later algebraic convenience, let us define the dimensionless quantities
\begin{align}\label{eq:defs-calc}
	\mH_0&=\frac{2\Cs_0}{4\tq^2-\tomega^2}~,\quad 
	\mH^\pm = \frac{\Cs_0/4}{\tq^2\pm \tomega}~,\quad  \wp=\frac{4\tq^3}{\tomega^2-\tq^4}~.
\end{align}
By~\eqref{eq:disp-SPP} or~\eqref{eqs:classical-SP-disp}, the  SP-type dispersion law in the semiclassical limit states that
\begin{equation}\label{eq:critical}
	\mH_0 \wp= \mathcal O(1)~.
\end{equation}
This result is implied by the dominance of $-(\mA^+_+(\omega, q)-1)$ in the determinant of~\eqref{eq:Lambda-disp-reln}.

In our computations, we will leave the factors $(4\tq^2-\tomega^2)^{-1}$ and $(\tq^2\pm \tomega)^{-1}$ intact in the approximate formulas for $\mA^\mu_\nu$. The retainment of these factors in our asymptotic formulas is accurate up to orders $\tfrac{\tomega^2}{\tq^2}$ and $\tfrac{\tq^4}{\tomega^2}$, respectively. We will also approximate linear-in-$\tq$ terms by respective fractions for orders up to $\tq$, setting, e.g., $1-\tq\sim (1+\tq)^{-1}$.

\subsection{Sketch of proof of Proposition~\ref{prop:strong-b}}
\label{subsec:proof-Prop4}
We compute each $\mA^\mu_\nu$ for $0<\tilde{q}^2\ll |\tilde \omega|\ll \tilde{q}$ and $\Cs_0=\tfrac{e^2\eta_0}{2\varepsilon_0\beta^3}\ll 1$. By~\eqref{eq:A-varsigma-varsigmap}, we approximate
\begin{align*}
	\mA^+_+&= \mH_0 
	\left\{-\frac{2\tq(2+\delta_+)}{\tq^2-\tomega}\left[\frac{1}{2+\tq+\delta_+}-\frac{\Cs_0}{4+\tq+\delta_+}\frac{1}{2+\delta_+}
	\left(1+\frac{\tomega/8}{1-\tfrac{\tomega}{4}+\tfrac{\delta_+}{2}}\right) \frac{1}{4-\tomega+2\delta_+}\right]  \right.\\
	&\quad \hphantom{\frac{2\Cs_0}{4\tq^2-\tomega^2}} -\frac{2\tq(2+\delta_-)}{\tq^2+\tomega}\left[\frac{1}{2+\tq+\delta_-}-\frac{\Cs_0}{4+\tq+\delta_-}\frac{1}{2+\delta_-}
	\left(1-\frac{\tomega/8}{1+\tfrac{\tomega}{4}+\tfrac{\delta_-}{2}}\right) \frac{1}{4+\tomega+2\delta_-}\right] \\
	&\quad \hphantom{\frac{2\Cs_0}{4\tq^2-\tomega^2}}  \left. +\frac{1}{1+\tq}\left[1-\frac{1}{8}\frac{\Cs_0}{2+\tq} 
	\frac{1+\tfrac{3\tq}{4}}{\left(1+\tfrac{3\tq}{4}\right)^2-\tfrac{\tomega^2}{64}}\ \frac{1}{1+\tq}\right] +\mathcal O\left((1+\wp)\Cs_0^2\right)\right\} \\
	&= \mH_0\left\{\wp\left(1-\frac{\tq}{2}\right)+\frac{1}{1+\tq} +\mathcal O\left((1+\wp)\Cs_0\right)  \right\}\sim \mH_0 \left(\frac{\wp}{1+\tq/2}+\frac{1}{1+\tq} \right)~.
\end{align*}
We assess that this approximation holds for $\mA^+_+$ if $\tq^{\,7/4}\ll |\tomega|\ll \tq^{\,4/3}$, given the conditional retainment of the factors $(4\tq^2-\tomega^2)^{-1}$ and $(\tomega^2-\tq^4)^{-1}$ which are present in $\mH_0$ and $\wp$. These conditions on $\tomega$ can eventually be relaxed to become $0<\tilde{q}^2\ll |\tilde \omega|\ll \tilde{q}$ for the leading-order formula for $\Lambda(\omega, q)$.  For the remaining elements $\mA^+_{\varsigma(\sigma)}$, we obtain
\begin{align*}
	\mA^+_-&= \mH_0 \left(\frac{1}{1+\tq}+\mathcal O\left(\frac{\tomega^2}{\tq}\right) \right)\sim \frac{\mH_0}{1+\tq}~,\quad \mA^+_{++}=2\mH^-\left(\frac{1}{1+\tq/2}+\mathcal O(\tomega) \right)\sim \frac{2\mH^-}{1+\tq/2}~,\\
	 \mA^+_{-+}&=-\mH_0\left(\frac{1}{1+\tq/2}+\mathcal O(\tomega)\right)\sim -\frac{\mH_0}{1+\tq/2}~, \quad 
	\mA^+_{+-}= 2\mH^+\left(\frac{1}{1+\tq/2}+\mathcal O(\tomega) \right)\sim \frac{2\mH^+}{1+\tq/2}~,\\
	 \mA^+_{--}&=-\mH_0\left(\frac{1}{1+\tq/2}+\mathcal O(\tomega)\right)\sim -\frac{\mH_0}{1+\tq/2}~.
\end{align*}
By symmetry, the residual $\mathcal O(\tomega)$ terms indicated above are expected not to contribute to $\Lambda(\omega, q)$. 

By similar considerations, we proceed to approximately compute the rest of the matrix elements $\mA^\mu_\nu$. Omitting the remainders for simplicity, we write
\begin{align*}
	\mA^-_+&\sim \mH_0 \left(\frac{\wp}{1-\tq/2}+1\right)~,\quad \mA^-_-\sim \mH_0~,\quad \mA^-_{++}\sim \frac{2\mH^-}{1-\tq/2}~,\quad \mA^-_{-+}\sim -\frac{\mH_0}{1-\tq/2}~,\\
	\mA^-_{+-}&\sim \frac{2\mH^+}{1-\tq/2}~,\quad \mA^-_{--}\sim -\frac{\mH_0}{1-\tq/2}~;
\end{align*}
\begin{align*}
	\mA^{++}_+&\sim \frac{\mH_0}{2}\left(\wp+\frac{1}{1+3\tq/4}\right)~,\quad \mA^{++}_{-}\sim \frac{\mH_0}{2}\frac{1}{1+\tq/4}~,\quad \mA^{++}_{++}\sim \mH^-~,\quad \mA^{++}_{-+}\sim -\frac{\mH_0}{2}~,\\
	\mA^{++}_{+-}&\sim \mH^+~,\quad \mA^{++}_{--}\sim -\frac{\mH_0}{2}~;
\end{align*}
\begin{align*}
	\mA^{-+}_+&\sim \mH_0 \left(\wp+\frac{1}{1+\tq/2}\right)~,\quad \mA^{-+}_-\sim \frac{\mH_0}{1+\tq/2}~,\quad \mA^{-+}_{++}\sim 2\mH^-~,\quad \mA^{-+}_{-+}\sim -\mH_0~,\\
	\mA^{-+}_{+-}&\sim 2\mH^+~,\quad \mA^{-+}_{--}\sim -\mH_0~;
\end{align*}
\begin{align*}
	\mA^{+-}_+&\sim \frac{\mH_0}{2}\left(\wp+\frac{1}{1+3\tq/4}\right)~,\quad \mA^{+-}_{-}\sim \frac{\mH_0}{2}\frac{1}{1+\tq/4}~,\quad \mA^{+-}_{++}\sim \mH^-~,\quad \mA^{+-}_{-+}\sim -\frac{\mH_0}{2}~,\\
	\mA^{+-}_{+-}&\sim \mH^+~,\quad \mA^{+-}_{--}\sim  -\frac{\mH_0}{2}~;
\end{align*}
\begin{align*}
	\mA^{--}_+&\sim \mH_0 \left(\wp+\frac{1}{1+\tq/2}\right)~,\quad \mA^{--}_{-}\sim \frac{\mH_0}{1+\tq/2}~,\quad \mA^{--}_{++}\sim 2\mH^-~,\quad \mA^{--}_{-+}\sim -\mH_0~,\\
	\mA^{--}_{+-}&\sim 2\mH^+~, \quad \mA^{--}_{--}\sim -\mH_0~.
\end{align*}

By substituting these formulas into~\eqref{eq:Lambda-disp-reln} of Proposition~\ref{prop:disp_reln}, after some algebra we have
\begin{align*}
	\begin{vmatrix}
	\displaystyle {\mH_0\left(\frac{\wp}{1+\tq/2}+\frac{1}{1+\tq} \right)-1} & \displaystyle \frac{\mH_0}{1+\tq} & \displaystyle \frac{2\mH^-}{1+\tq/2} & \displaystyle -\frac{\mH_0}{1+\tq/2} & \displaystyle \frac{2(\mH^++\mH^-)}{1+\tq/2} & \displaystyle -\frac{2\mH_0}{1+\tq/2} \\
		\displaystyle \mH_0\left(\frac{\wp}{1-\tq/2}+1\right) & \mH_0-1 & \displaystyle \frac{2\mH^-}{1-\tq/2} & \displaystyle -\frac{\mH_0}{1-\tq/2} & \displaystyle \frac{2(\mH^++\mH^-)}{1-\tq/2} & \displaystyle -\frac{2\mH_0}{1-\tq/2} \\
		\displaystyle \frac{\mH_0}{2}\left(\wp+\frac{1}{1+3\tq/4} \right) & \displaystyle \frac{\mH_0}{2}\frac{1}{1+\tq/4} & \mH^--1 & \displaystyle -\frac{\mH_0}{2} & \mH^++\mH^--1 & 
		-\mH_0 \\
		\displaystyle \mH_0\left(\wp+\frac{1}{1+\tq/2}\right) & \displaystyle \frac{\mH_0}{1+\tq/2} & 2\mH^- & -\mH_0-1 & 2(\mH^++\mH^-) & -2\mH_0-1\\
		0 & 0 & 1 & 0 & 0 & 0 \\
		0 & 0 & 0 &  1 & 0 & 0
	\end{vmatrix}\simeq 0~,
\end{align*}
which approximately reduces to
\begin{subequations}\label{eqs:semiclass-asympt}
\begin{align}\label{eq:semiclass-asympt1}
\mH_0(\wp-\tq)-\tfrac{1}{2}\mH_0\wp\tq+(\mH^++\mH^-) \sim 1	
\end{align}
where
\begin{align*}
	\mH_0(\wp-\tq)&=\frac{e^2\eta_0}{\varepsilon_0 \beta^3}\frac{\tq}{4\tq^2-\tomega^2}\left(\frac{4\tq^2}{\tomega^2-\tq^4}-1\right)=\frac{e^2\eta_0}{\varepsilon_0 \beta^3}\frac{\tq}{\tomega^2-\tq^4}[1+\mathcal O(\tq^2)]~,\\
	\mH_0\wp \tq&= \frac{e^2\eta_0}{\varepsilon_0 \beta^3}\frac{\tq}{\tomega^2-\tq^4}\,\frac{4\tq^3}{4\tq^2-\tomega^2}= \frac{e^2\eta_0}{\varepsilon_0 \beta^3}\frac{\tq^2}{\tomega^2-\tq^4} \left[1+\mathcal O\left(\frac{\tomega^2}{\tq^2}\right)\right]~,\\
	\mH^++\mH^-&= -\frac{e^2\eta_0}{4\varepsilon_0 \beta^3}\frac{\tq^2}{\tomega^2-\tq^4}~.
\end{align*}
Note that $\mH_0(\wp-\tq)$ accounts for the leading-order term of expansion~\eqref{eq:semiclass-asympt1}. We thus obtain
\begin{align}\label{eq:semiclass-asympt2}
\frac{\tomega^2}{\tq}\sim \frac{e^2\eta_0}{\varepsilon_0\beta^3}\left(1-\frac{3\tq}{4}\right)\left(1-\frac{\tq^4}{\tomega^2}\right)^{-1}\sim \frac{e^2\eta_0}{\varepsilon_0\beta^3}\left(1-\frac{3\tq}{4}+\frac{\tq^4}{\tomega^2}\right)~,
\end{align}
\end{subequations}
in agreement with~\eqref{eq:leading-order}. This step concludes the sketch of our proof. \hfill $\Box$

\medskip

In~\eqref{eq:semiclass-asympt2}, the higher-order term $\mathcal O(\tq^4/\tomega^2)$ is purely kinematic, because it does not depend on the binding length $\ell_b=\beta^{-1}$ to the leading order, in contrast to the term $\mathcal O(\tq)$. The relative importance of the two higher-order terms appearing in~\eqref{eq:semiclass-asympt2} can be inferred from the comparison of unity to the parameter $\frac{\tilde \omega^2}{\tilde q^3}=\tfrac{\omega^2}{q^3\beta}=\tfrac{\ell_p^3\ell_b}{\ell_{\mathrm{dB}}^4}\sim \tfrac{2^4\ell_b}{\ell_C}\bigl(\tfrac{\ell_p}{\ell_C}\bigr)^2$, where $\tfrac{\ell_b}{\ell_C}\ll 1\ll \tfrac{\ell_p}{\ell_C}$. Notice that the parameter $\tfrac{\ell_b}{\ell_C}\bigl(\tfrac{\ell_p}{\ell_C}\bigr)^2$ is independent of $\hbar$ once the original units are restored, and thus expresses confinement classically.

\section{Conclusion and discussion}
\label{sec:conclusion}
In this paper, we formally derived the dispersion relation for SP-type waves by use of scattering theory for linearized Hartree-type dynamics. The two main ingredients of the governing equation are the repulsive Coulomb interaction and a negative external binding potential localized at one point in the vertical coordinate. The particle is forced to form a bound state near a fixed plane. Our result demonstrates the interplay of basic microscopic scales that include the binding length in the dispersion of plasmon-type wave modes. In the semiclassical limit, we show how a familiar scaling law for the nonretarded-SP energy emerges, and describe a few higher-order terms of the related asymptotic expansion.  

It would be of interest to extend our study to more realistic settings.  A next step is the addition of a family of periodic potentials, e.g., a 2D analog of the Kronig-Penney model, to the unperturbed Hamiltonian. The ensuing scattering problem calls for the use of appropriate Bloch wave functions, and is considered as analytically tractable. To allow for Landau damping in the quantum regime, it is necessary to incorporate the Fermi-Dirac statistics into the model, even at zero temperature.\cite{Nguyen24} This aspect, as well as the effect of ohmic losses, can plausibly be studied by use of the von Neumann equation for the density operator,\cite{Schulz-Baldes1998,Cances2017,Watson2023} albeit in the presence of confinement. We expect that, in the semiclassical limit for low enough dissipation, a dispersion relation of form~\eqref{eq:disp-SPP} should still emerge. On the other hand, the case of the graphene 
SP requires a subtler treatment because of the inherent Dirac dynamics.\cite{Lowetal2017,Hwang-DasSarma07,Wunsch2006,Jablan2013}

A far-reaching mathematical issue not addressed by our analysis is to understand how the ubiquitous Coulomb interaction in 3D affects the envelope of electron motion on an embedded 2D lattice. In the case of the honeycomb lattice in graphene, this interaction is sufficiently screened and can be considered as weak.\cite{GiulianiMastropietro2010,Kotovetal2012,Hwang-DasSarma07} The slow electron motion is then effectively governed by a version of the Dirac equation.\cite{FeffermanWeinstein2012} In more general settings, the description of the effective electron dynamics may be more intricate. 

\section*{ACKNOWLEDGMENTS}

This paper is fondly dedicated to the memory of Tai Tsun Wu. The author is grateful to M.~G. Grillakis, P.-E. Jabin, and C.~D. Levermore for inspiring discussions; and warmly thanks M. Luskin, T. Stauber, E. Tadmor and A.~B. Watson for fruitful conversations. 

\appendix 

\section{Calculation of forward propagator in macroscopic limit}
\label{app:sec:propagator}
In this appendix, we derive~\eqref{eqs:G-TD} by inverting~\eqref{eq:FT-G} for $\widehat G_A$ in the limit $A\to \infty$ via integral formula~\eqref{eq:FT-inv}. By the definition of $G_A$, we must have $G_A(t,\vr_\parallel, z; z')\equiv 0$ if $t<0$. Thus, we only need to consider $t>0$. 

\subsection{Free-particle propagator}
\label{app:subsec:fs-prop}

We examine each of the two terms of formula~\eqref{eq:FT-G} as $A\to \infty$. The first term reads
\begin{equation}\label{app:eq:fs-G-FT}
\widehat{G}^0(w,\vbk,z;z')=\frac{1}{2\alpha} 
 e^{-\alpha |z-z'|}~;\quad \alpha=\sqrt{k^2-w}~,\ w\in\mathbb{C}~, \ \vbk\in \mathbb{R}^2\quad  (\Rel\,\alpha>0)~.
\end{equation}
We will show that this $\widehat{G}^0$ is the Fourier transform in $(t,\vrp)$ of the well-known free-particle propagator $G_{f}(t, \vrp, z-z')$.  The propagator $G_f(t,\vr)$ obeys
\begin{equation}\label{app:eq:fs-G-TD}
(\im\partial_t + \Delta_{\vr})G_{f}(t, \vr)=-\delta(\vr) \,\delta(t)~,\quad (t, \vr)\in \mathbb{R}\times \mathbb{R}^3~,	
\end{equation}
along with the conditions that $G_{f}$ vanish if $t<0$ and be bounded in $\vr$ for fixed $t>0$. Let $\widetilde G_{f}$ be its Fourier transform with respect to $(t, \vr)$. Evidently, by~\eqref{app:eq:fs-G-TD} this $\widetilde{G}_{f}$ equals
\begin{align*}
	\widetilde{G}_{f}(w, \vp)=\left(p^2-w\right)^{-1}\qquad (\vp=(\vp_\parallel, p_z)\in\mathbb{R}^3~,\ p=|\vp|)~.
\end{align*}
The inversion of this formula yields a familiar 4-dimensional integral, which we write as
\begin{align}\label{app:eq:fs-G-TD-Inv}
	G_{f}(t, \vr)=\int_{\mathbb{R}^2}\int_\Gamma\int_{\mathbb{R}} \frac{e^{-\im wt+\im \vp_\parallel\cdot \vrp+\im p_z z}}{(p_\parallel^2+p_z^2)-w}\ \frac{\dv p_z}{2\pi}\,\frac{\dv w}{2\pi}\,\frac{\dv \vp_\parallel}{(2\pi)^2}~;\quad \vp= (p_x, p_y, p_z)=(\vp_\parallel, p_z)~,
\end{align}
where $\Gamma=\{w\in\mathbb{C}\,:\, -\infty < \Rel\,w < \infty,\, \Img\,w=\gamma_1>0\}$ for fixed positive $\gamma_1$.
We carry out the integral in $p_z$, with fixed $(w, \vp_\parallel)$ and $\Rel\sqrt{p_{\parallel}^2-w}>0$. Note that the real axis of the complex $p_z$-plane is free of singularities of the integrand. In view of the simple poles of the integrand at $p_z=\pm\im \sqrt{p_\parallel^2-w}$, we apply the residue theorem and compute
\begin{align*}
	G_{f}(t, \vr)=\frac{1}{2}\int_{\mathbb{R}^2}\int_\Gamma \frac{e^{-\im wt+\im \vp_\parallel\cdot \vrp-\sqrt{p_\parallel^2-w}\, |z|}}{\sqrt{p_\parallel^2-w}}\ \frac{\dv w}{2\pi}\,\frac{\dv \vp_\parallel}{(2\pi)^2}~.
\end{align*}
Hence, the Fourier transform of $G_{f}(t, \vr)$ with respect to the reduced variable $(t, \vrp)$ is given by~\eqref{app:eq:fs-G-FT} for $\vbk=\vp_\parallel$ and $z'=0$. 

By first integrating in $w$, we now write~\eqref{app:eq:fs-G-TD-Inv} as
\begin{align}\label{app:eq:free-prop-fin}
	G_{f}(t, \vr)&=\int_{\mathbb{R}^3}\int_\Gamma \frac{e^{-\im wt+\im \vp\cdot \vr}}{(p_\parallel^2+p_z^2)-w}\ \frac{\dv w}{2\pi}\,\frac{\dv \vp}{(2\pi)^3}=\im \int_{\mathbb{R}^3}\frac{\dv \vp}{(2\pi)^3}\,e^{\im\vp\cdot\vr-\im p^2 t}\notag\\
	&=\im \prod_{s=x,y,z}\left(\int_{\mathbb{R}}\frac{\dv p_s}{2\pi}\,e^{\im p_s s-\im p_s^2 t}\right)=\frac{\im}{\pi^3}e^{\im \tfrac{r^2}{4t}}
	\left(e^{-\im\pi/4}\int_0^\infty \dv \xi \,e^{-t\xi^2} \right)^3\notag\\
	&= \frac{\im}{\pi^3} e^{-3\im\pi/4}\left(\frac{\sqrt{\pi}}{2}\frac{1}{\sqrt{t}}\right)^3 e^{\im \tfrac{r^2}{4t}}=e^{-\im\pi/4} (4\pi t)^{-3/2} e^{\im \tfrac{r^2}{4t}}~.
	\end{align}
In the above, we rotated the integration path (real axis) for $p_s$ ($s=x, y, z$) by $-\pi/4$. Hence, we recover~\eqref{eq:G-TD-fs}.

\subsection{Contribution due to particle binding}
\label{app:subsec:corrector}
In the limit $A\to\infty$, the second term of~\eqref{eq:FT-G} reads
\begin{align}\label{app:eq:FT-corrector}
\widehat{G}^c(w, \vbk, z; z')=\frac{\beta}{2\alpha (\alpha-\beta)} 
 e^{-\alpha(|z|+|z'|)}~;\quad \alpha=\alpha(w,\vbk)=\sqrt{k^2-w}~,\quad \beta=\frac{V_0a}{2}~,
 \end{align}
with $(w, \vbk)\in \mathfrak R=\{(w, \vbk)\in \mathbb{C}\times \mathbb{R}^2:\,\alpha\neq \beta~,\ \Rel\,\alpha>0\}$.
We need to compute the inverse Fourier transform according to the iterated integral 
\begin{align}\label{app:eq:Gcorr-inv}
G^c(t, \vrp, z; z')=\int_{\mathbb{R}^2}\int_\Gamma 	e^{-\im w t+\im \vbk\cdot \vrp} \widehat{G}^c(w, \vbk, z; z') \,\frac{\dv w}{2\pi} \frac{\dv \vbk}{(2\pi)^2}~.
\end{align}


We first carry out the integration in $w$. For fixed $k=|\vbk|$, the integral of interest is
\begin{align}\label{eq:I-integral-def}
	I(t,Z)=\int_\Gamma \frac{\dv w}{2\pi}\ e^{-\im w t-\alpha(w,\vbk)Z}\frac{1}{\alpha(w,\vbk)}\frac{\beta}{\alpha(w,\vbk)-\beta}~,\quad t>0~,\ Z=|z|+|z'|>0~;
\end{align}
see Fig.~\ref{fig:integration}. We  deform the original integration path $\Gamma$ to path $\Gamma'$ which is wrapped around the cut $[k^2, +\infty)$ in the real axis. In this way, the corresponding contour integral picks up the residue of the simple pole at $w=w_p=k^2-\beta^2$, for which $\alpha(w, \vbk)=\beta$. Subsequently, we shift $w$ according to $w\mapsto \varpi=w-k$. Thus, we rewrite integral~\eqref{eq:I-integral-def} as
\begin{align}\label{app:eq:Iintegral-def}
I(t,Z)&= -2\pi \im \, \mathop{\mathrm{Res}}\limits_{w=w_p}\left\{ \frac{1}{2\pi}\ e^{-\im w t-\alpha(w,\vbk)Z}\frac{1}{\alpha(w,\vbk)}\frac{\beta}{\alpha(w,\vbk)-\beta}\right\}\notag\\
&\qquad +\int_{\Gamma'} \frac{\dv w}{2\pi}\ e^{-\im w t-\alpha(w,\vbk)Z}\frac{1}{\alpha(w,\vbk)}\frac{\beta}{\alpha(w,\vbk)-\beta}
\notag\\
&=2\im \beta e^{-\im (k^2-\beta^2) t -\beta Z}
-2\im \beta e^{-\im k^2 t} \left(-\frac{\partial}{\partial Z}+\beta\right) I_1(t, Z)~,
\end{align}
where $I_1(t, Z)$ is expressed by the (convergent) integral
\begin{align}\label{app:eq:I1-int-def}
I_1(t, Z)=
\int_0^\infty \frac{\dv\varpi}{2\pi}\ \frac{e^{-\im \varpi t}}{\sqrt{\varpi}}\,\frac{\cos(Z\sqrt{\varpi})}{\varpi+\beta^2}~.
\end{align}
Note that $|I_1(t, Z)|\le M\beta^{-1}$ for all $(t, Z)\in\mathbb{R}^2$, for some constant $M>0$. We can also show that $|I_1|=\mathcal O(Z^{-2})$ as $|Z|\to +\infty$.

The key task is to calculate $I_1(t, Z)$. Evidently, this $I_1$ satisfies the differential equation
\begin{align*}
	\frac{\partial^2 I_1}{\partial Z^2}-\beta^2 I_1
	=-\frac{1}{2\sqrt{\pi t}}e^{-\im\pi/4} e^{\im \tfrac{Z^2}{4t}}~,\quad Z\in\mathbb{R}~.
\end{align*}
By variation of parameters, we find that the general solution of this equation is written as
\begin{align*}
	I_1(t, Z)=K_1(t)\,e^{-\beta Z}+K_2(t)\,e^{\beta Z}+ \frac{e^{-\im\pi/4}}{\sqrt{4\pi \beta^2 t}}\int_{Z}^{+\infty e^{\im\phi}} \dv\xi\,\sinh[\beta(Z-\xi)]\,e^{\im\tfrac{\xi^2}{4t}}~,\quad  0<\phi<\pi/2~,
\end{align*}
for arbitrary integration constants $K_{1,2}(t)$. Now take $Z>0$. Because $I_1$ is bounded 
as $Z\to +\infty$, we need to set $K_2(t)\equiv 0$. By~\eqref{app:eq:I1-int-def}, we impose $\partial_Z I_1=0$ at $Z=0$, and compute
\begin{equation*}
	K_1(t)= e^{-\im\pi/4} (4\pi\beta^2 t)^{-1/2}\int_0^{+\infty e^{\im\phi}} \dv\xi\,\cosh(\beta\xi)\,e^{\im \tfrac{\xi^2}{4t}}\qquad (0<\phi<\pi/2)~.
\end{equation*}
Thus, we obtain the formula
\begin{align*}
	I_1(t, Z)=\frac{1}{2\beta}e^{\im \beta^2 t-\beta Z}+\frac{1}{\sqrt{4\pi\beta^2 t}} e^{-\im\pi/4}\int_{Z}^{+\infty e^{\im \phi}} \dv\xi\,\sinh[\beta(Z-\xi)] e^{\im \tfrac{\xi^2}{4t}}\qquad (0<\phi<\pi/2)~,
\end{align*}
for $Z>0$. The last integral is expressed in terms of the complementary error function via the decomposition of $\sinh(\cdot)$ into exponentials. After some algebra, $I_1$ becomes
\begin{align*}
	I_1(t, Z)&=\frac{1}{2\beta} e^{\im\beta^2 t-\beta Z} \notag\\
	&+\frac{1}{4\beta}e^{\im \beta^2 t}\left\{ e^{\beta Z}\,\mathrm{erfc}\biggl(e^{-\im\pi/4}\frac{Z}{2\sqrt{t}}+e^{\im\pi/4}\beta\sqrt{t}\biggr)-e^{-\beta Z}\,\mathrm{erfc}\biggl(e^{-\im\pi/4}\frac{Z}{2\sqrt{t}}-e^{\im\pi/4}\beta\sqrt{t}\biggr)\right\}\notag\\
	&=\frac{e^{\im\beta^2 t}}{4\beta}\left\{ e^{\beta Z}\,\mathrm{erfc}\biggl(e^{\im\pi/4}\beta\sqrt{t}+e^{-\im\pi/4}\frac{Z}{2\sqrt{t}}\biggr)+e^{-\beta Z}\,\mathrm{erfc}\biggl(e^{\im\pi/4}\beta\sqrt{t}-e^{-\im\pi/4}\frac{Z}{2\sqrt{t}}\biggr)\right\}~.
\end{align*}
This formula can be extended to all $Z\in\mathbb{R}$. Note that $I_1(t, -Z)=I_1(t, Z)$; cf.~\eqref{app:eq:I1-int-def}. 

Now we compute $I(t, Z)$ by use of~\eqref{app:eq:Iintegral-def}, if $Z>0$. A direct calculation yields
\begin{align}
I(t, Z)=I(t, Z; k)=\im \beta e^{-\im (k^2-\beta^2) t-\beta Z} \,\mathrm{erfc}\biggl(e^{-\im\pi/4}\frac{Z}{2\sqrt{t}}-e^{\im\pi/4}\beta\sqrt{t}\biggr), \ t>0~,\ Z>0~.	
\end{align}

Finally, to obtain $G^c(t, \vrp, z; z')$ we need to perform the integration with respect to the lateral-momentum variable, $\vbk$. By~\eqref{app:eq:Gcorr-inv}, we compute
\begin{align}\label{app:eq:prop-correction}
G^c(t, \vrp, z; z')&=\frac{1}{2}\int_{\mathbb{R}^2} e^{\im\vbk\cdot \vrp} I(t, Z; k)\,\frac{\dv \vbk}{(2\pi)^2}\qquad (Z=|z|+|z'|)\notag\\
&=\frac{\im\beta}{2}e^{\im\beta^2 t-\beta (|z|+|z'|)}\,\mathrm{erfc}\biggl(e^{-\im\pi/4}\frac{|z|+|z'|}{2\sqrt{t}}-e^{\im\pi/4}\beta\sqrt{t}\biggr)\int_{\mathbb{R}^2} \frac{\dv \vbk}{(2\pi)^2}\,e^{\im \vbk \cdot \vrp-\im k^2 t}\notag\\
&=\frac{\beta}{8\pi t} e^{\im \tfrac{r_\parallel^2}{4t}}e^{\im \beta^2 t-\beta(|z|+|z'|)}\,\mathrm{erfc}\biggl(e^{-\im\pi/4}\frac{|z|+|z'|}{2\sqrt{t}}-e^{\im\pi/4}\beta\sqrt{t}\biggr)~,\quad t>0~,
\end{align}
for all $(z, z')\in\mathbb{R}^2$. The above procedure exemplifies the role of the branch cut. Alternatively, one may derive this result for $G^c$ from~\eqref{eq:I-integral-def} via decomposing the corresponding integrand  into partial fractions. We omit the details of this scenario. 

In conclusion, the forward propagator in the macroscopic limit equals
\begin{align*}
	G(t, \vrp, z; z')= G_{f}(t, \vrp, z-z')+ G^c(t, \vrp, z; z')~.
\end{align*}
By~\eqref{app:eq:free-prop-fin} and~\eqref{app:eq:prop-correction},
we obtain formula~\eqref{eq:G-TD-total}.

\section{Properties of $\breve\dF(s)$ by functional equation}
\label{app:sec:prop-funct-eq}
In this appendix, we discuss properties of $\breve \dF(s)$ by recourse to~\eqref{eq:funct-eq}. We assume that $F\in L^1(\mathbb{R}_+)\cap L^2(\mathbb{R}_+)$, where $F(z)=e^{\beta z}\dF(z)$, by which $\breve\dF(s)$ is analytic for $\Rel\,s> -\beta$. We will exploit the right-hand side of~\eqref{eq:funct-eq} to analytically continue $\breve\dF(s)$ into the left half $s$-plane ($\Rel\,s<0$), and describe the singularities of $\breve\dF(s)$.  For algebraic convenience, we primarily consider $0< q^2<\omega< q\beta$; and comment on the extension of our results to $\omega\in\mathbb{C}$ with $0<q^2<|\omega|< q\beta$.

By inspecting the right-hand side of~\eqref{eq:funct-eq}, in view of the $\breve\dF(s+2\beta)$ term we can claim that $\breve \dF(s)$ is analytic for $\Rel\,s>-3\beta$ with the possible exception of points at which the coefficients of this functional equation are singular. The analyticity of $\breve\dF(s)$ in the left half $s$-plane is discussed in Appendices~\ref{app:subsec:remov-sing}--\ref{app:subsec:poles-res}. Note that if $0<q^2<\omega<q\beta$, we have 
\begin{align}\label{app:eq:inequal}
	q > \alpha_- -\beta > \beta-\alpha_+>0\qquad (\alpha_->\alpha_+ >0)~.
\end{align}

\subsection{Removable singularities of $\breve\dF(s)$ at $s=-\beta+\alpha_\pm$ and $s=-2\beta+q$}
\label{app:subsec:remov-sing}
We will show that $s=-\beta+\alpha_\sigma$ ($\sigma=\pm$) 
and $s=-2\beta+q$ are regular points of $\breve\dF(s)$. This task is an interesting exercise because, by a mere inspection, the coefficients of terms linear with $\breve\dF$ in functional equation~\eqref{eq:funct-eq} have simple poles at the above points. Let us now rewrite~\eqref{eq:funct-eq} in a way that reveals the true character of these points in regard to the function $\breve\dF(s)$. After some algebra, we have 
\begin{align}\label{app:eq:funct-eq-alt}
	\breve\dF(s)&=\frac{\beta}{2q}\frac{e^2 \eta_0}{\varepsilon_0}\sum_{\sigma=\pm}\left\{\frac{C_{1\sigma}\breve\dF(q)+C_{2\sigma}\breve\dF(\beta+\alpha_\sigma)+C_{3\sigma}^-\breve\dF(s+2\beta)}{s+\beta+\alpha_\sigma}-C_{4\sigma}^+\frac{\breve\dF(s+2\beta)+\breve\dF(q)}{s+2\beta+q} \right. \notag\\
	&\left. \hphantom{\frac{\beta}{2q}\frac{e^2 \eta_0}{\varepsilon_0}} +C_{4\sigma}^- \frac{\breve\dF(s+2\beta)-\breve\dF(q)}{s+2\beta-q}-C_{3\sigma}^+\frac{\breve\dF(s+2\beta)-\breve\dF(\beta+\alpha_\sigma)}{s+2\beta-(\beta+\alpha_\sigma)} \right\}~,
\end{align}
where
\begin{align*}
	C_{1\sigma}&=\frac{4q^2\beta}{(\alpha_\sigma-\beta)[(\alpha_\sigma+\beta)^2-q^2][(\alpha_\sigma-\beta)^2-q^2]}~,\quad C_{2\sigma}=\frac{q}{\alpha_\sigma}\,\frac{1}{(\alpha_\sigma+\beta)^2-q^2}\frac{\alpha_\sigma+\beta}{\alpha_\sigma-\beta}~,\notag\\
	C_{3\sigma}^{\varsigma}&=\frac{q}{\alpha_\sigma} \frac{1}{(\beta+\varsigma\alpha_\sigma)^2-q^2}~,\quad C_{4\sigma}^\varsigma= -\frac{1}{(\beta+\varsigma q)^2-\alpha_\sigma^2}\qquad (\varsigma,\sigma=\pm)~.
\end{align*}	
Notice the terms in the second line of~\eqref{app:eq:funct-eq-alt}.
The analyticity of $\breve\dF(s+2\beta)$ for $\Rel\,s> -3\beta$ entails the removal of the singular terms $(s+2\beta-q)^{-1}$ and $(s+\beta-\alpha_\pm)^{-1}$.

\subsection{Analyticity of $\breve\dF(s)$ for $\Rel\,s>-\beta-\min\{\Rel\,\alpha_+, \Rel\,\alpha_-\}$}
\label{app:subsec:reg-analt}
Next, we argue that $\breve\dF(s)$ is holomorphic for every $s\in\mathbb{C}$ with $\Rel\,s>-\beta-\min\{\Rel\,\alpha_+, \Rel\,\alpha_-\}$. For this purpose, we invoke functional equation~\eqref{eq:funct-eq}.

Let us first discuss the behavior of $\breve \dF(s)$ in the vicinity of the points $s=-q$ and $s=\beta-\alpha_\pm$. These points may be considered as special because at least two of them lie in the left half $s$-plane for real $\omega$, and the values $\breve \dF(\pm q)$ and $\breve \dF(\beta\pm \alpha_\sigma)$ (with $\sigma=\pm$) are employed as auxiliary variables in our derivation of the SP-type dispersion relation (Sec.~\ref{subsec:disp-soln}). Note that, by virtue of the assumed analyticity of $\breve\dF(s)$ for $\Rel\,s>-\beta$, the function $\breve \dF(s)$ is analytic in the vicinity of $s=q$ and $s=\beta+\alpha_\pm$ ($\Rel\,\alpha_\pm >0$). Evidently, the right-hand side of~\eqref{eq:funct-eq} is analytic in the vicinity of $s=-q$, since $\breve\dF(s+2\beta)$ is analytic for $\Rel\,s> -3\beta$. Thus, $\breve \dF(s)$ itself is analytic, and therefore bounded, in a neighborhood of $s=-q$. Ditto for $\breve\dF(s)$ at each of the points $s=\beta-\alpha_\pm$.

If $0<q^2<\omega<q\beta$, by~\eqref{eq:funct-eq} these conclusions hold for any point $s=s_{\#} \in \mathbb{C}$ with $\Rel\,s_{\#}>-\alpha_+-\beta$. A key observation is that $\breve\dF(s+2\beta)$ is analytic at every such point $s_{\#}$. Furthermore, there is no point in the region $\{\Rel\,s>-\alpha_+-\beta\}$ that happens to be a singularity of any coefficient in~\eqref{eq:funct-eq} with the exception of removable singularities (Appendix~\ref{app:subsec:remov-sing}). If $\omega\in\mathbb{C}$ with $0<q^2<|\omega|< q\beta$, we can repeat these considerations for $\Rel\,s>-\beta-\min\{\Rel\,\alpha_+, \Rel\,\alpha_-\}$.

\subsection{Poles and residues of $\breve\dF(s)$}
\label{app:subsec:poles-res}
Next, we locate and characterize as simple poles all singularities of $\breve\dF(s)$ in the left half $s$-plane (for $\Rel\,s<0$); and compute the associated residues in closed forms. Suppose that $\omega$ is real and $0<q^2<\omega< q\beta$. Note that $\breve\dF(s)$ is holomorphic if 
$\Rel\,s> -q-\epsilon $ for sufficiently small $\epsilon>0$; in fact, $0<\epsilon\le \alpha_++\beta-q$ (Appendix~\ref{app:subsec:reg-analt}). 
For definiteness, we first consider $\breve\dF(s)$ in the strip $S^\epsilon_{2\beta}:=\{s\in\mathbb{C}\,|\,-2\beta-q-\epsilon \le  \Rel\,s \le  -q-\epsilon\}$ where $0<\epsilon\ll\beta$, which has width $2\beta$ and is shifted by $-\epsilon$ with respect to the strip $\{-2\beta-q\le \Rel\,s\le -q\}$; and subsequently study $\breve\dF(s)$ for $s$ in $S^{\epsilon+2n\beta}_{2\beta}$ ($n=1\,,2\,,\ldots$) by induction. In our procedure, we will make use of~\eqref{eq:funct-eq} or~\eqref{app:eq:funct-eq-alt}. Our conclusions can be extended to complex $\omega$ with $0< q^2 <|\omega| < q\beta$.

\subsubsection{Poles}
\label{app:sssec:poles}

First, let us examine every point $s_{\#}\in S^\epsilon_{2\beta}$ with $s_{\#}\neq -2\beta-q,\,-\beta-\alpha_\pm$. Set $s=s_{\#}+\varpi_1$ on the right-hand side of~\eqref{app:eq:funct-eq-alt}. Notice that the resulting expression as a function of the complex variable $\varpi_1$ is analytic at $\varpi_1=0$,  admitting a Maclaurin series for  $|\varpi_1|<\epsilon'$ and small enough $\epsilon'$ ($\epsilon'>0$), under restrictions such as $\epsilon'\le \min\{|s_{\#}-(-2\beta-q)|, |s_{\#}-(-\beta-\alpha_+)|, |s_{\#}-(-\beta-\alpha_-)|\}$ and $\epsilon'\ll\beta$. Hence, $\breve\dF(s)$ is analytic at any point $s_{\#}\in S^\epsilon_{2\beta}\setminus \{-2\beta-q,\,-\beta-\alpha_-, -\beta-\alpha_+\}$.

Now consider any point $s_{\#}\in \{-2\beta-q, -\beta-\alpha_-, -\beta-\alpha_+\}$ in $S^{\epsilon}_{2\beta}$, and set $s=s_{\#}+\varpi_1$ on the right-hand side of~\eqref{app:eq:funct-eq-alt}. Because $\varpi_1=0$ is a simple pole of a coefficient but a regular point of $\breve\dF(s+2\beta)$ in this expression, we infer that $\breve\dF(s)$ admits a Laurent series of the form
\begin{align}\label{app:eq:Laurent}
	\breve\dF(s)=\frac{R_{\#}}{s-s_{\#}}+\sum_{m=0}^\infty c_m \varpi_1^m~,\quad 0<|\varpi_1|< \epsilon'~,
\end{align}
for sufficiently small $\epsilon'>0$. The residue, $R_{\#}$, at $s_\#$ is in principle nonzero and depends on $s_{\#}$. Thus, every point $s_{\#}\in \{-2\beta-q, -\beta-\alpha_-, -\beta-\alpha_+\}$ in $S^{\epsilon}_{2\beta}$ is a simple pole of $\breve\dF(s)$.

Let us now extend these considerations to the shifted strip $S^{\epsilon+2\beta}_{2\beta}$. Take any point $s_{\#}\in S^{\epsilon+2\beta}_{2\beta}\setminus\{-4\beta-q,\,-3\beta-\alpha_+, -3\beta-\alpha_-\}$. As above, we set $s=s_{\#}+\varpi_1$ on the right-hand side of~\eqref{app:eq:funct-eq-alt}. The ensuing function of $\varpi_1$ is analytic at $\varpi_1=0$,  since the term $\breve\dF(s+2\beta)$ and all coefficients are so by virtue of our preceding conclusions for $\breve\dF(s)$ in the strip $S^\epsilon_{2\beta}$. Hence, $\breve\dF(s)$ is analytic at any point $s_{\#}\in S^{\epsilon+2\beta}_{2\beta}\setminus \{-4\beta-q,\,-3\beta-\alpha_-, -3\beta-\alpha_+\}$.

Now take any point $s_{\#}\in \{-4\beta-q, -3\beta-\alpha_-, -3\beta-\alpha_+\}$ in $S^{\epsilon+2\beta}_{2\beta}$, and set $s=s_{\#}+\varpi_1$ on the right-hand side of~\eqref{app:eq:funct-eq-alt}. Notice that $\varpi_1=0$ is a regular point of the coefficients but a simple pole of $\breve\dF(s+2\beta)$ in this expression, in view of our results for strip $S^\epsilon_{2\beta}$.
Hence, $\breve\dF(s)$ admits a Laurent series of form~\eqref{app:eq:Laurent} at $s=s_\#$. We conclude that every point $s_{\#}\in \{-4\beta-q, -3\beta-\alpha_-, -3\beta-\alpha_+\}$ in $S^{\epsilon+2\beta}_{2\beta}$ is a simple pole of $\breve\dF(s)$.

By induction, we extend these results to $S^{\epsilon+2n\beta}_{2\beta}$ for every $n\ge 2$. In particular, we initially suppose that $\breve\dF(s)$ is analytic at any point $s_{\#}\in S^{\epsilon+2(n-1)\beta}_{2\beta}\setminus \{-2n\beta-q,\,-(2n-1)\beta-\alpha_+, -(2n-1)\beta-\alpha_-\}$ and has the simple poles $\slashed{s}_{n-1}=-2n\beta-q$ and $s_{n-1}^\sigma=-(2n-1)\beta-\alpha_\sigma$ in  $S^{\epsilon+2(n-1)\beta}_{2\beta}$, for some $n\ge 2$ ($\sigma=\pm$). By functional equation~\eqref{eq:funct-eq} or~\eqref{app:eq:funct-eq-alt}, we can show that $\breve\dF(s)$ must have the respective properties within the strip $S^{\epsilon+2n\beta}_{2\beta}$. 

The above results on the analyticity and singularities of $\breve\dF(s)$ hold if $\omega$ is complex with $0<q^2<|\omega|< q\beta$. In this case, we have $0<\epsilon\le \min\{\Rel\,\alpha_+, \Rel\,\alpha_-\}+\beta-q$ regarding $S^\epsilon_{2\beta}$.

\subsubsection{Computation of residues}
\label{app:sssec:residues}
Next, we provide explicit formulas for the residues, $R_n^\sigma$ and $\slashed{R}_n$, of $\breve\dF(s)$ at the poles $s=s_n^\sigma$ and $\slashed{s}_n$ ($n\in\mathbb{N}$, $\sigma=\pm$). The main idea is to obtain recursion relations for $R_n^\sigma$ and $\slashed{R}_n$ by inspection of functional equation~\eqref{eq:funct-eq} or~\eqref{app:eq:funct-eq-alt}. These relations will directly furnish the desired residues for any $n$. We routinely define
\begin{align*}
	R_n^\sigma:= \lim_{s\to s_n^\sigma}\{(s-s_n^\sigma)\breve\dF(s)\}~,\quad \slashed{R}_n:=\lim_{s\to \slashed{s}_n}\{(s-\slashed{s}_n)\breve\dF(s)\}\qquad (n\in\mathbb{N},\,\sigma=\pm)~.
\end{align*}

\medskip

(i) {\em Residues $R_n^\sigma$.} For $n=0$, by~\eqref{eq:funct-eq} or~\eqref{app:eq:funct-eq-alt} we collect all singular terms of $\breve\dF(s)$ that pertain to its poles at $s=s_0^{\pm}\in S^\epsilon_{2\beta}$. By inspecting the right-hand side of~\eqref{eq:funct-eq}, we see that such terms come solely from the displayed partial fractions. The result is
\begin{align}\label{app:eq:resR0}
	R_0^\sigma&= \frac{\beta e^2 \eta_0}{2\varepsilon_0}\left\{\frac{4q\beta \breve\dF(q)}{(\alpha_\sigma-\beta)[(\alpha_\sigma+\beta)^2-q^2][(\alpha_\sigma-\beta)^2-q^2]}+\frac{1}{\alpha_\sigma}\left[\frac{\alpha_\sigma+\beta}{\alpha_\sigma-\beta} \frac{\breve\dF(\beta+\alpha_\sigma)}{(\alpha_\sigma+\beta)^2-q^2} \right.\right.\notag\\
	& \hphantom{\frac{\beta e^2 \eta_0}{2\varepsilon_0}} \left.\left. \qquad 
	+ \frac{\breve\dF(\beta-\alpha_\sigma)}{(\alpha_\sigma-\beta)^2-q^2} \right]\right\}~,
\end{align}
which is~\eqref{eq:resR0}.

Now consider $R_n^\sigma$ if $n\ge 1$. All singular terms of $\breve\dF(s)$ that amount to its poles at $s=s_n^{\pm}\in S^{\epsilon+2n\beta}_{2\beta}$ come from the factor $\breve\dF(s+2\beta)$ on the right-hand side of~\eqref{eq:funct-eq}. By the Laurent expansions of $\breve\dF(\zeta)$ near $\zeta=s_{n-1}^{\pm}$, we derive the recursion relation
\begin{align*}
	R_n^\sigma&= \frac{\beta}{2q} \frac{e^2 \eta_0}{\varepsilon_0}\sum_{\sigma'=\pm}\sum_{\varsigma=\pm 1}\varsigma
	\left\{\frac{1}{(\beta+\varsigma q)^2-\alpha_{\sigma'}^2}\frac{1}{s_n^\sigma+2\beta+\varsigma q} \right. \\
	&\left. \hphantom{\frac{\beta}{2q}\frac{e^2 \eta_0}{\varepsilon_0}}+
	\frac{q}{\alpha_{\sigma'}}
	\frac{1}{(\beta-\varsigma \alpha_{\sigma'})^2-q^2} \frac{1}{s_n^\sigma+\beta+\varsigma\alpha_{\sigma'}}\right\} R_{n-1}^\sigma\qquad (n\ge 1~,\  \sigma=\pm)~.
\end{align*}
Carrying out the summation over $\varsigma$, we write the preceding relation as
\begin{align*}
	R_n^\sigma&=-\frac{\beta e^2 \eta_0}{\varepsilon_0}\frac{1}{[\alpha_\sigma+(2n-1)\beta]^2-q^2}\left\{\sum_{\sigma'=\pm}\frac{1}{(\alpha_\sigma+2n\beta)^2-\alpha_{\sigma'}^2}\right\}R_{n-1}^\sigma\qquad (n\ge 1~,\  \sigma=\pm)~.
\end{align*}
Hence, we assert that 
$R_n^\sigma=\Lambda_n^\sigma R_0^\sigma$ (all $n\in\mathbb{N}$). If $n=0$
we have $\Lambda_0^\sigma=1$; and if $n\ge 1$ then $\Lambda_n^\sigma$ is given 
by~\eqref{eq:resRn-Lambda-n}.
According to~\eqref{app:eq:resR0}, the residue $R_0^\sigma$ is a linear combination of three values of $\breve{\dF}(s)$, namely, $\breve\dF(q)$ and $\breve\dF(\beta\pm \alpha^\sigma)$, for $\sigma=\pm$. These values should be determined self consistently as parts of the solution, $\breve\dF(s)$, of functional equation~\eqref{eq:funct-eq}; see Sec.~\ref{subsec:disp-soln}.

\medskip

(ii) {\em Residues $\slashed{R}_n$.} For $n=0$, we focus on the singular terms of $\breve\dF(s)$ due to its pole at $s=\slashed{s}_0\in S^\epsilon_{2\beta}$. These terms solely come from partial fractions that are displayed on the right-hand side of~\eqref{eq:funct-eq}. We thus compute the residue
\begin{align}\label{app:eq:slashed-resR0}
\slashed{R}_0&=-\frac{\beta}{2q} \frac{e^2 \eta_0}{\varepsilon_0}\sum_{\sigma=\pm}\frac{\breve\dF(q)+\breve\dF(-q)}{\alpha_\sigma^2-(\beta+q)^2}	\notag\\
&= \frac{2e^2\eta_0}{\varepsilon_0}\frac{\beta^2}{4q^2\beta^2-\omega^2}\{\breve\dF(q)+\breve\dF(-q) \}~,
\end{align}
which is~\eqref{eq:slashed-resR0}.

Consider $\slashed{R}_n$ for $n\ge 1$. All singular terms of $\breve\dF(s)$ due to its pole at $s=\slashed{s}_n\in S^{\epsilon+2n\beta}_{2\beta}$ come from the factor $\breve\dF(s+2\beta)$, on the right-hand side of~\eqref{eq:funct-eq}. By using the Laurent expansion of $\breve\dF(\zeta)$ at $\zeta=\slashed{s}_{n-1}$, we obtain the recursion relation
\begin{align*}
	\slashed{R}_n&=\frac{\beta}{2q}\frac{e^2\eta_0}{\varepsilon_0}\sum_{\sigma=\pm}\sum_{\varsigma=\pm 1}\varsigma \left\{\frac{1}{(\beta+\varsigma q)^2-\alpha_\sigma^2}\frac{1}{\slashed{s}_n+2\beta+\varsigma q}\right.\\
	&\left. \hphantom{\frac{\beta}{2q}\frac{e^2\eta_0}{\varepsilon_0}}+
	\frac{q}{\alpha_\sigma}\frac{1}{(\beta-\varsigma\alpha_\sigma)^2-q^2}\frac{1}{\slashed{s}_n+\beta+\varsigma \alpha_\sigma }\right\}\slashed{R}_{n-1}~.
\end{align*}
This relation is simplified to
\begin{align*}
	\slashed{R}_n=-\frac{e^2 \eta_0}{4\varepsilon_0}\frac{1}{n(q+n\beta)}\sum_{\sigma=\pm}
	\frac{1}{[(2n+1)\beta+q]^2-\alpha_\sigma^2}\slashed{R}_{n-1}~,\quad n\ge 1~.
\end{align*}
Thus, we obtain the explicit formula $\slashed{R}_n=\slashed{\Lambda}_n \slashed{R}_0$ ($n\in\mathbb{N}$) where $\slashed{\Lambda}_0=1$ and $\slashed{\Lambda}_n$ is furnished by~\eqref{eq:slashed-resRn-Lambda-n} for $n\ge 1$.
By~\eqref{app:eq:slashed-resR0}, the residue $\slashed{R}_0$ is proportional to $\breve\dF(q)+\breve\dF(-q)$, which should be determined self consistently as a part of the solution $\breve{\dF}(s)$ (Sec.~\ref{subsec:disp-soln}).

\subsection{On the limit of $\breve\dF(s)$ as $s\to \infty$ in complex plane}
\label{app:subsec:lim-F}
Consider the region of the $s$-plane with $\Rel\,s \ge  -\beta$. By the integral of~\eqref{eq:LT-dir}, we infer that $\breve \dF(s)$ uniformly tends to $0$ as $|s+\beta|\to +\infty$ 
with $|\Arg(s+\beta)|< \tfrac{\pi}{2}-\delta_1$ for arbitrarily small but fixed $\delta_1$ ($\delta_1>0$), since $e^{\beta z}\dF(z)$ is integrable on $\mathbb{R}_+$.\cite{Widder-book}   
Similarly, by~\eqref{eq:LT-dir} we can conclude that $\breve\dF(s)\to 0$ uniformly as $|\Img\,s|\to +\infty$, for fixed $\Rel\,s$ ($\Rel\,s\ge -\beta$).\cite{Widder-book}

On the other hand, if $\Rel\,s< -\beta$ we use functional equation~\eqref{eq:funct-eq}. Consider real $\omega$ with $0<q^2<\omega< q\beta$. Equation~\eqref{eq:funct-eq-g} reads
\begin{align}\label{app:eq:funct-eq-PQ}
	\breve\dF(s)= P(s)+Q(s)\, \breve \dF(s+2\beta)~,
\end{align}
for $s\neq s_n^\pm~,\, \slashed{s}_n$ (all $n\in\mathbb{N}$).
In the above, $P(s)$ and $Q(s)$ are meromorphic functions expressed as finite sums of  partial fractions, each of which is of the form $C_{\#}(s-s_\#)^{-1}$ where the pole $s_{\#}$ is in the negative real axis; $P(s)\to 0$ and $Q(s)\to 0$ as $s\to \infty$. Note that $P(s)$ is linear with the values $\breve{\dF}(q)$ and $\breve{\dF}(\beta+\alpha_\pm)$. Hence, for every (preferably small) number $\epsilon>0$, there exists a (sufficiently large) number $M>0$ so that 
\begin{align*}
	|P(s)|< \epsilon\, f_P~\quad \mbox{and}\quad |Q(s)|<\epsilon\quad \mbox{if}\quad  |\Img\,s|>M~,
\end{align*}
with fixed $\kappa=\Rel\,s$ (thus, $\kappa< -\beta$). Here, we define $f_P:=|\breve\dF(q)|+|\breve\dF(\beta+\alpha_+)|+|\breve\dF(\beta+\alpha_-)|$. Thus, by~\eqref{app:eq:funct-eq-PQ} we obtain
\begin{align*}
	|\breve\dF(s)|< \epsilon f_P + \epsilon |\breve\dF(s+2\beta)|~, \quad \mbox{if}\quad  |\Img\,s|>M~.
\end{align*}
By successively applying this inequality $N_0$ times ($N_0\ge 1$), we have 
\begin{align*}
	|\breve\dF(s)|< \epsilon (1+\epsilon+\ldots +\epsilon^{N_0-1})f_P+\epsilon^{N_0}|\breve\dF(s+2N_0\beta)|=\epsilon\frac{1-\epsilon^{N_0}}{1-\epsilon}f_P+\epsilon^{N_0}|\breve\dF(s+2N_0\beta)|~.
\end{align*}
By choosing an $N_0$ such that $2N_0\beta > -\beta+|\kappa|>0$, we have $\Rel(s+2N_0\beta)>-\beta$; thus, the term $|\breve\dF(s+2N_0\beta)|$ tends to $0$ uniformly as $|\Img\,s|\to +\infty$. It follows that $\breve\dF(s)$ approaches $0$ uniformly in the $s$-plane as $|\Img\,s|\to +\infty$. Similarly, one can show that $\breve\dF(s)\to 0$ uniformly as $|s+\beta|\to +\infty$ with $\tfrac{\pi}{2}\le  |\Arg(s+\beta)|<\pi-\delta_1$ for arbitrarily small but fixed $\delta_1$ ($\delta_1>0$). In this case, $N_0$  depends on $|s+\beta|$, since we impose $2N_0\beta>-\Rel\,(s+\beta)=|s+\beta| |\cos\theta|$ with $\theta=\Arg(s+\beta)$. The above study can directly be extended to complex $\omega$ (for $0<q^2<|\omega|< q\beta$), when the poles $s_n^\sigma$. In this case, the poles  $s_n^\pm$ and $\slashed{s}_n$ that have nonzero imaginary parts  approach the negative real axis asymptotically in the limit $n\to +\infty$, for fixed parameters $\omega$, $q$, and $\beta$.

\section{On proving Lemma~1: Limit of $g(s)$ as $s\to -\infty$}
\label{app:sec:asympt-g}
In this appendix, we show that~\eqref{eq:funct-eq-g} implies $|g(s)|=\mathcal O(1/|s|)$ as $s\to -\infty$ (in the negative real axis) for the entire function $g(s)$. Equation~\eqref{eq:funct-eq-g} is written in the canonical form
\begin{align}\label{app:funct-eq-g}
	g(s)= \sum_{l=0}^{\lo} \frac{c_l}{s+\xi_l}\{g(s+2\beta)-g(-\xi_l+2\beta)\}
\end{align}
for fixed parameters $c_l$, $\xi_l$, $\beta$ and $\lo$, where $\lo$ is a positive integer ($\lo\ge 1$) and $\beta>0$. For definiteness, and without loss of generality, we assume that $\xi_l>0$ for all $l=0,\,1,\,\ldots,\, \lo$ ($\xi_i\neq \xi_j$ if $i\neq j$).

Let us first consider the sequence $\{-2\beta n\}_{n\in \mathbb{N}}$ and define $g_n:=g(-2\beta n)$. We will show that $g_n=\mathcal O(1/n)$ as $n\to +\infty$.

To prove this property, by~\eqref{app:funct-eq-g} we study the difference scheme
\begin{align}\label{app:eq:diff-sch}
g_n=\sum_{l=0}^{\lo}\frac{\bar c_l}{n-\bar\xi_l}\{g_{n-1}-g(-\xi_l+2\beta)\}~,\quad n\ge 1~;\quad \bar c_l:=-\frac{c_l}{2\beta}~,\ \bar\xi_l:=\frac{\xi_l}{2\beta}>0~.	
\end{align}
Note that
\begin{align*}
	0<\frac{1}{n-\bar \xi_l}\le \frac{2}{n}\ \ \forall\, l\in\{0,\ldots,\lo\}\quad \mbox{if}\ \ n\ge \no=[2\max_l(\{\bar\xi_l\}_{l=0}^{\lo})]+1~,
\end{align*}
where $[x]$ is defined as the largest possible integer that is less than or equal to the real number $x$. Hence, scheme~\eqref{app:eq:diff-sch} entails 
\begin{align*}
	|g_n|\le \frac{C}{n} |g_{n-1}|+\frac{L}{n}~,\quad n\ge \no~;\quad C:=2\sum_{l=0}^{\lo}|\bar c_l|~,\quad L:=2\sum_{l=0}^{\lo}|\bar c_l|\,|g(-\xi_l+2\beta)|~.
\end{align*}
By applying the above estimate for $|g_n|$ successively, we obtain the inequality
\begin{align}\label{app:eq:sch-ineq}
|g_n|\le \frac{C^{n-\no}\no!}{n!}|g_{\no}| +\frac{L}{n}\left\{1+\sum_{j=1}^{n-\no-1}\frac{C^j}{\prod_{\varkappa=1}^j(n-j-1+\varkappa)} \right\}~,\quad n\ge \no+1~.
\end{align}
In the special case with $n=n_0+1$, the partial sum $\sum_{j=1}^{n-\no-1}(\cdot)$ is defined to be equal to $0$. 

Let us consider $n\ge \no+2$. Because $1\le j\le n-n_0-1< n-1$, we assert that
\begin{align*}
	1+\sum_{j=1}^{n-\no-1}\frac{C^j}{\prod_{\varkappa=1}^j(n-j-1+\varkappa)}< 1+\sum_{j=1}^{n-\no-1}\frac{C^j}{j!}< e^C\qquad \forall\,n\ge n_0+2~. 
\end{align*}
Thus, the increasing-with-$n$ sequence of partial sums in~\eqref{app:eq:sch-ineq} has a fixed upper bound. We infer that as $n\to +\infty$ the ensuing infinite series in~\eqref{app:eq:sch-ineq} is convergent. In this limit, the first term on the right-hand side of~\eqref{app:eq:sch-ineq} is subdominant and can be neglected. More precisely, for any $\epsilon>0$ there exists an integer $N_0\ge 1$ such that 
\begin{align*}
	|g_n|\le (\epsilon+1)\frac{Le^C}{n}~,\quad \mbox{all}\ n\ge N_0\quad  (n\in\mathbb{N})~.
\end{align*}
This statement proves the desired result for the sequence $\{g(-2\beta n)\}_{n\in\mathbb{N}}$.

Similarly, we can use~\eqref{eq:funct-eq-g} to study the asymptotic behavior of $g_n=g(c-2\beta n)$ as $n\to +\infty$ for fixed $c$ with $0<c< 2\beta$. By repeating the above steps with appropriately shifted parameters $\bar\xi_l$ ($\bar\xi_l>0$), we can show that $|g_n|=\mathcal O(1/n)$ as $n\to +\infty$. 

The above procedure and results can appropriately be extended, via minor modifications, to complex parameters $\xi_\ell$ with $\Rel\,\xi_\ell>0$. For  this case, we note that
\begin{align*}
	0<\frac{1}{|n-\bar \xi_l|}\le \frac{1}{n-\Rel\,\bar \xi_l}\le \frac{2}{n}\ \ \forall\, l\in\{0,\ldots,\lo\}\quad \mbox{if}\ \ n\ge \no=[2\max_l(\{\Rel\,\bar\xi_l\}_{l=0}^{\lo})]+1~.
\end{align*}

\section{On the behavior of $\Lambda(\omega, q)$ as $\omega\to 2\beta q$ ($0<q<\beta$)}
\label{app:sec:Lambda-lim-q}
In this appendix, we verify that $\Lambda(\omega, q)=\mathcal O(1)$ as $\omega\to 2\beta q$. To this end, we invoke the asymptotic relations for $\mA^\mu_\nu$ listed in Sec.~\ref{subsec:soln-rmks}.

Define $\epsilon:=\omega-2\beta q$ with $|\epsilon|\ll 1$. By using the asymptotic relations listed in Sec.~\ref{subsec:soln-rmks}, we start with the determinant
\begin{align*}
	&\Lambda(\omega, q)\sim 
	\begin{vmatrix}
	\displaystyle{a_1\epsilon^{-1}+b_1-1}\, & \displaystyle{a_2\epsilon^{-1}+b_2}\, & b_3\, & \displaystyle{-a_1\epsilon^{-1}+b_4}\, & b_5\, & \displaystyle{-a_2\epsilon^{-1}+b_6} \\
	\displaystyle{c_1\epsilon^{-1}+d_1}\, & \displaystyle{c_2\epsilon^{-1}+d_2}-1\, & d_3\, & \displaystyle{-c_1\epsilon^{-1}+d_4}\, & d_5\, & \displaystyle{-c_2\epsilon^{-1}+d_6} \\
	\displaystyle{e_1\epsilon^{-1}+f_1}\, & \displaystyle{e_2\epsilon^{-1}+f_2}\, & f_3-1\, & \displaystyle{-e_1\epsilon^{-1}+f_4}\, & f_5\, & \displaystyle{-e_2\epsilon^{-1}+f_6} \\
	\displaystyle{a_1\epsilon^{-1}+h_1}\, & \displaystyle{a_2\epsilon^{-1}+h_2}\, & b_3\, & \displaystyle{-a_1\epsilon^{-1}+h_4}-1\, & b_5\, & \displaystyle{-a_2\epsilon^{-1}+h_6} \\
	\displaystyle{k_1\epsilon^{-1}+l_1}\, & \displaystyle{k_2\epsilon^{-1}+l_2}\, & l_3\, & \displaystyle{-k_1\epsilon^{-1}+l_4}\, & l_5-1\, & \displaystyle{-k_2\epsilon^{-1}+l_6} \\
	\displaystyle{c_1\epsilon^{-1}+\eta_1}\, & \displaystyle{c_2\epsilon^{-1}+\eta_2}\, & d_3\, & \displaystyle{-c_1\epsilon^{-1}+\eta_4}\, & d_5\, & \displaystyle{-c_2\epsilon^{-1}+\eta_6}-1
	\end{vmatrix}~. 
\end{align*}
Here, $a_i$, $b_i$, $c_i$, $d_i$, $e_i$, $f_i$, $h_i$, $k_i$, $l_i$ and $\eta_i$ ($i\in\{1, 2, 3, 4, 5, 6\}$) are coefficients of Laurent-type expansions for the matrix elements $\mA^\mu_\nu$ at $\epsilon=0$ and $O(1)$ functions of $(\beta, q)$ if $0<q<\beta$, as discussed in Sec.~\ref{subsec:soln-rmks}.  The following relations also hold:
\begin{align*}
	b_1+b_4=h_1+h_4~,\quad  b_2+b_6=h_2+h_6~,\quad d_1+d_4=\eta_1+\eta_4~,\quad d_2+d_6=\eta_2+\eta_6~.
\end{align*}   
By basic invariance properties of determinants, we compute
\begin{align*}
	&\Lambda(\omega, q)\sim 
	\begin{vmatrix}
	\displaystyle{b_1-h_1-1}\, & \displaystyle{b_2-h_2}\, & 0\, & \displaystyle{b_1-h_1+b_4-h_4}\, & 0\, & \displaystyle{b_6-h_6+b_2-h_2} \\
	\displaystyle{d_1-\eta_1}\, & \displaystyle{d_2-\eta_2}-1\, & 0\, & \displaystyle{d_4-\eta_4+d_1-\eta_1}\, & 0\, & \displaystyle{d_6-\eta_6+d_2-\eta_2} \\
	\displaystyle{e_1\epsilon^{-1}+f_1}\, & \displaystyle{e_2\epsilon^{-1}+f_2}\, & f_3-1\, & \displaystyle{f_1+f_4}\, & f_5\, & \displaystyle{f_2+f_6} \\
	\displaystyle{a_1\epsilon^{-1}+h_1}\, & \displaystyle{a_2\epsilon^{-1}+h_2}\, & b_3\, & \displaystyle{h_1+h_4}-1\, & b_5\, & \displaystyle{h_2+h_6} \\
	\displaystyle{k_1\epsilon^{-1}+l_1}\, & \displaystyle{k_2\epsilon^{-1}+l_2}\, & l_3\, & \displaystyle{l_1+l_4}\, & l_5-1\, & \displaystyle{l_2+l_6} \\
	\displaystyle{c_1\epsilon^{-1}+\eta_1}\, & \displaystyle{c_2\epsilon^{-1}+\eta_2}\, & d_3\, & \displaystyle{\eta_1+\eta_4}\, & d_5\, & \displaystyle{\eta_2+\eta_6}-1
	\end{vmatrix}\\
&=[(b_1-h_1-1)(d_2-\eta_2-1)-(b_2-h_2)(d_1-\eta_1)]
	\begin{vmatrix}
		f_3-1 & f_1+f_4 & f_5 & f_2+f_6 \\
		b_3 & h_1+h_4-1 & b_5 & h_2+h_6 \\
		l_3 & l_1+l_4 & l_5-1 & l_2+l_6 \\
		d_3 & \eta_1+\eta_4 & d_5 & \eta_2+\eta_6-1
	\end{vmatrix}~,
\end{align*}
which is $\mathcal O(1)$.

\bibliography{Biblio}

\end{document}